\newcommand{\fig}[1]{Fig.~\ref{#1}}
\newcommand{\tab}[1]{Table~\ref{#1}}
\newcommand{\sect}[1]{Section~\ref{#1}}
\newcommand{\tx}{\textnormal}
\DeclareMathOperator{\sat}{sat}
\theoremstyle{plain}
\newtheorem{theorem}{Theorem}
\newtheorem{lemma}{Lemma}
\newtheorem{proposition}{Proposition}
\newtheorem{remark}{Remark}
\newlength \figwidth
\begin{document}

%\begin{frontmatter}

\title{Design and comparative analyses of optimal feedback controllers for hybrid electric vehicles}

%\author{Maryam Razi, Nikolce Murgovski,}
%\address{Department of Electrical Engineering, Chalmers University of Technology, 41296 Gothenburg, Sweden \\(e-mail: razim@chalmers.se).}

\author{Maryam~Razi,~Nikolce~Murgovski,~Tomas~McKelvey~and Torsten~Wik% <-this % stops a space
\IEEEcompsocitemizethanks{\IEEEcompsocthanksitem The authors are with the Department of Electrical Engineering, Chalmers University of Technology, Gothenburg, Sweden (E-mail: razim@chalmers.se).%
\IEEEcompsocthanksitem This work has been funded by the Area of Advance Energy, Chalmers University.}% <-this % stops a space
}%

\maketitle

\begin{abstract}                         
    This paper presents an adaptive equivalent consumption minimization strategy (ECMS) and a linear quadratic tracking (LQT) method for optimal power-split control of combustion engine and electric machine in a hybrid electric vehicle (HEV). The objective is to deliver demanded torque and minimize fuel consumption and usage of service brakes, subject to constraints on actuator limits and battery state of charge (SOC). We derive a function for calculating maximum deliverable torque that is as close as possible to demanded torque and propose modeling SOC constraints by tangent or logarithm functions that provide an interior point to both ECMS and LQT. We show that the resulting objective functions are convex and we provide analytic solutions for their second order approximation about a given reference. We also consider robustness of the controllers to measurement noise using a simple model of noise. Simulation results of the two controllers are compared and their effectiveness is discussed.
\end{abstract}

\begin{IEEEkeywords}
    Linear quadratic tracking (LQT), hybrid electric vehicle (HEV), adaptive power-split control, equivalent consumption minimization strategy (ECMS).
\end{IEEEkeywords}
%\end{frontmatter}
%%%%%%%%%%%%%%%%%%%%%%%%%%%%%%%%%%%%%%%%%%%%%%%%%%%%%%%%%%%%%%%%%%%%%%%%%%%%%%%%
\section{INTRODUCTION} \label{sec:intro}
    Hybrid electric vehicles (HEVs) can improve fuel economy and reduce pollutant emissions by propelling the vehicle with multiple actuators. HEVs possess an internal combustion engine (ICE) and at least one electric machine (EM) \cite{guzzella13}, and require a controller to optimally split demanded power between the ICE and EM, while minimizing fuel consumption and simultaneously satisfying constraints on battery state of charge (SOC), drivability, emissions, etc.
    
    The power-split control for energy management of HEVs can be divided into rule-based and optimization-based methods \cite{wirasingha11}. HEV applications with rule-based methods, often designed by system
    criteria or with fuzzy logic, have been assessed by \cite{vagg16, kim11, bianchi10, pei17,wu16, solanomartinez12, schouten03}. These methods are generally easy to implement and are computationally suitable for real time applications, but they cannot guarantee optimal solution and the process of finding suitable system criteria may be cumbersome \cite{rezaei18}.
    
    Optimization-based controllers generally require a model of the HEV powertrain and are able to guarantee local or global optimum. These methods typically employ predictive information about a future driving cycle, described by road gradient and velocity profiles as a function of distance \cite{li19, zhou19, johannesson15b}. Optimal power-split decision along the driving cycle is often ensured via calculus of variations or Pontryagin's minimum principle (PMP) \cite{murgovski13TVT}. A well-known PMP inspired method for optimal control of HEVs is the equivalent consumption minimization strategy (ECMS), where electric energy is approximated to equivalent amount of fuel by the use of the battery costate, also known as the equivalent factor \cite{lin19, zhou17}. Then, the main challenge in ECMS is transcribed to finding an optimal trajectory for the equivalent factor that satisfies battery SOC constraints. If SOC constraints are imposed only at the initial and final instance of the driving cycle, the solution of the PMP methods generally leads to a constant equivalent factor which is obtained numerically by solving a two-point boundary value problem with iterative approaches \cite{delprat04}. However, a constant equivalent factor cannot prevent violation of SOC constraints at intermediate instances. Several approaches have been proposed to mitigate this problem, but the most common are the horizon-split technique \cite{keulen14, uebel18} and approaches where the equivalent factor is adapted by a SOC feedback.
    
    ECMS approaches where the equivalent factor is adapted with a SOC feedback have been proposed in \cite{tianheng15, pisu07, enang17}. The authors in \cite{tianheng15} propose a proportional-integral (PI) controller to calculate the equivalent factor that tracks a given SOC reference. A nonlinear PI controller is proposed in \cite{pisu07}, where the proportional part is modeled by a cubic penalty function of the normalized SOC value. A robust proportional ECMS controller is designed in \cite{enang17}, where the proportional term is modeled with a tangent function.
    
    Linear quadratic controllers have also been proposed for optimal control of HEVs. 
    These controllers are well-known and highly useful in the optimal control theory.
    They are employed in many applications due to their simple structures and explicit and stable solutions.
    A linear quadratic optimal control has been applied in \cite{xia16} for splitting power and tracking a constant reference SOC and two feedback coefficients have been obtained for electric mode and hybrid mode operation of HEV. 
    A linear quadratic tracking (LQT) method has been used to optimize route and speed for a given origin-destination pair with the given expected trip time by \cite{zhou19asme}.
    Linear quadratic regulator (LQR) has been employed to control vehicle speed by \cite{saeed16}, to improve drivability and reduce excessive clutch wear during the transition from electric to hybrid driving by \cite{du17} and to correct the battery power in hybrid mode, promote charge sustainability and make the final SOC at the end of optimization horizon be close to its target by \cite{dafeng18}. %and to mitigate issues related to battery wear and peak power demands by \cite{chemali15}
    These controllers have been designed without considering SOC constraints in the optimization problem and fuel consumption in the LQR's objective function. 

    One of the most important requirements of power-split control in an HEV is to deliver the torque demanded by supervisory controller and/or driver. 
    %To solve an optimization problem when constraint on SOC bounds is considered, it is probable that demanded torque can not be delivered because of violation of SOC bounds. 
    While solving a power-split optimization problem in HEVs with ECMS and LQR/LQT methods, %in an HEV,
    it is often assumed in the reviewed literature that the demanded torque is fully delivered by the ICE and/or the EM. However, it is possible in practice that demanded torque cannot be delivered because of constraints violation,  e.g. SOC bounds.
    In this case, the delivered torque should be maximized.
    In \cite{becerra16}, an LQT controller has been designed where the difference between demanded and delivered power and tracking errors of fuel mass and the battery SOC have been parts of its objective function. The resulting solution is a compromise between power tracking requirements, fuel consumption minimization and SOC tracking. Furthermore, the SOC bounds have not been considered in the problem and references for battery and fuel mass have been constant. In practical applications the trajectory references for the battery SOC and fuel mass can change according to a  driven cycle.
    
    % \red{To summarize, there is at least one of these limitations in the aforementioned articles.
    % \begin{itemize}
    %     \item It has been assumed that the demanded torque could be delivered by ICE and EM. %or it is penalized in the objective function that result in a trade-off between delivering torque and other terms of objective.
    %     \item The reference trajectories of the battery SOC or fuel mass have been constant.
    %     \item The constraint on SOC bounds has not been considered in the optimization problem.
    %     %\item Minimizing of service brakes usage is not investigated.
    % \end{itemize}}
    
    This paper provides several contributions to overcome the shortcomings of previous approaches. By considering that demanded torque may not be deliverable, as the first aim of the HEV optimal control %, because of delivering demanded torque importance, 
    we derive the deliverable torque limit that is as close as possible to the demanded torque. 
    This also allows decreasing the number of control inputs in the optimization problem to one, thus significantly reducing the computational effort for computing the optimal control. %can be obtained online.
    Furthermore, the proposed input reduction is made such that usage of service brakes is avoided when demanded torque is negative. 
    Second, we explicitly impose constraints on the battery SOC and we design and investigate adaptive ECMS and LQT controllers. % designed \red{in every time instance} to optimize the power-split in an HEV in order to minimize fuel consumption. %\red{and track a variable reference trajectory of the battery SOC} 
    %subject to . 
    In order to derive computationally efficient analytic solutions, the SOC constraints are modeled by nonlinear interior point penalty functions. % \red{to replace them with soft constraints}. 
    For the adaptive ECMS control, we compare the performance between the commonly used tangent function with novel logarithm penalty functions. The latter are also used in a novel LQT controller that trades off fuel consumption, time-varying SOC reference tracking and proximity to SOC constraints. We prove that the resulting objective functions in both ECMS and LQT are convex and we obtain a sub-optimal solution for the optimization problems analytically by using second order approximation of their objective functions.
    Moreover, the robustness of controllers to measurement noise is investigated by considering a simple  noise model.
    
    The rest of the paper is organized as follows. 
    \sect{sec:problemdescription} states the optimal control problem for a parallel HEV. 
    In \sect{sec:convex_problem} a compact convex problem formulation is given.  
    In \sect{sec:controller}, two adaptive power-split control methods are presented. Simulation results are given in \sect{sec:simulation}. Finally, conclusions are drawn in \sect{sec:conclusion}.
%%%%%%%%%%%%%%%%%%%%%%%%%%%%%%%%%%%%%%%%%%%%%%%%%%%%%%%%%%%%%%%%%%%%%%%%%%%%%%%
\section{Problem description}
\label{sec:problemdescription}
    In this section, we state the power-split optimization problem for an HEV powertrain in a parallel configuration.
%%%%%%%%%%%%%%%%%%%%%%%%%%%%%%%%%%%%%%%%%%%%%%%%%%%%%%%%%%%%%%%%%%%%%%%%%%%%%%%
\subsection{Multi-layer control architecture}
    The optimal energy management of an HEV has the objective of minimizing fuel consumption by incorporating predictive information of the road and traffic ahead \cite{guzzella13}. The emerging control problem is dynamic, mixed-integer and  non-convex  program  that  is  intractable  for  real-time  computation. The typical approach for obtaining a computationally tractable solution is splitting the problem into a finite number of sub-problems, organized into several control layers. 
    Candidates for computationally efficient multi-layer control architectures of HEVs and conventional vehicles have been proposed in \cite{johannesson15a,cerofolini14,murgovski16CEP,uebel19}. For example, the architecture proposed in \cite{johannesson15a} includes several hierarchical levels, all with a similar objective of minimizing fuel consumption, but concentrating on different system states and physical constraints and featuring different control horizons, sampling intervals and update frequencies. The control layers can generally be categorized into two groups; the higher levels include supervisory controllers that generate reference set points, and the lower layers consist of local controllers that track the set points and deliver control signals for the vehicular actuators (see \fig{fig:controller}). In this way it is possible to design designated local controllers that focus on a specific control functionality, such as the power-split control problem studied in this paper with the aim of minimizing fuel consumption and keeping the battery SOC within its bounds.
\begin{figure}[t]
    \centering
    \vspace{-15pt}
    \centerline{\includegraphics[width=1.05\linewidth]{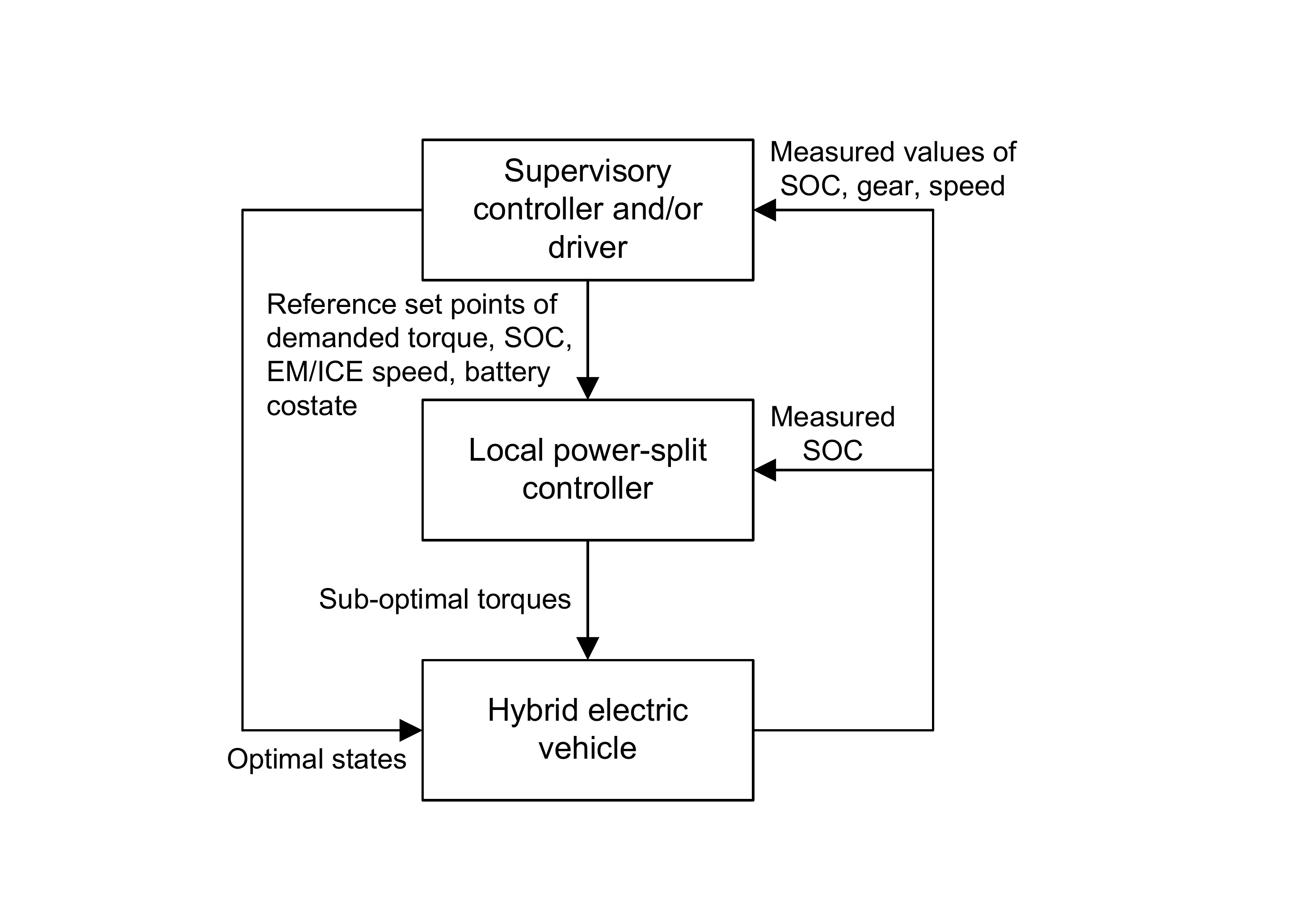}}
    \vspace{-20pt}
    \caption{Control architecture for an HEV, consisting of supervisory controllers, which may also include the driver's demands, and a local power-split controller that tracks the reference trajectories requested by the supervisor.}
    \label{fig:controller}
\end{figure}
    Typical set points generated by the supervisory controllers, and/or a vehicle's driver, may include reference trajectories for the vehicle speed, transmission gear, engine on/off, clutches, battery SOC, etc. 
    These could be predictive trajectories planned by a supervisor over a look ahead horizon, or instantaneous requests when prediction is not available or the requests are given by the driver.
    
    The objective of the local power-split controller is to track references set by the supervisor, but above all, to make sure that control and state bounds are satisfied. The latter is especially relevant when look-ahead information is uncertain, limited or unavailable \cite{pisu07,buerger16}, as tracking such references may lead to infeasibility or poor controller performance. 
    In this paper, we design local power-split controllers that robustly satisfy control and state bounds and perform optimally even when prediction is not available, or reference requests are infeasible.
%%%%%%%%%%%%%%%%%%%%%%%%%%%%%%%%%%%%%%%%%%%%%%%%%%%%%%%%%%%%%%%%%%%%%%%%%%%%%%%%%
\subsection{Model of a parallel HEV powertrain}
    We consider an HEV powertrain in a parallel configuration, as illustrated in \fig{fig:parallelhev}. The powertrain includes an ICE and an EM that both are able to deliver power to the wheels through a gearbox. When mechanical power is negative, the EM may operate as a generator, thus recuperating electric energy that can be stored in a battery for a later use. The powertrain includes two clutches that can disengage both actuators or only the ICE. Since this paper studies optimal power-split decisions, only a hybrid mode of operation where both clutches are engaged, is considered.
\begin{figure}[t]
    \centering
    \includegraphics[width=0.68\linewidth]{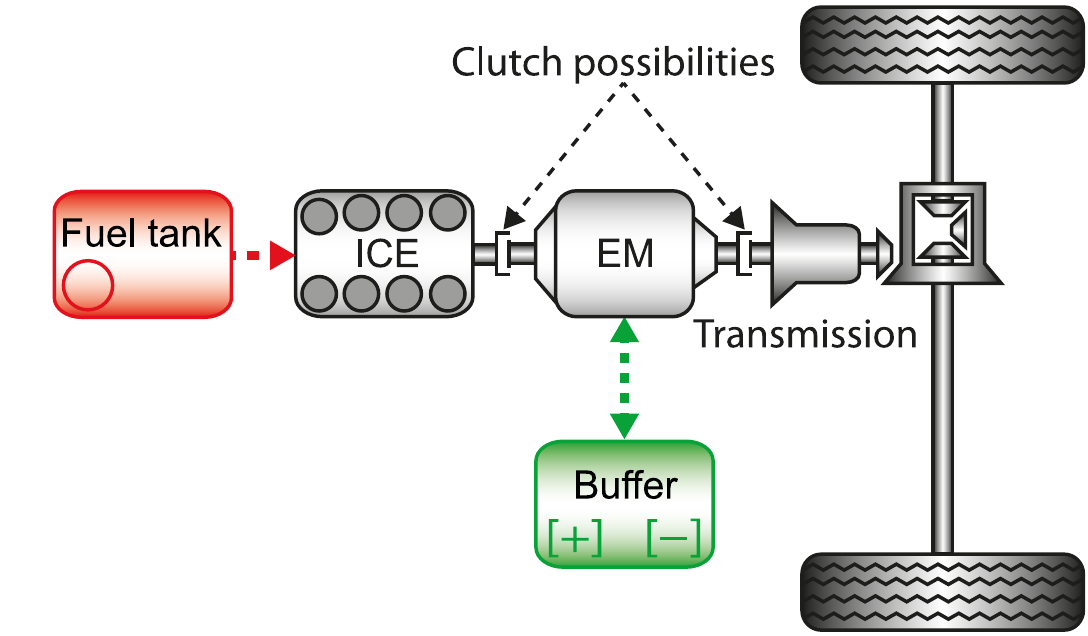}
    \caption{Illustration of a parallel HEV powertrain. The powertrain includes an internal combustion engine (ICE), an electric machine (EM) and electric buffer, i.e. a battery. The ICE and EM are mounted on the same shaft %and both are able to propel the vehicle 
    \cite{johannesson15a}.}
    \label{fig:parallelhev}
\end{figure}

Let $\omega$ and $M_\tx{dem}$ denote the speed and total torque delivered at the shaft between the EM and the gearbox. The total torque, which is hereafter referred to as the demanded torque,   
\begin{align}
    M_\tx{dem}(t)=M_\tx{E}(t)+M_\tx{M}(t) + M_\tx{Sbrk}(t) + M_\tx{Abrk}(t)\label{eq:demandedtorque}
\end{align}
    can be satisfied by aggregating the torques from the ICE ($M_\tx{E}$), EM ($M_\tx{M}$), service brakes, 
    \begin{align} \label{eq:service_braking}
        M_\tx{Sbrk}(t) \leq 0
    \end{align}
    and additional braking $M_\tx{Abrk}$,
    \begin{align} \label{eq:additional_braking}
        M_\tx{Amin}(\omega)\leq M_\tx{Abrk}(t)\leq 0
    \end{align}
    achieved by, e.g., a retarder, a compression release engine brake and/or an exhaust pressure governor. A more proper notation of the lower torque limit in \eqref{eq:additional_braking} is $M_\tx{Amin}(\omega(t))$, clearly indicating that the speed $\omega$ is a time dependent signal. Due to compactness, the time dependence will not be explicitly shown for signals that are input arguments to functions in this paper. 
    
    The EM torque, the sign of which can be either positive or negative, draws/recuperates electrical power
\begin {align}
	P_\tx{Mel}(M_\tx{M},\omega) &= d_{0}(\omega) + d_{1} (\omega)M_\tx{M}(t) + d_{2} (\omega) M_\tx{M}^2(t)\label{eq:electricpower}
\end {align}
    from/to the battery. The  coefficients ${d_i(\omega)\geq 0, \forall \omega, i=0,1}$ and ${d_2(\omega)>0, \forall \omega,}$ are obtained by fitting a second order polynomial in torque, for grid values of $\omega$ within the entire speed range of the powertrain. For any other values of $\omega$, $d_i$ are obtained by linear interpolation. This type of model is a common choice for modeling both EMs and ICEs for HEV power-split problems (see e.g. \cite{murgovski12Mec, hu16, hovgard18CEP}).
    The EM torque is bounded by speed dependent limits 
\begin{equation}
    M_\tx{Mmin}(\omega) \leq M_\tx{M}(t) \leq M_\tx{Mmax}(\omega) \label{eq:emtorque}
\end{equation}
    as illustrated in \fig{fig:emmodel}. The figure also depicts the efficiency lines of the EM, computed as the ratio between mechanical power $\omega M_\tx{M}$ and the corresponding electrical power $P_\tx{Mel}$.
\begin{figure}[t]
    \centering
    \includegraphics[width=0.93\linewidth]{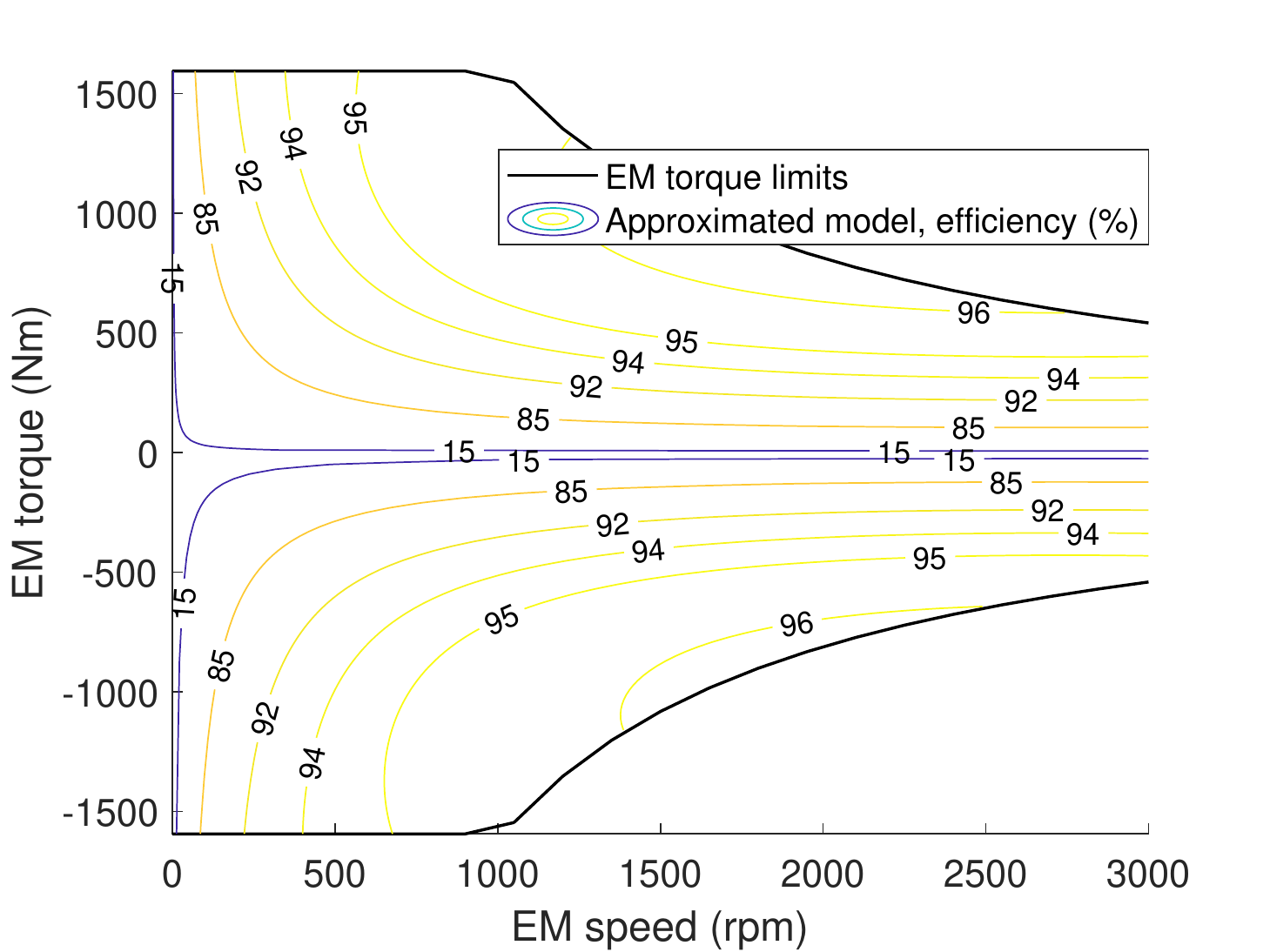}
    \caption{Model of the electric machine, depicting torque limits and efficiency lines over the entire operating range.}
    \label{fig:emmodel}
\end{figure}  
    The power balance equation that relates electrical battery power $P_\tx{Bel}$ and EM electrical power can be expressed as
\begin{equation}
    P_\tx{Bel}(P_\tx{B})=P_\tx{B}(t) - R_\tx{B}\frac{P_\tx{B}^2(t)}{U_\tx{oc}^2} = P_\tx{Mel}(M_\tx{M},\omega) + P_{\tx{aux}} \label{eq:elbalance}
\end{equation}
    where $P_\tx{B}$ is internal (chemical) battery power, $R_\tx{B}$ and $U_\tx{oc}$ are internal battery resistance and open circuit voltage, respectively, and $P_\tx{aux}$ is electrical auxiliary load on the battery \cite{hovgard18CEP}. 
    %The battery current and terminal voltage are bounded according to
%\begin{align}
   %I_\tx{Bmin} &\leq \frac{P_\tx{B}(t)}{U_\tx{oc}} \leq I_\tx{Bmax} \\
	%0 \leq U_\tx{Bmin} &\leq U_\tx{oc} - R_\tx{B}\frac{P_\tx{B}(t)}{U_\tx{oc}} \leq U_\tx{Bmax}
%\end{align} 
    %which can be translated as power limits
    The battery power is bounded according to
\begin{equation} 
    P_\tx{Bmin}\leq P_\tx{B}(t) \leq P_\tx{Bmax}. \label{eq:pblimits}
\end{equation} 
    %with the limits computed as
%\begin{align} 
    %&P_\tx{Bmin}=U_\tx{oc} \max\left\{I_\tx{Bmin}, \frac{U_\tx{oc} - U_\tx{Bmax}}{R_\tx{B}} \right\} \label{eq:batterylimitmin}\\
    %&P_\tx{Bmax}=U_\tx{oc} \min\left\{I_\tx{Bmax}, \frac{U_\tx{oc} - U_\tx{Bmin}}{R_\tx{B}} \right\}.\label{eq:batterylimitmax}
%\end{align}

    The battery dynamics are modeled as 
\begin {equation}
     \dot {\tx{SOC}}(t) = -\frac {P_\tx{B}(t)}{E_\tx{Bmax}} \label{eq:state1}
\end {equation}
    where $E_\tx{Bmax}=Q_\tx{nom}U_\tx{oc}$ is battery energy capacity, $Q_\tx{nom}$ is its nominal capacity and $\tx{SOC}$ is its state of charge, bounded as
\begin{equation}
    0 \leq \tx{SOC}_\tx{min} \leq \tx{SOC}(t) \leq \tx{SOC}_\tx{max} \leq 1.
    \label{eq:soclimit}
\end{equation}

    The fuel consumption by the ICE, $\mu_\tx{fuel}$, can be modeled as a second order function in engine torque
\begin {align}
	\mu_\tx{fuel}(M_\tx{E},\omega)= a_0(\omega) + a_1(\omega) M_{\tx{E}}(t) + a_2(\omega) M_{\tx{E}}^2(t)\label{eq:fuel}
\end{align}
    where the coefficients ${a_i(\omega) \geq 0, \forall \omega, i=0,2}$ and ${a_1(\omega)>0}, \forall \omega$, are found similarly as those of the EM. The ICE torque is bounded 
\begin{equation}
    M_\tx{Emin}(\omega) \leq M_\tx{E}(t) \leq M_\tx{Emax}(\omega) \label{eq:entorque}
\end{equation}
    with speed-dependent torque limits. %The lower limit can be computed directly from \eqref{eq:fuel} as 
%\begin{align*}
    %M_\tx{Emin}(\omega)=\begin{cases}\frac{-a_1(\omega)+\sqrt{a_1^2(\omega)-4a_0(\omega) a_2(\omega)}}{2a_2(\omega)}, & a_2(\omega) > 0\\
    %-\frac{a_0(\omega)}{a_1(\omega)}, & a_2(\omega) = 0
    %\end{cases}
%\end{align*}
    %such that the fuel consumption is non-negative. 
    When the ICE operates with its minimum torque, fuel consumption is zero that ensures the fuel consumption is non-negative.
    The ICE torque limits and efficiency are depicted in \fig{fig:icemodel}. The figure also depicts the additional braking torque and its limits. %This torque is defined as a difference between the ICE torque lower bound and the torques in the region surrounded by the solid line, $M_\tx{Emin}$ and the dashed line, $M_\tx{Amin}+M_\tx{Emin}$
    This torque is defined in the region between $M_\tx{Emin}$ (the solid line) and  $M_\tx{Amin}+M_\tx{Emin}$ (the dashed line) curves. It is the difference between the torques in this region and the ICE torque lower bound.

 \begin{figure}[t]
    \centering
    \includegraphics[width=0.93\linewidth]{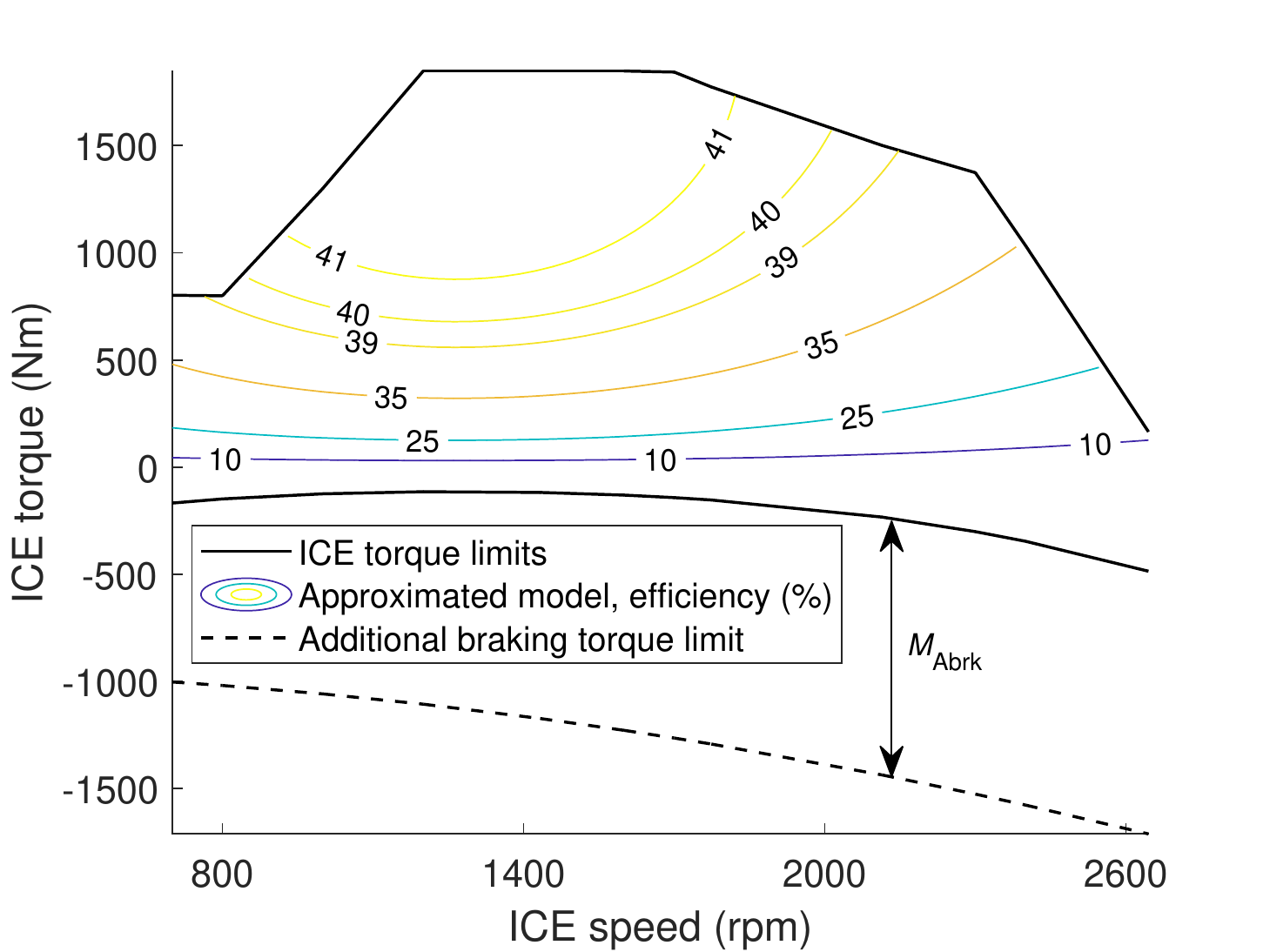}
    \caption{Model of the internal combustion engine depicting torque limits, additional braking torque and efficiency lines over the positive torque region.}
    \label{fig:icemodel}
\end{figure}
%%%%%%%%%%%%%%%%%%%%%%%%%%%%%%%%%%%%%%%%%%%%%%%%%%%%%%%%%%%%%%%%%%%%%%%%%%%%%%%
\subsection{Functionalities of the local power-split control} \label{sec:objectives}
%\subsection{Objectives} \label{sec:objectives}
    
    Situated at the low levels of the multi-layer control architecture, where controllers operate with high frequency, the objective of the local power-split  controller is to deliver computationally efficient decisions that optimally split the requested torque between the ICE, EM and service brakes. 
  
    Let $\check \omega(\tau)$ and $\check M_\tx{dem}(\tau)$ denote the reference trajectories for the ICE/EM speed and torque, respectively, at time instant $\tau$, $\check {\tx{SOC}}(\tau)$ denote the reference battery SOC and $\check \lambda_\tx{B}(\tau)$ its reference costate, which translates battery energy to a fossil-fuel energy \cite{guzzella13}. As mentioned earlier, these trajectories are either generated by a supervisory controller, or are a combination of a driver request and a heuristic predictor. For e.g., these trajectories could be guessed by anticipating the forthcoming energy demands with respect to the road topography.
    
    The aim of the controller is to deliver the demanded torque, or as close to it as possible, minimize fuel consumption and avoid usage of $M_\tx{Sbrk}$ that may unnecessarily wear the service brakes.  
    The final objective is SOC reference tracking, for which two options are considered. The most common and ubiquitous choice in literature is tracking a target SOC at the final time only. To this end, the ECMS strategy is generally employed, where a linear penalty term is adjoined to the objective, i.e. 
\begin{align}
    s_\tx{B}(\tau) P_\tx{B}(t|\tau)
\end{align}
    where ${s_\tx{B}(\tau) = -\lambda_\tx{B}(\tau)/E_\tx{Bmax}}$
%\begin{align} 
    %s_\tx{B}(\tau) = -\frac{\lambda_\tx{B}(\tau)}{E_\tx{Bmax}}
%\end{align}
    is called the equivalent factor and $\lambda_\tx{B}$ is the battery costate. Here, $\tau$ is the time instant when the controller is evaluated and $t\in[\tau, t_\tx{f}]$ is any time instance along a look-ahead horizon with final time $t_\tx{f}$. During one evaluation of the controller, system parameters and set points are considered time invariant, i.e. the reference signals from the supervisor are assumed to hold their requested value at time $\tau$.  
    
    The ECMS strategy derives directly from the Pontryagin's minimum principle \cite{naidu02}, where the optimal solution of the primal problem can also be obtained by minimizing the Hamiltonian
\begin{align}
\begin{split}
    H(\cdot) = 
    \mu_\tx{fuel}(M_\tx{E},\check\omega) 
    + s_\tx{B}(\tau) P_\tx{B}(t|\tau) \label{eq:ham}  
\end{split}
\end{align}
where a necessary condition for the costate is 
\begin{align} 
    \dot{\lambda^*_{\tx{B}}}(\tau) =-{\left(
    \frac{\partial H(\cdot)}{\partial \tx{SOC}(t|\tau)}\right)}^\tx{*} \label{eq:necessary_condition}
\end{align} 
    with the superscript $*$ denoting the optimal solution. 
    Because battery internal resistance and open circuit voltage are assumed to be constant during one update of the controller, the Hamiltonian is independent of SOC and
\begin{align} 
	\dot{\lambda^*_{\tx{B}}}(\tau) =0 \quad \Rightarrow  \quad \lambda^*_{\tx{B}}(\tau)=\tx{constant}, \label{eq:lambdab}
\end{align}
    i.e. $\lambda^*_{\tx{B}}$ is not a function of time $t$. This strategy is typically used in supervisory controllers, where the reference costate $\check \lambda_\tx{B}$ is obtained by solving a two-point-boundary-value problem that leads the system from an initial state $\tx{SOC}(\tau)$ to a target state \cite{guzzella13,delprat04}. However, this strategy omits the SOC limits \eqref{eq:soclimit}, since \eqref{eq:necessary_condition} is not defined when SOC is on the bound. In fact, when SOC is on the bound, the costate is free to jump to other values and, hence, it may not be a constant. One way to resolve this issue is to use a SOC feedback in the local controller to adapt $\lambda_\tx{B}$ as a function of the reference costate $\check\lambda_\tx{B}$ and $\tx{SOC}(\tau)$ \cite{pisu07}. The adaptive strategy is discussed further in Section \ref{sec:adaptive_ECMS}.

    The other option for SOC reference tracking is a straightforward cost term  
\begin{align} 
   q_\tx{SOC}\, (\tx{SOC}(t|\tau) - \bar{\tx{SOC}}(\tau))^2 \label{eq:soc_tracking}
\end{align}
    which directly penalizes deviations from the saturated reference SOC \cite{tianheng15},
    \begin{align*}
    \bar{\tx{SOC}} = \max(\min(\check{\tx{SOC}}, \tx{SOC}_\tx{max}-\epsilon), \tx{SOC}_\tx{min}+\epsilon)
\end{align*}
    using the penalty coefficient $q_\tx{SOC}$, 
    where $\epsilon > 0$ is a small positive value that serves two purposes. First, to provide further incentive for keeping the battery SOC away from the bounds, and second, to provide robustness towards uncertainty in measured SOC value. Further discussion on the robustness is provided in Section~\ref{sec:robustness}.
 %%%%%%%%%%%%%%%%%%%%%%%%%%%%%%%%%%%%%%%%%%%%%%%%%%%%%%%%%%%%%%%%%%%%%%%%%%%%%%%
\subsection{Problem statement}   
    When designing a computationally efficient controller, it is suitable to formulate the problem using deliverable demanded torque, reduce the number of control inputs, relax the SOC constraints, seek convexity, and possibly obtain an analytic solution that can be evaluated efficiently. 
    %To this end the problem is formulated in discrete time ${\tau=jT_\tx{s}}$ a\red{} and ${t=\tau + kT_\tx{s}}$, with sampling interval $T_\tx{s}$ and time index when the controller is updated, ${j=0,1,\dots}$, and time index along the prediction horizon, ${k=0,1,\dots}$. 
    To  state  the  problem, time %from 0 to its final value $t_\tx{f}$ 
    is discretized with zero order hold. For a chosen sampling interval $T_\tx{s}$, discretization will give 
    %${j= 0,\dots,t_\tx{f}/T_\tx{s}}$ time instances.
    time instances ${\tau=jT_\tx{s}, j=0,1,\dots}$.
    The time when the controller is updated is denoted by $\tau$, 
    %when the controller is updated and %${t=\tau + kT_\tx{s}}$, with \red{discrete} time index along the prediction horizon, ${k=0,1,\dots}$.
    while the time instant along the prediction horizon of each update is denoted by 
    ${t=\tau + kT_\tx{s}}$, ${k=0,1,\dots}$. In this paper we study the case where the prediction horizon could be infinite.

    Let $x=\tx{SOC}$ denote the single state in the problem, ${\check x=\bar{\tx{SOC}}}$ its saturated reference, ${x_\tx{min}=\tx{SOC}_\tx{min}}$ and ${x_\tx{max}=\tx{SOC}_\tx{max}}$ its bounds %, $x_\tx{m}$ the measured SOC value at instant $\tau$,
    and $u=P_\tx{B}$ the battery power. By applying zero-order hold, the system dynamics \eqref{eq:state1} can be written in a discrete form  
    \begin{align}
        x(k+1|j) = x(k|j)-\frac {T_\tx{s}}{E_\tx{Bmax}}u(k|j).
        \label{eq:discrete_dynamics}
    \end{align}
    %\red{where $j$ represents the current discrete time index and $k$ is prediction horizon index.}
    These are the necessary ingredients to state the goals of this paper, as follows:
\begin{enumerate}
    \item Derive a saturation function
\begin {align} \label{eq:saturation_function}
    \bar M_\tx{dem}(j)=\sat(\check \omega,\check M_\tx{dem})
\end{align}
    that computes the deliverable demanded torque $\bar M_\tx{dem}$, for a torque $\check M_\tx{dem}$ requested by the driver or supervisory controllers. 
    \item Show that a single control input
    \begin{align} \label{eq:control_bounds}
    u(k|j)=P_\tx{B}(k|j) \in [u_\tx{min}(\check\omega), u_\tx{max}(\check\omega)]
    \end{align}
    is sufficient to derive the torques of all the actuators,
    \begin{align}
    \begin{split} \label{eq:f_trq_function}
        [M_\tx{E}(k|j),M_\tx{M}(k|j),M_\tx{Abrk}(k|j),&M_\tx{Sbrk}(k|j)]^T\\
        &= f_\tx{M}(u)
    \end{split}
    \end{align}
    such that usage of $M_\tx{Sbrk}$ is avoided, and the feasible set of the control input is defined as
    {%\allowdisplaybreaks
    \begin{align}
    \begin{split}
        [u_\tx{min}(\check\omega),& u_\tx{max}(\check\omega)] = \Big\{ [\cdot]=f_\tx{M}(u)\,| \\
    &\bar M_\tx{dem}(k|j)=M_\tx{E}(k|j)+M_\tx{M}(k|j)\\
    &\hspace{10mm}+ M_\tx{Sbrk}(k|j) + M_\tx{Abrk}(k|j),\\
    &M_\tx{Emin}(\check\omega) \leq M_\tx{E}(k|j) \leq M_\tx{Emax}(\check\omega),\\
    &M_\tx{Mmin}(\check\omega) \leq M_\tx{M}(k|j) \leq M_\tx{Mmax}(\check\omega),\\
    &M_\tx{Amin}(\check\omega) \leq M_\tx{Abrk}(k|j) \leq 0,\\
    &M_\tx{Sbrk}(k|j) \leq 0,\\
    &P_\tx{Bmin} \leq u(k|j) \leq P_\tx{Bmax},\\
    &x_\tx{min} \leq x(1|j) \leq x_\tx{max}, \quad
    \forall k, \quad \forall j \Big\}
    \end{split}
    \label{eq:constraints}
    \end{align}}%
    where the constraints on SOC  
    are enforced only at the instant $k=1$. At remaining instances, SOC constraints are enforced by an interior point penalty function. The reason for doing this will be described later, in Section~\ref{sec:deliverable_demand}.
    \item Derive the fuel consumption function $\mu_\tx{fuel}$ in terms of $u$ and prove convexity of two optimal controllers, one based on ECMS and one on the SOC tracking \eqref{eq:soc_tracking}. The problem relying on ECMS is stated as
\begin{subequations} \label{eq:op_ECMS} 
\begin {align}  
    \min_u\; &\mu_\tx{fuel}(\check\omega,u) + s_\tx{B}(j)u(j)\\
    \text{s.t.: } & u(j)\in [u_\tx{min}(\check\omega),  u_\tx{max}(\check\omega)], \forall j
\end{align}
\end{subequations}
    with 
\begin{align}
    \begin{split}
    &s_\tx{B}(j) = f^\tx{ECMS}_\tx{p}(\check s_\tx{B},\check x,x_\tx{m}),\\
    &\check s_\tx{B}(j)=-\check\lambda_\tx{B}(j)/E_\tx{Bmax}
    \end{split} \label{eq:adaptionrule}
\end{align}
    where $f^\tx{ECMS}_\tx{p}$ is a nonlinear interior point penalty function that aims to keep the SOC within bounds.
    
    The problem with SOC reference tracking is stated as
{\allowdisplaybreaks
\begin{subequations} \label{eq:op_LQT}    
\begin {align}
    \begin{split}
     \min_u & \sum_{k=0}^{\infty} \mu_\tx{fuel}(\check\omega,u) + q_\tx{SOC}\,(x(k|j)-\check x(j))^2\\
     &\hspace{40mm}+ f^\tx{LQT}_\tx{p}(\check x, x)
    \end{split}\\
    \tx{s.t.: } &x(k+1|j) = x(k|j) -\frac{T_\tx{s}u(k|j)}{E_\tx{Bmax}} , \forall k, \forall j \label{eq:statespace}\\
    & u(k|j)\in [u_\tx{min}(\check\omega), u_\tx{max}(\check\omega)], \forall k, \forall j\\
    &x(0|j)=x_\tx{m}(j), \forall j
    \label{eq:initialstate}
\end{align}
\end{subequations}}%
    where $x_\tx{m}$ is the measured SOC value at instant ${\tau=jT_\tx{s}}$ and $f^\tx{LQT}_\tx{p}$ is a strictly convex, nonlinear interior point penalty function that keeps SOC within bounds. 
    In this problem, an infinite horizon controller is designed and updated in every time instance $j$. This is because the demanded torque, the reference value of ICE/EM speed and reference SOC can change at every instant.
    \item Propose a computationally efficient sub-optimal solution of problems \eqref{eq:op_ECMS} and \eqref{eq:op_LQT}, by deriving an analytic solution of their second order approximation.  
\end{enumerate}

To summarize, the goals of this paper are to derive the function $\sat$ in \eqref{eq:saturation_function}, the control bounds $u_\tx{min}$ and $u_\tx{max}$ in \eqref{eq:control_bounds} and the function $f_\tx{M}$ in \eqref{eq:f_trq_function}, propose the interior point functions $f^\tx{ECMS}_\tx{p}$ and $f^\tx{LQT}_\tx{p}$, prove convexity of problems \eqref{eq:op_ECMS} and \eqref{eq:op_LQT}, and provide a sub-optimal analytic solution.

For didactic reasons and for compactness, the dependence on $j$ will not be shown in the rest of the paper.
%%%%%%%%%%%%%%%%%%%%%%%%%%%%%%%%%%%%%%%%%%%%%%%%%%%%%%%%%%%%%%%%%%%%%%%%%%%%%%%
\section{Compact convex formulation}         \label{sec:convex_problem}     
    %In this section, we show that using a single control input $u=P_\tx{B}$ is sufficient to derive the optimal torques of the actuators, such that demanded torque is delivered, as close as possible. We also show that problems \eqref{eq:op_ECMS} and \eqref{eq:op_LQT} are nonlinear convex programs.
    In this section, we derive the saturation function in \eqref{eq:saturation_function} for calculating the deliverable demanded torque, show how the reduction of control inputs can be achieved  
    and derive the feasible set of the control input in \eqref{eq:constraints}, and then the functions for obtaining the torques of all the actuators such that usage of service brakes is avoided. 
    We also show that problems \eqref{eq:op_ECMS} and \eqref{eq:op_LQT} are nonlinear convex programs.
%%%%%%%%%%%%%%%%%%%%%%%%%%%%%%%%%%%%%%%%%%%%%%%%%%%%%%%%%%%%%%%%%%%%%%%%%%%%%%%
\subsection{Deliverable demand and optimal actuator torques} \label{sec:deliverable_demand}
    %Recall that at a given update of the controller at instant $k$, the demanded torque $\check M_\tx{dem}$ is assumed to hold its value requested at instant $k$.
    Recall that at a given update of the controller at time instant $j$, for all instances $k$ along the horizon, the demanded torque $\check M_\tx{dem}$ is assumed to be holding its value requested at instant $j$.
\begin{lemma}
    A necessary condition ensuring that demanded torque is delivered for an infinite horizon is 
\begin{align}
     \check M_\tx{dem} \leq M_\tx{Emax}(\check\omega) + M_\tx{Meq}(\check\omega)
     \label{eq:deliverable_torque_infinite_horizon}
    \end{align} 
    where 
\begin{align*}
        M_\tx{Meq}(\check\omega) = \frac{-d_{1}(\check\omega)}{2d_{2}(\check\omega) }
    + \frac{\sqrt{ d_{1}^2(\check\omega)-4d_{2}(\check\omega) (d_{0}(\check\omega)+P_\tx{aux}) } }{2d_{2}(\check\omega) }
\end{align*}
    is the EM torque required to run the electrical auxiliaries, i.e. system \eqref{eq:discrete_dynamics} is at equilibrium.
\end{lemma}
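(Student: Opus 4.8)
The plan is to show that, for a demand held constant over an infinite horizon, delivering $\check M_\tx{dem}$ while keeping the SOC inside its bounds \eqref{eq:soclimit} is possible only if the powertrain admits a feasible steady state, and then to translate that steady-state requirement into a bound on $\check M_\tx{dem}$. First I would note that for a positive demand there is no benefit in braking, since by \eqref{eq:demandedtorque} the service and additional braking terms only subtract torque; hence deliverability reduces to whether $\check M_\tx{dem}=M_\tx{E}+M_\tx{M}$ can be met with $M_\tx{E}\le M_\tx{Emax}$ from \eqref{eq:entorque} and an admissible EM torque.

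Next I would establish the infinite-horizon sustainability argument, which is the crux. Because the demand, speed and model parameters are held constant during one controller update, any constant torque split produces a constant battery power $u=P_\tx{B}$, and the discrete dynamics \eqref{eq:discrete_dynamics} then change the SOC by a fixed increment each step. If every admissible split that delivers $\check M_\tx{dem}$ forces $P_\tx{B}>0$, the SOC decreases monotonically and reaches $x_\tx{min}$ in finite time, after which the demand can no longer be met; therefore a necessary condition is that some admissible split attains $P_\tx{B}\le 0$, i.e. that the equilibrium $P_\tx{B}=0$ of \eqref{eq:discrete_dynamics} is reachable while delivering the demand. To locate the binding case I would minimise $P_\tx{B}$ over the feasible splits: since $P_\tx{Bel}$ is monotone in $P_\tx{B}$ on the operating range and, by \eqref{eq:elbalance}, $P_\tx{Bel}=P_\tx{Mel}+P_\tx{aux}$, minimising $P_\tx{B}$ amounts to minimising the convex quadratic $P_\tx{Mel}$ in \eqref{eq:electricpower}. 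Over the line $M_\tx{M}=\check M_\tx{dem}-M_\tx{E}$, and in the regime where the demand approaches its limit, this minimum is attained at the largest ICE torque $M_\tx{E}=M_\tx{Emax}$, giving $M_\tx{M}=\check M_\tx{dem}-M_\tx{Emax}$.

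Finally I would impose $P_\tx{B}\le 0$, equivalently $P_\tx{Bel}\le 0$ and hence $P_\tx{Mel}(M_\tx{M},\check\omega)\le -P_\tx{aux}$. Because $P_\tx{Mel}$ is convex in $M_\tx{M}$ with $d_2(\check\omega)>0$, this inequality holds exactly between the two roots of $P_\tx{Mel}=-P_\tx{aux}$, the larger of which is precisely $M_\tx{Meq}(\check\omega)$ from the quadratic formula (the $+$ root). Restricting to the physically relevant branch, where the EM generates just enough to supply the auxiliaries and stays within its limits \eqref{eq:emtorque}, the condition becomes $M_\tx{M}=\check M_\tx{dem}-M_\tx{Emax}\le M_\tx{Meq}$, which rearranges to \eqref{eq:deliverable_torque_infinite_horizon}. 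The main obstacle is the sustainability step: I must argue that no time-varying (e.g. oscillating) policy can outperform the best constant split, which follows because a bounded SOC over an infinite horizon forces the long-run average of $P_\tx{B}$ to vanish, so the minimal attainable $P_\tx{B}$ must be non-positive. A secondary care point is selecting the correct root $M_\tx{Meq}$ and verifying it lies within the EM torque limits, so that the claimed equilibrium is genuinely admissible.
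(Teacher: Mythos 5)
Your proposal is correct and takes essentially the same route as the paper's proof: infinite-horizon deliverability forces a battery-power equilibrium ($P_\tx{B}=0$, hence $P_\tx{Mel}\le -P_\tx{aux}$), the positive root of the resulting quadratic in $M_\tx{M}$ gives $M_\tx{Meq}$, and pairing it with the ICE at $M_\tx{Emax}$ yields the bound. The paper compresses this into a three-sentence sketch; you merely fill in the details it leaves implicit (why non-equilibrium operation, including time-varying splits, drives the SOC to a bound in finite time, and why the $+$ root is the physically admissible one).
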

\begin{proof}
    The proof follows directly by investigating the steady-state condition in \eqref{eq:discrete_dynamics} in the case of infinite horizon control, which requires battery power to be zero, requiring the EM to operate with torque $M_\tx{Meq}$ sufficient to power the auxiliaries. 
    This torque is calculated by using  \eqref{eq:electricpower} and \eqref{eq:elbalance}, see Lemma~\ref{lemma:em_torque} in Appendix \ref{sec:lemmas}.
    Then, the maximum torque can be delivered if the ICE operates with its maximum torque. 
\end{proof}

    It is clear that condition \eqref{eq:deliverable_torque_infinite_horizon} is very conservative and impractical, as it merely states that the battery should not be used when demanded torque is greater than ${M_\tx{Emax} + M_\tx{Meq}}$. This would disable one of the advantages of HEVs, to boost the acceleration performance of the vehicle by delivering power from both the ICE and EM when demanded torque is high, so in the rest of the paper we will not commit to ensuring the deliverance of demanded torque for an infinite horizon. Instead, we focus on delivering demanded torque at the current instant $k=0$, such that battery state stays within bounds at the next instant $k=1$. Since the controller is re-evaluated at every instant, persistent feasibility is guaranteed.
\begin{proposition} %\label{proposition:bounds}
    Control bounds on the battery power, $u_\tx{min}$, $u_\tx{max}$, can be derived such that, at a given time instant $k$, the delivered torque $\bar M_\tx{dem}(k)$ is as close as possible to the demanded torque $\check M_\tx{dem}(k)$, without violating the constraints in \eqref{eq:constraints}. Furthermore, the deliverable demanded torque can be computed as
\begin{align}\label{eq:delivered_torque}
\begin{split}
    \bar M_\tx{dem}(\check\omega) &=\sat(\check \omega,\check M_\tx{dem})\\
    &=\min (\check M_\tx{dem},  M_\tx{Emax}(\check\omega) + \bar M_\tx{Mmax}(\check\omega))
\end{split}
\end{align}
    where maximum deliverable EM torque is
\begin{align}
\begin{split}
    &\bar M_\tx{Mmax}(\check\omega)= \frac{-d_{1}(\check\omega)}{2d_{2}(\check\omega) }\\
    &\hspace{5mm} + \frac{\sqrt{ d_{1}^2(\check\omega)-4d_{2}(\check\omega) (d_{0}(\check\omega)-P_\tx{Bel}(u_\tx{max})+P_\tx{aux}) } }{2d_{2}(\check\omega) }
\end{split} \label{eq:deliverable_EM_torque}
\end{align}
    and $P_\tx{Bel}$, expressed directly from \eqref{eq:elbalance}, is
    \[ P_\tx{Bel}(u) = P_\tx{Mel}(\check\omega, M_\tx{M}) + P_\tx{aux}. \]
\end{proposition}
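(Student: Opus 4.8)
The plan is to build the maximal-torque actuator configuration explicitly and then read off both the saturation formula \eqref{eq:delivered_torque} and the control bounds from it. Since the braking torques $M_\tx{Sbrk}$ and $M_\tx{Abrk}$ are constrained to be non-positive by \eqref{eq:service_braking} and \eqref{eq:additional_braking}, they can only lower the aggregate in \eqref{eq:demandedtorque}; hence for delivering as much torque as possible they are set to zero. The two remaining contributions are then pushed to their largest admissible values. The ICE term is maximized by $M_\tx{E}=M_\tx{Emax}(\check\omega)$, feasible independently of the battery state. The EM term is limited both by its own bound $M_\tx{Mmax}(\check\omega)$ in \eqref{eq:emtorque} and, through the power balance \eqref{eq:elbalance}, by the battery power drawable at the current instant; call the resulting largest feasible EM torque $\bar M_\tx{Mmax}(\check\omega)$. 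The greatest producible torque is then $M_\tx{Emax}(\check\omega)+\bar M_\tx{Mmax}(\check\omega)$, so a requested $\check M_\tx{dem}$ is fully deliverable exactly when it does not exceed this value and is otherwise saturated to it, which is precisely the $\min$ in \eqref{eq:delivered_torque}.

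Next I would derive the control bounds and, with them, $\bar M_\tx{Mmax}$. Because the SOC constraint in \eqref{eq:constraints} is imposed only at $k=1$, I would substitute the discrete dynamics \eqref{eq:discrete_dynamics} into $x_\tx{min}\le x(1)\le x_\tx{max}$ to convert the state constraint into the linear interval $\frac{E_\tx{Bmax}}{T_\tx{s}}(x(0)-x_\tx{max})\le u \le \frac{E_\tx{Bmax}}{T_\tx{s}}(x(0)-x_\tx{min})$ on the single input. Intersecting this with the physical battery limits \eqref{eq:pblimits} and with the interval of battery power that corresponds, via \eqref{eq:electricpower}--\eqref{eq:elbalance}, to the admissible EM-torque range $[M_\tx{Mmin},M_\tx{Mmax}]$, yields $u_\tx{min}(\check\omega)$ and $u_\tx{max}(\check\omega)$ as the largest of the three competing lower bounds and the smallest of the three competing upper bounds, respectively. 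By construction this guarantees feasibility of \eqref{eq:constraints} at $k=1$, with $u_\tx{min}$ governing the symmetric maximal-recuperation situation.

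To identify $\bar M_\tx{Mmax}$ with the endpoint $u=u_\tx{max}$, I would establish a monotonicity chain. The internal-to-electrical map $P_\tx{Bel}(u)=u-R_\tx{B}u^2/U_\tx{oc}^2$ is an inverted parabola with apex at $u=U_\tx{oc}^2/(2R_\tx{B})$, so on the operating range below the apex it is strictly increasing in $u$; likewise $P_\tx{Mel}=P_\tx{Bel}-P_\tx{aux}$, and the EM power \eqref{eq:electricpower} is strictly increasing in $M_\tx{M}$ for torques above the vertex $-d_1/(2d_2)<0$. Composing these, $M_\tx{M}$ is increasing in $u$, so the largest EM torque is attained at $u=u_\tx{max}$. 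Finally, setting $P_\tx{Mel}=P_\tx{Bel}(u_\tx{max})-P_\tx{aux}$ in \eqref{eq:electricpower} and solving the quadratic $d_2 M_\tx{M}^2+d_1 M_\tx{M}+(d_0-P_\tx{Bel}(u_\tx{max})+P_\tx{aux})=0$ with the $+$ root --- the one on the increasing branch above the vertex --- reproduces exactly \eqref{eq:deliverable_EM_torque}.

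I expect the main obstacle to be the bookkeeping that collapses all three families of constraints (battery power, SOC at $k=1$, and EM-torque limits) cleanly into the single scalar interval $[u_\tx{min},u_\tx{max}]$, together with the monotonicity argument that legitimizes evaluating $\bar M_\tx{Mmax}$ at the endpoint $u_\tx{max}$. This requires verifying that operation stays on the increasing branches of both quadratics $P_\tx{Bel}(u)$ and $P_\tx{Mel}(M_\tx{M})$, and that the discriminant in \eqref{eq:deliverable_EM_torque} remains non-negative so that a real root exists, which holds precisely because the required EM power cannot fall below the parabola minimum $d_0-d_1^2/(4d_2)$.
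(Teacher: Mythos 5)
Your route is essentially the paper's: convert the SOC constraint at $k{=}1$, the battery power limits, and the EM torque limits into a single interval on $u$ by exploiting monotonicity of $P_\tx{Bel}(u)$ (increasing below its apex $U_\tx{oc}^2/(2R_\tx{B})$) and of $P_\tx{Mel}(M_\tx{M})$ (increasing above $-d_1/(2d_2)$), then invert at $u=u_\tx{max}$ by taking the increasing-branch root of the quadratic to get $\bar M_\tx{Mmax}$, and conclude \eqref{eq:delivered_torque}. That chain, the root selection, and your discriminant remark correspond exactly to Lemmas~\ref{lemma:em_torque} and \ref{lemma:batterypower} in the appendix.

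There is, however, one concrete gap: you intersect with the battery power corresponding to the \emph{raw} EM torque range $[M_\tx{Mmin},M_\tx{Mmax}]$ of \eqref{eq:emtorque}, whereas the paper first reflects the ICE torque limits \emph{and the demanded torque itself} onto the EM, via $\hat M_\tx{Mmin}$ and $\hat M_\tx{Mmax}$ in \eqref{eq:EM_torque_min}--\eqref{eq:EM_torque_max}, and only then maps those to the power bounds \eqref{eq:u_min}--\eqref{eq:u_max}. This reflection is what makes every $u$ in $[u_\tx{min},u_\tx{max}]$ consistent with the equality $\bar M_\tx{dem}=M_\tx{E}+M_\tx{M}+M_\tx{Sbrk}+M_\tx{Abrk}$ inside \eqref{eq:constraints} and with brake avoidance. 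With your interval, when demand is high, a $u$ near your lower endpoint yields an EM torque so small that even $M_\tx{E}=M_\tx{Emax}$ cannot cover the residual; since the brake torques are nonpositive, the torque balance then cannot hold, so such $u$ are infeasible for \eqref{eq:constraints} --- indeed in the fully saturated regime the true feasible set collapses toward $u=u_\tx{max}$, while your interval does not. Symmetrically, when demand is low, $u$ near your upper endpoint forces the EM to over-deliver and dump the excess into the brakes, which the construction is meant to preclude. The saturation formula \eqref{eq:delivered_torque} still comes out correctly, because in the saturated regime the reflected and raw upper limits coincide ($\hat M_\tx{Mmax}=M_\tx{Mmax}$), but the first claim of the proposition --- that the bounds themselves realize the feasible set of \eqref{eq:constraints} --- is not established without the reflection step.
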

\begin{proof}
    We will prove the proposition by deriving the control bounds $u_\tx{min}$ and $u_\tx{max}$ and by analyzing the torque balance when demanded torque is positive or negative. 

    If demanded torque is negative, all actuators can be used to deliver part of it. In this case, $M_\tx{Sbrk}$ is used as the last resort and only when the other actuators have reached their minimum torque limit. Similarly, there is no cost incentive for using the additional braking by retarder or compression release engine brake, so $M_\tx{Abrk}$ will also be avoided if not necessary. Hence, the ICE and/or EM will be used to brake the vehicle until they reach their minimum torque limit. 

    From all the actuators, only the ICE and EM can be used to deliver positive demanded torque. When demanded torque is higher than what the ICE and EM can deliver, i.e. ${\check M_\tx{dem} \geq M_\tx{Emax} + \bar M_\tx{Mmax}}$, the ICE will operate with maximum torque, to deliver torque that is as close as possible to the demanded torque. In that case, the EM will either operate with maximum torque, or a battery power or SOC limit will get activated. Then, the ICE torque limits can be reflected on the EM, as
\begin{align}
    \begin{split}
        \hat M_\tx{Mmin}(\check \omega)&= \max\Big(-\frac{d_1(\check\omega)}{2 d_2(\check\omega)}, M_\tx{Mmin}(\check \omega),\\
        & \min\left(M_\tx{Mmax}(\check \omega), \check M_\tx{dem} - M_\tx{Emax}(\check \omega)\right) \Big) 
        \label{eq:EM_torque_min}
    \end{split}\\
    \begin{split}
        \hat M_\tx{Mmax}(\check \omega)&= \min\Big(M_\tx{Mmax}(\check \omega), \\
        &\max\left(\hat M_\tx{Mmin}(\check \omega), \check M_\tx{dem} - M_\tx{Emin}(\check \omega) \right)\Big) 
        \label{eq:EM_torque_max}
    \end{split}
\end{align}
    where the additional implicit torque limit $-d_1/(2d_2)$ arises from the minimum achievable electrical power, see Lemma~\ref{lemma:em_torque} in Appendix \ref{sec:lemmas}. 
    Since the EM is an incrementally passive component, its electrical power increases with its torque. Similar relation holds between the chemical and electrical power of the battery, see Lemma \ref{lemma:em_torque} and \ref{lemma:batterypower} in Appendix \ref{sec:lemmas}. Hence, the EM torque limits can be translated as limits on the EM electrical power, which in turn can be reflected as limits on the battery chemical power, 
{%\small
\begin{align}
\begin{split}
    &u_\tx{min}(\check\omega)=\max \Bigg( P_\tx{Bmin}, \frac{(x_\tx{m}-x_\tx{max}+\epsilon)E_\tx{Bmax}}{T_\tx{s}},\\
    &\hspace{3mm}\min\Bigg( \frac{U_\tx{oc}^2}{2 R_\tx{B}}, P_\tx{Bmax}, \frac{(x_\tx{m}-x_\tx{min}-\epsilon)E_\tx{Bmax}}{T_\tx{s}}, \\
    &\hspace{3mm}\frac{U_\tx{oc}^2 - U_\tx{oc}\sqrt{U_\tx{oc}^2 - 4R_\tx{B} (P_\tx{Mel}(\check\omega, \hat M_\tx{Mmin}) +P_\tx{aux})} }{2R_\tx{B}} \Bigg) \Bigg)
\end{split}\label{eq:u_min}\\
\begin{split}
    &u_\tx{max}(\check\omega)=\min \Bigg( \frac{U_\tx{oc}^2}{2 R_\tx{B}}, P_\tx{Bmax},\\
    &\hspace{3mm}\frac{(x_\tx{m}-x_\tx{min}-\epsilon)E_\tx{Bmax}}{T_\tx{s}}, \max\Bigg(u_\tx{min}(\check\omega),  \\
    &\hspace{3mm}\frac{U_\tx{oc}^2 - U_\tx{oc}\sqrt{U_\tx{oc}^2 - 4R_\tx{B} (P_\tx{Mel}(\check\omega, \hat M_\tx{Mmax}) +P_\tx{aux})} }{2R_\tx{B}}\Bigg) \Bigg)
\end{split} \label{eq:u_max}
\end{align}}%
    where an implicit limit $U_\tx{oc}^2/(2R_\tx{B})$ has also been included, see Lemma \ref{lemma:batterypower} in Appendix \ref{sec:lemmas}. The SOC limits \eqref{eq:soclimit} have been reflected as limits on the control input by deriving the battery power from \eqref{eq:discrete_dynamics} corresponding to when $x$ at the following instant is at one of its limits. 

    The deliverable demanded torque can now be obtained by reversing the steps used for obtaining the upper control bound. By substituting the power balance \eqref{eq:elbalance} in \eqref{eq:u_max} and using Lemma \ref{lemma:em_torque} in Appendix \ref{sec:lemmas}, the maximum deliverable EM torque can be obtained exactly as in \eqref{eq:deliverable_EM_torque}. Hence, the deliverable demanded torque cannot exceed the sum of the maximum ICE and deliverable EM torque and then \eqref{eq:delivered_torque} addresses the saturation function in \eqref{eq:saturation_function}. 
\end{proof}

    The optimal torques of the actuators and therefore $f_\tx{M}$ in \eqref{eq:f_trq_function} are obtained in a similar manner using the following proposition.
\begin{proposition} \label{proposition:optimaltorques}
    Let $u^*\in [u_\tx{min},u_\tx{max} ]$ denote the optimal control input. Then, the optimal actuator torques can be obtained as $[M_\tx{M}^*,M_\tx{E}^*,M_\tx{Abrk}^*,M_\tx{Sbrk}^*]^T = f_\tx{M}(u^*)$, where the function $f_\tx{M}$ is implemented as
    {\allowdisplaybreaks
\begin{subequations} 
\begin {align}
    &M_\tx{M}^*= \frac{-d_{1} +\sqrt{ d_{1}^2-4d_{2} (d_{0}-P_\tx{Bel}(u^*)+P_\tx{aux}) } }{2d_{2} } \label{eq:optimal_EM_torque}\\
    &M_\tx{E}^* = \max(M_\tx{Emin}(\check\omega), \bar M_\tx{dem}(\check\omega) - M_\tx{M}^*) \label{eq:optimal_ICE_torque}\\
    &M_\tx{Abrk}^*=\max(M_\tx{Amin}(\check\omega), \bar M_\tx{dem}(\check\omega) - M_\tx{M}^* - M_\tx{E}^*) \label{eq:optimal_add_brake}\\
    &M_\tx{Sbrk}^*= \bar M_\tx{dem}(\check\omega) - M_\tx{M}^* - M_\tx{E}^* - M_\tx{Abrk}^* \label{eq:optimal_service_brake}.
\end {align}
\end{subequations}}%
\label{propos:torques}
\end{proposition}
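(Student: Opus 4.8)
The plan is to invert the component relations of the powertrain model one actuator at a time, using the torque balance \eqref{eq:demandedtorque} together with the same actuator priority established in the proof of the preceding proposition (EM and ICE preferred, additional braking next, service brakes only as a last resort). Since $u^* = P_\tx{B}^*$ fixes the battery chemical power, the first step is to recover the EM torque from $u^*$; the balance equation then distributes the remaining deliverable demanded torque $\bar M_\tx{dem}$ across the other actuators in priority order.

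First I would recover $M_\tx{M}^*$. The control input determines the electrical battery power through $P_\tx{Bel}(u^*)$ in \eqref{eq:elbalance}, and the power balance gives $P_\tx{Mel}(\check\omega, M_\tx{M}^*) = P_\tx{Bel}(u^*) - P_\tx{aux}$. Substituting the quadratic EM model \eqref{eq:electricpower} and solving for $M_\tx{M}^*$ yields two roots; the admissible one is the upper root in \eqref{eq:optimal_EM_torque}. This root selection is justified by the incremental passivity of the EM (Lemma~\ref{lemma:em_torque}): on the branch $M_\tx{M} \geq -d_1/(2d_2)$ the electrical power $P_\tx{Mel}$ is strictly increasing in torque, so the map $M_\tx{M} \mapsto P_\tx{Mel}$ is invertible there, and the control bounds in \eqref{eq:u_min}--\eqref{eq:u_max} were constructed precisely on this branch (note the implicit limit $-d_1/(2d_2)$ appearing in $\hat M_\tx{Mmin}$). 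Because $u^* \in [u_\tx{min}, u_\tx{max}]$, the resulting $M_\tx{M}^*$ automatically lies within $[\hat M_\tx{Mmin}, \hat M_\tx{Mmax}] \subseteq [M_\tx{Mmin}, M_\tx{Mmax}]$, so no clamping of $M_\tx{M}^*$ is needed.

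With $M_\tx{M}^*$ fixed, the torque still to be delivered is $\bar M_\tx{dem} - M_\tx{M}^*$, and I would allocate it by priority. The ICE absorbs as much as possible without dropping below its floor, giving $M_\tx{E}^* = \max(M_\tx{Emin}, \bar M_\tx{dem} - M_\tx{M}^*)$ as in \eqref{eq:optimal_ICE_torque}; when demand is high and positive this sets $M_\tx{E}^* = \bar M_\tx{dem} - M_\tx{M}^*$, while when demand is negative and the ICE has bottomed out it pins $M_\tx{E}^* = M_\tx{Emin}$. Any residual after EM and ICE is handled by additional braking down to its limit, $M_\tx{Abrk}^* = \max(M_\tx{Amin}, \bar M_\tx{dem} - M_\tx{M}^* - M_\tx{E}^*)$, and whatever remains is taken up by the service brakes in \eqref{eq:optimal_service_brake}, which by construction vanishes unless every other actuator has saturated. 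The balance \eqref{eq:demandedtorque} then holds by telescoping of the residuals.

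The step I expect to be most delicate is verifying that this nested-saturation allocation is consistent with the priority ordering in every sign regime of $\bar M_\tx{dem}$ --- in particular, confirming that $M_\tx{Sbrk}^* \leq 0$ is activated only as a genuine last resort, so that \eqref{eq:service_braking} and the no-wasted-braking requirement hold, and that when $\bar M_\tx{dem}$ saturates at $M_\tx{Emax} + \bar M_\tx{Mmax}$ the positive-demand branch returns $M_\tx{E}^* = M_\tx{Emax}$ with $M_\tx{Abrk}^* = M_\tx{Sbrk}^* = 0$. This amounts to the same case analysis used in the preceding proposition, now read off in the forward direction.
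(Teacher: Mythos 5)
Your proposal is correct and follows essentially the same route as the paper's proof: recover $M_\tx{M}^*$ from $u^*$ via the power balance and the admissible root of Lemma~\ref{lemma:em_torque} (exactly the computation behind \eqref{eq:deliverable_EM_torque} with $u_\tx{max}$ replaced by $u^*$), then allocate the residual of $\bar M_\tx{dem}$ through the nested saturations in priority order (ICE, then additional braking, then service brakes). The additional detail you supply --- justifying the root selection and noting that $u^*\in[u_\tx{min},u_\tx{max}]$ keeps $M_\tx{M}^*$ within its torque limits --- merely makes explicit what the paper leaves implicit.
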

\begin{proof}
    The optimal EM torque is obtained exactly as in \eqref{eq:deliverable_EM_torque}, by simply replacing $u_\tx{max}$ with $u^*$. If demanded torque is positive, then the optimal ICE torque is the difference between the deliverable demanded torque and the optimal EM torque. If demanded torque is negative, the ICE will provide additional braking down to its lower torque limit. After that $M_\tx{Abrk}$ is used to its saturation limit, and finally $M_\tx{Sbrk}$ is used to deliver the remaining demand.
\end{proof}

%\begin{remark}
    %A lower bound on service braking has not been considered, assuming that the demanded torque is not lower than the vehicular total braking capability. Should explicit bound on service brake torque be considered, a lower bound on the deliverable demanded torque could be computed in a similar manner as its upper bound.
%\end{remark}

%\begin{remark}
    %For scenarios where the ICE is mechanically disengaged from the wheels, the same relations for computing the deliverable demanded torque can be used by replacing all occurrences of $M_\tx{Emin}$ and $M_\tx{Emax}$ with $0$.
%\end{remark}
%%%%%%%%%%%%%%%%%%%%%%%%%%%%%%%%%%%%%%%%%%%%%%%%%%%%%%%%%%%%%%%%%%%%%%%%%%%%%%%
\subsection{Problem convexity}
    Let 
\begin{align}
     M_\tx{M}(\check\omega,u) = e_0(\check\omega) + \sqrt{e_1(\check\omega) + e_2(\check\omega) P_\tx{Bel}(u)} \label{eq:EM_torque_concave}
\end{align}
    with $e_2=1/d_2 > 0$, denote the EM torque expressed as a function of battery power, as in \eqref{eq:optimal_EM_torque} (see also Lemma~\ref{lemma:em_torque} in Appendix \ref{sec:lemmas}). Then, by following \eqref{eq:optimal_ICE_torque}, ICE torque may be expressed as
\begin{align}
    M_\tx{E}(\check\omega,u) = \max(M_\tx{Emin}(\check\omega), \bar M_\tx{dem}(\check\omega) - M_\tx{M}(\check\omega,u)) \label{eq:ICE_torque_convex}
\end{align}
which allows fuel consumption to be written as a function of battery power, i.e. ${\mu_\tx{fuel}(\check\omega,u) = \mu_\tx{fuel}(\check\omega, M_\tx{E}(\check\omega,u))}$. 

\begin{lemma}
    The fuel consumption $\mu_\tx{fuel}(\check\omega,u)$ is a convex and monotonically decreasing function in $u$.
    \label{lemma:convexity}
\end{lemma}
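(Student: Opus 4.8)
The plan is to exploit the composition structure $u \mapsto P_\tx{Bel}(u) \mapsto M_\tx{M}(\check\omega,u) \mapsto M_\tx{E}(\check\omega,u) \mapsto \mu_\tx{fuel}$ and to propagate monotonicity and curvature through each link using the standard composition rules. I would treat the two claimed properties separately, but both hinge on the same sign facts about the individual maps, so I would establish those facts once and reuse them. Throughout, I would confine the argument to the admissible interval $[u_\tx{min},u_\tx{max}]$, where the radicand in \eqref{eq:EM_torque_concave} stays nonnegative and all compositions are well defined.

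For monotonicity, I would first observe that on the admissible interval the electrical battery power $P_\tx{Bel}(u)=u-R_\tx{B}u^2/U_\tx{oc}^2$ is increasing, since its derivative $1-2R_\tx{B}u/U_\tx{oc}^2$ is positive for $u<U_\tx{oc}^2/(2R_\tx{B})$, which is precisely the implicit bound already enforced in $u_\tx{max}$ in \eqref{eq:u_max}. Because $e_2=1/d_2>0$ and the square root is increasing, the EM torque $M_\tx{M}(\check\omega,u)=e_0+\sqrt{e_1+e_2P_\tx{Bel}(u)}$ is then increasing in $u$. Consequently $\bar M_\tx{dem}-M_\tx{M}(\check\omega,u)$ is decreasing, and $M_\tx{E}(\check\omega,u)=\max(M_\tx{Emin}(\check\omega),\,\bar M_\tx{dem}-M_\tx{M}(\check\omega,u))$ from \eqref{eq:ICE_torque_convex} is nonincreasing in $u$. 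Finally, the fuel map $a_0+a_1M_\tx{E}+a_2M_\tx{E}^2$ has vertex at $-a_1/(2a_2)\le 0$ and, by the normalization stated after \eqref{eq:entorque} that fuel vanishes at $M_\tx{Emin}$, is nondecreasing on $[M_\tx{Emin},M_\tx{Emax}]$; composing this nondecreasing outer map with the nonincreasing $M_\tx{E}(\cdot)$ yields the claimed monotone decrease of $\mu_\tx{fuel}(\check\omega,u)$.

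For convexity, I would propagate curvature through the same chain. The quadratic $P_\tx{Bel}$ is concave (second derivative $-2R_\tx{B}/U_\tx{oc}^2<0$), so $e_1+e_2P_\tx{Bel}(u)$ is concave and nonnegative on the feasible set; since $\sqrt{\cdot}$ is concave and increasing, its composition with a concave argument keeps $M_\tx{M}(\check\omega,u)$ concave in $u$. Hence $-M_\tx{M}(\check\omega,u)$, and therefore $\bar M_\tx{dem}-M_\tx{M}(\check\omega,u)$, are convex, and $M_\tx{E}(\check\omega,u)$ is convex as a pointwise maximum of this convex function and the constant $M_\tx{Emin}$. The outer fuel map is convex (since $a_2\ge 0$) and, by the monotonicity argument above, nondecreasing on the range $[M_\tx{Emin},M_\tx{Emax}]$ of $M_\tx{E}(\cdot)$. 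Invoking the rule that a convex nondecreasing function of a convex function is convex then delivers convexity of $\mu_\tx{fuel}(\check\omega,u)$.

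The main obstacle I anticipate is not any single curvature computation but the bookkeeping around the nondecreasing requirement on the outer map together with the kink introduced by the $\max$. The convexity composition rule requires the fuel map to be nondecreasing on the \emph{entire} range swept by the inner convex function $M_\tx{E}(\cdot)$; this is exactly where the physical normalization ``zero fuel at minimum torque'' earns its keep, since it forces the parabola's vertex to sit at or below $M_\tx{Emin}$ so that no admissible engine torque lands on the decreasing branch. I would therefore take care to verify that the range of $M_\tx{E}(\cdot)$ over $[u_\tx{min},u_\tx{max}]$ is contained in $[M_\tx{Emin},M_\tx{Emax}]$, and to note that the flat segment created by the $\max$ (where $M_\tx{E}=M_\tx{Emin}$ is active) only weakens strict monotonicity to nonincreasing, consistent with the lemma's statement.
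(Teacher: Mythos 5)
Your proof is correct and follows essentially the same route as the paper's: propagating concavity/monotonicity through the chain $u \mapsto P_\tx{Bel} \mapsto M_\tx{M} \mapsto M_\tx{E} \mapsto \mu_\tx{fuel}$ via the standard composition rules, with the max in \eqref{eq:ICE_torque_convex} handled as a pointwise maximum of convex functions. The only differences are cosmetic: you verify the monotonicity facts by direct derivative computations where the paper cites its appendix lemmas, and you justify the nondecreasing-on-range condition of the outer fuel map (via the zero-fuel-at-$M_\tx{Emin}$ normalization) more carefully than the paper, which simply asserts it.
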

\begin{proof}
    From the electric power balance relation \eqref{eq:elbalance} it is clear that $P_\tx{Bel}(u)$ is a concave function, and from the proof of Lemma~\ref{lemma:batterypower} (see Appendix \ref{sec:lemmas}) it is strictly monotonically increasing in $u$. Furthermore, since the square-root function in \eqref{eq:EM_torque_concave} is concave and non-decreasing, $M_\tx{M}(\check\omega,u)$ is also concave in $u$, and also strictly monotonically increasing (see the proof of  Lemma~\ref{lemma:em_torque} in Appendix \ref{sec:lemmas} and the composition rules for convexity \cite[p.~83]{boyd04}). The negative of $M_\tx{M}(\check\omega,u)$ is convex and strictly monotonically decreasing, and the maximum of convex functions in \eqref{eq:ICE_torque_convex} is a convex function, so $M_\tx{E}(\check\omega,u)$ is a convex function in $u$, and also monotonically decreasing. Since the fuel consumption function \eqref{eq:fuel} is convex and strictly monotonically increasing in $M_\tx{E}$, it follows that $\mu_\tx{fuel}(\check\omega,u)$ is a convex and monotonically decreasing function in $u$.
\end{proof}
%\begin{figure*}[tbp]
    %\centering
    %\includegraphics[width=1\linewidth]{cost}
    %\caption{Fuel consumption as a function of
%battery power and (a) various demanded torques in Nm and speed of \SI{1200}{rpm}, (b) various EM speeds in rpm and a demanded torque of \SI{1200}{Nm}, (c) various EM speeds in rpm and demanded torque of \SI{-130}{Nm}.}
    %\label{fig:cost}
%\end{figure*}
    %Fuel consumption as a function of battery power is depicted in Fig. \ref{fig:cost}a for various demanded torques in Nm. EM speed is chosen at \SI{1200}{rpm}. Fuel consumption as a function of battery power for various EM speeds in rpm is depicted in Fig. \ref{fig:cost}b and Fig. \ref{fig:cost}c. Demanded torques are chosen at \SI{1200}{Nm} and \SI{-130}{Nm}, respectively.

\begin{theorem} 
    Problems \eqref{eq:op_ECMS} and \eqref{eq:op_LQT} are convex programs.
    \label{theorem:convexity}
\end{theorem}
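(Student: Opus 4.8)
The plan is to show, for each problem separately, that the objective is a convex function of the decision variables and that the feasible set is convex; a mathematical program with both properties is by definition a convex program. The decisive ingredient is already in hand, namely Lemma~\ref{lemma:convexity}, which establishes that $\mu_\tx{fuel}(\check\omega,u)$ is convex in the control $u$. Everything else should reduce to combining this fact with elementary convexity-preserving operations (nonnegative sums, addition of affine terms, and composition with affine maps) together with the observation that the admissible control set $[u_\tx{min}(\check\omega),u_\tx{max}(\check\omega)]$ is an interval.

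For problem~\eqref{eq:op_ECMS} I would argue directly. Its objective is $\mu_\tx{fuel}(\check\omega,u)+s_\tx{B}u$; the first term is convex by Lemma~\ref{lemma:convexity} and the second is linear (hence convex) in $u$, so the sum is convex. The feasible set is the single interval $[u_\tx{min}(\check\omega),u_\tx{max}(\check\omega)]$, which is convex. This settles the ECMS case immediately.

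For problem~\eqref{eq:op_LQT} the key structural fact is that the dynamics~\eqref{eq:statespace} are affine: solving the recursion with the fixed initial condition $x(0)=x_\tx{m}$ expresses each state $x(k)$ as an affine function of the controls $u(0),\dots,u(k-1)$. I would then treat each summand in turn. The stage fuel term is convex by Lemma~\ref{lemma:convexity}. The tracking term $q_\tx{SOC}(x(k)-\check x)^2$ is a convex quadratic in $x(k)$ with $q_\tx{SOC}\ge 0$, and composition of a convex function with the affine state map preserves convexity \cite[p.~83]{boyd04}; the same composition argument applies to the strictly convex penalty $f^\tx{LQT}_\tx{p}(\check x,x)$. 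Each stage cost is therefore convex in the decision variables, and a (possibly infinite) sum of convex functions remains convex, so the whole objective is convex. The feasible set is the intersection of the affine equality constraints imposed by the dynamics and the initial condition with the per-stage box constraints $u(k)\in[u_\tx{min}(\check\omega),u_\tx{max}(\check\omega)]$, all of which are convex, hence their intersection is convex.

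I expect the only point requiring real care to be the LQT case, specifically the bookkeeping that turns the state-dependent terms into convex functions of the controls via the affine-substitution/composition rule, together with the remark that passing to an infinite sum does not destroy convexity. By contrast, the ECMS part is essentially a one-line consequence of Lemma~\ref{lemma:convexity}. Since all the genuine nonlinearity and the hard convexity analysis were already absorbed into Lemma~\ref{lemma:convexity}, the theorem should follow without any further nontrivial computation.
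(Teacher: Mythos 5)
Your proposal is correct and follows essentially the same route as the paper: the ECMS case is the identical one-line argument (Lemma~\ref{lemma:convexity} plus a linear term over an interval), and the LQT case rests on the same three ingredients (convex fuel term, convex quadratic tracking term, penalty $f^\tx{LQT}_\tx{p}$ convex by its defining assumption) together with the affinity of the dynamics. The only cosmetic difference is that you substitute the affine dynamics into the objective to work in the controls alone, whereas the paper keeps $x$ and $u$ as joint decision variables and simply notes that the objective is convex in $(x,u)$ and the constraints are affine; both are standard and equally valid.
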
    
\begin{proof} 
    The objective function of problem \eqref{eq:op_ECMS} includes only the fuel consumption $\mu_\tx{fuel}(\check\omega,u)$ and a linear term in $u$, and it is therefore a convex program. The objective function in problem \eqref{eq:op_LQT} includes the fuel consumption $\mu_\tx{fuel}(\check\omega,u)$, a quadratic function that is convex in the state and a penalty term $f_\tx{p}^\tx{LQT}$ that is also convex in the state, by definition. Hence the objective function of problem \eqref{eq:op_LQT} is convex in both $u$ and $x$. Problem \eqref{eq:op_LQT} is subject to affine constraints in $x$ and $u$ and it is, therefore, a convex program. 
\end{proof}
%%%%%%%%%%%%%%%%%%%%%%%%%%%%%%%%%%%%%%%%%%%%%%%%%%%%%%%%%%%%%%%%%%%%%%%%%%%%%%%
\section{Adaptive power-split control} \label{sec:controller}    
    In this section, we propose quadratisation of the objective function and derive an adaptive proportional ECMS and an LQT for calculating a sub-optimal solution of problems \eqref{eq:op_ECMS} and \eqref{eq:op_LQT}.
%%%%%%%%%%%%%%%%%%%%%%%%%%%%%%%%%%%%%%%%%%%%%%%%%%%%%%%%%%%%%%%%%%%%%%%%%%%%%%%
\subsection{Design of adaptive proportional ECMS} \label{sec:adaptive_ECMS}
    In the formulation of problem \eqref{eq:op_ECMS} the constraint on the battery SOC is modeled as a soft constraint using an interior point penalty function. 
    When designing an adaptive proportional ECMS, the penalty function $f^\tx{ECMS}_\tx{p}$ should adapt the battery equivalent factor as in \eqref{eq:adaptionrule}, such that battery SOC is kept within bounds. In order to design the penalty function, we will use the following fact.

\begin{lemma}
    The battery equivalent factor $s_\tx{B}$ is nonnegative.
    \label{lemma:s_b}
\end{lemma}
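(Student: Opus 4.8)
The plan is to read off the sign of $s_\tx{B}$ from the first-order optimality condition of the ECMS problem~\eqref{eq:op_ECMS}, using Lemma~\ref{lemma:convexity} as the only input. In that problem the equivalent factor appears solely as the coefficient of the linear term $s_\tx{B}u$, so with $g(u):=\mu_\tx{fuel}(\check\omega,u)+s_\tx{B}u$ its derivative is $g'(u)=\partial_u\mu_\tx{fuel}(\check\omega,u)+s_\tx{B}$. By Lemma~\ref{lemma:convexity} the fuel term is non-increasing, $\partial_u\mu_\tx{fuel}(\check\omega,u)\le 0$. At an interior minimiser $u^*$ stationarity gives $g'(u^*)=0$, hence $s_\tx{B}=-\partial_u\mu_\tx{fuel}(\check\omega,u^*)\ge 0$; this one line is, I expect, the essence of the intended proof.

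To cover the boundary cases as well, I would argue by contradiction. Suppose $s_\tx{B}<0$. Then $g'(u)=\partial_u\mu_\tx{fuel}(\check\omega,u)+s_\tx{B}\le s_\tx{B}<0$ for every admissible $u$, so $g$ is strictly decreasing on $[u_\tx{min}(\check\omega),u_\tx{max}(\check\omega)]$ and its minimiser is $u=u_\tx{max}(\check\omega)$. The controller would then discharge the battery at the maximal admissible rate at every update irrespective of the SOC reference, which cannot be reconciled with keeping the SOC inside its bounds. This already rules out $s_\tx{B}<0$ whenever the optimum is interior or at $u_\tx{min}$, where the left-boundary condition $g'(u_\tx{min})\ge 0$ likewise forces $s_\tx{B}\ge -\partial_u\mu_\tx{fuel}(\check\omega,u_\tx{min})\ge 0$.

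The one case that resists this purely first-order reasoning is $u^*=u_\tx{max}(\check\omega)$, where optimality yields only the upper estimate $g'(u_\tx{max})\le 0$ and places no lower bound on $s_\tx{B}$. Closing this gap is the real obstacle, and I would resolve it through the costate interpretation behind \eqref{eq:adaptionrule}: since $s_\tx{B}=-\lambda_\tx{B}/E_\tx{Bmax}$ with $E_\tx{Bmax}>0$, and since additional stored charge can only weakly lower the attainable fuel cost, the optimal cost is non-increasing in the SOC and the corresponding costate obeys $\lambda_\tx{B}\le 0$, giving $s_\tx{B}\ge 0$ in full generality. This monotonicity of the value function in the state carries the actual content of the lemma; the stationarity manipulations above merely confirm it on the interior and lower boundary.
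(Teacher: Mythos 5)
Your first paragraph is exactly the paper's opening move: at an interior optimum, stationarity $\partial_u\mu_\tx{fuel}+s_\tx{B}=0$ together with Lemma~\ref{lemma:convexity} (monotone decrease of $\mu_\tx{fuel}$ in $u$) gives $s_\tx{B}\ge 0$. Your second paragraph reaches a correct conclusion for the cases it covers, but the clause ``cannot be reconciled with keeping the SOC inside its bounds'' is not a valid contradiction in this framework: the SOC bounds are enforced by the saturation terms in \eqref{eq:u_min}--\eqref{eq:u_max}, so a controller that always selects $u^*=u_\tx{max}$ never violates them --- it simply drives the SOC down to $x_\tx{min}+\epsilon$ and holds it there. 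Luckily that clause is not load-bearing, since the first-order condition $g'(u_\tx{min})\ge 0$ already settles the interior and $u_\tx{min}$ cases.

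The genuine gap is your third paragraph, i.e.\ precisely the case $u^*=u_\tx{max}$ that you correctly identify as the obstacle. You close it by asserting that the optimal cost is non-increasing in SOC, hence $\lambda_\tx{B}\le 0$. That claim refers to a full-horizon value function and is not available inside problem \eqref{eq:op_ECMS}: the paper emphasizes that during one controller update the Hamiltonian is independent of SOC, the costate is a constant exogenous reference ($\check s_\tx{B}$ adapted by $f_\tx{p}^\tx{ECMS}$), so there is no SOC-dependent value function whose gradient $s_\tx{B}$ could be. Even for the global problem, monotonicity of the cost-to-go in SOC is itself a nontrivial claim (one must show that any trajectory feasible from a lower SOC can be dominated from a higher SOC, with care at the upper SOC bound where recuperation can be blocked); you assert it rather than prove it, and you concede it ``carries the actual content of the lemma.'' The paper closes this case by a computation that stays inside \eqref{eq:op_ECMS}: setting $s_\tx{B}=0$ and solving $\partial_u\mu_\tx{fuel}=0$ explicitly gives $u^o=U_\tx{oc}^2/(2R_\tx{B})$, which by Lemma~\ref{lemma:batterypower} and the definition \eqref{eq:u_max} is at least $u_\tx{max}$ and hence saturates to $u^*=u_\tx{max}$; since by convexity $s_\tx{B}=-\partial_u\mu_\tx{fuel}$ is non-increasing in $u$, any $s_\tx{B}<0$ produces that same saturated solution, so negative equivalent factors are redundant and can be excluded without loss of optimality. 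That explicit computation, not value-function monotonicity, is what the paper's proof supplies for the boundary case, and it is the piece your proposal is missing.
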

\begin{proof}
    There are two possibilities for the optimal solution of problem  \eqref{eq:op_ECMS}. If the optimal control is not on the bounds, it holds
\begin{align} 
	\frac{\partial \mu_\tx{fuel}}{\partial u}+s_\tx{B}=0. \label{eq:hamiltoniandiff}
\end{align}
    According to Lemma~\ref{lemma:convexity}, $\mu_\tx{fuel}$ is monotonically decreasing in $u$, so $\frac{\partial \mu_\tx{fuel}}{\partial u}\leq 0$ and from \eqref{eq:hamiltoniandiff} it follows $s_\tx{B}\geq 0$. Moreover, when $s_\tx{B}=0$, it can be obtained 
\begin{align*}
     \left.\frac{\partial \mu_\tx{fuel}}{\partial u}\right\rvert_{u^o}=&\frac{(2R_\tx{B}u^o-U_\tx{oc}^2)(a_1+2a_2M_\tx{E}(\check\omega,u^o))}{U_\tx{oc}^2 \sqrt{d_{1}^2-4d_{2} (d_{0} +P_\tx{aux}-u^o+\frac{R_\tx{B}}{U_\tx{oc}^2}{u^o}^2)}}=0,
     %\label{eq:fueldiff}
\end{align*}   
    which follows from \eqref{eq:fuel}, \eqref{eq:ICE_torque_convex} and Lemma \ref{lemma:em_torque}.
    This gives the optimum
\begin {align}
    u^o= \frac{U_\tx{oc}^2}{2R_\tx{B}} \geq u_\tx{max} \label{eq:u_o}
\end {align} 
    that will, in fact, be saturated to its maximum value $u_\tx{max}$, according to Lemma \ref{lemma:batterypower}, and will require discharging the battery with maximum power. According to Lemma~\ref{lemma:convexity}, $\mu_\tx{fuel}$ is convex in $u$, so $\frac{\partial \mu_\tx{fuel}}{\partial u}$ is non-decreasing and by considering \eqref{eq:hamiltoniandiff}, $s_\tx{B}$ is non-increasing in $u$. Hence, there is no reason for using negative values for $s_\tx{B}$, since the constrained optimal solution is ${u^*=u_\tx{max}}$ when $s_\tx{B}=0$.
\end{proof}

\begin{lemma} \label{lemma:ECMS}
    Let ${s_\tx{B}=f^\tx{ECMS}_\tx{p}(\check s_\tx{B},\check x, x)}$ be an interior point penalty function that satisfies 
\begin{subequations}  \label{eq:condition} 
\begin{align}
    &f^\tx{ECMS}_\tx{p}(\check s_\tx{B},\check x, \check x) =\check s_\tx{B},\\ 
    &f^\tx{ECMS}_\tx{p}(\check s_\tx{B},\check x,x_\tx{max}) =0,\\
    &\lim_{x \rightarrow x_\tx{min}} f^\tx{ECMS}_\tx{p}(\check s_\tx{B},\check x,x)= \infty.
\end{align} 
\end{subequations}%
    Then, functions
{\allowdisplaybreaks
\begin{subequations}  \label{eq:ECMS_penalty_functions} 
\begin{align}
    \begin{split}f^\tx{ECMS}_\tx{p1}(\cdot)&=\max\Bigg(0, \\
    &\check s_\tx{B}+  K_\tx{p1}\bigg(\tan\left(\frac{\pi}{2}\frac{x_\tx{max} + x_\tx{min}-2x_\tx{m}}{x_\tx{max} - x_\tx{min}}\right)\\
    &-\tan\left(\frac{\pi}{2}\frac{x_\tx{max} + x_\tx{min}-2\check{x}}{x_\tx{max} - x_\tx{min}}\right) \bigg)\Bigg),
    \end{split}\\
    \begin{split}f^\tx{ECMS}_\tx{p2}(\cdot)&=\max\Bigg(0, \\
    &\check s_\tx{B} + K_\tx{p2}\bigg((x_\tx{max}-\check{x})\log\left(\frac{x_\tx{max}-x_\tx{m}}{x_\tx{max}-\check{x}}\right)\\
   &-(\check{x}-x_\tx{min})\log\left(\frac{x_\tx{m}-x_\tx{min}}{\check{x}-x_\tx{min}}\right) \bigg)\Bigg)
   \end{split}
\end{align} 
\end{subequations}}%
with $K_\tx{pi}>0, i=1, 2$, are such interior point functions.
\end{lemma}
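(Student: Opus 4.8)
The plan is to verify the three interior-point requirements in \eqref{eq:condition} for each of the two candidates in \eqref{eq:ECMS_penalty_functions} by direct substitution, reading the measured value $x_\tx{m}$ as the argument $x$ appearing in \eqref{eq:condition}. Both candidates share the form $\max\!\big(0,\ \check s_\tx{B} + K_\tx{pi}\,D_i(x_\tx{m})\big)$ with $K_\tx{pi}>0$, where $D_i$ is a difference of two barrier terms, one tied to the upper bound and one to the lower bound, each normalised so that $D_i(\check x)=0$. The verification therefore reduces to evaluating $D_i$ at the reference, at the upper bound, and in the lower-bound limit, together with one appeal to the sign of $\check s_\tx{B}$.

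For the first requirement I would substitute $x_\tx{m}=\check x$. In $f^\tx{ECMS}_\tx{p1}$ the two tangents then have identical arguments and cancel; in $f^\tx{ECMS}_\tx{p2}$ both logarithms have argument $1$ and vanish, so in either case the bracket collapses to $\check s_\tx{B}$. By Lemma~\ref{lemma:s_b} the equivalent factor is nonnegative, $\check s_\tx{B}\ge 0$, so the outer $\max$ returns exactly $\check s_\tx{B}$, which is the condition $f^\tx{ECMS}_\tx{p}(\check s_\tx{B},\check x,\check x)=\check s_\tx{B}$. This is the only place where nonnegativity of the equivalent factor is used, and it is precisely what lets the truncation at $0$ coexist with an exact match at the reference.

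For the remaining two requirements I would carry out the boundary and limit analysis of $D_i$. For $f^\tx{ECMS}_\tx{p1}$, the tangent argument $\tfrac{\pi}{2}\tfrac{x_\tx{max}+x_\tx{min}-2x_\tx{m}}{x_\tx{max}-x_\tx{min}}$ is an affine, strictly decreasing map of $x_\tx{m}$ that sends $x_\tx{max}\mapsto -\tfrac{\pi}{2}$ and $x_\tx{min}\mapsto +\tfrac{\pi}{2}$, so the tangent tends to $-\infty$ at the upper bound and to $+\infty$ in the lower-bound limit, while the reference tangent stays finite. For $f^\tx{ECMS}_\tx{p2}$ I would first note that the saturated reference lies strictly inside the bounds, so both coefficients $(x_\tx{max}-\check x)$ and $(\check x - x_\tx{min})$ are strictly positive. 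At $x_\tx{m}=x_\tx{max}$ the first logarithm's argument tends to $0^+$, driving that term to $-\infty$ against a positive coefficient, while the second logarithm remains finite; as $x_\tx{m}\to x_\tx{min}$ the second logarithm's argument tends to $0^+$, but the leading minus sign in front of the positive coefficient $(\check x - x_\tx{min})$ turns its divergence into $+\infty$, with the first logarithm finite. In both candidates, therefore, the bracket tends to $-\infty$ at $x_\tx{max}$, so the outer $\max$ gives $0$, and tends to $+\infty$ as $x_\tx{m}\to x_\tx{min}$, so $f^\tx{ECMS}_\tx{p}\to\infty$, establishing the last two conditions.

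The argument is essentially routine substitution, so there is no single hard step; the two points that require care are invoking Lemma~\ref{lemma:s_b} to discharge the outer $\max$ in the first condition, and the sign bookkeeping in the logarithmic candidate, where the explicit minus sign in the lower-bound term is exactly what converts the $-\infty$ divergence of the logarithm into the $+\infty$ barrier demanded by \eqref{eq:condition}. Both hinge on the reference being strictly interior, $x_\tx{min}<\check x<x_\tx{max}$, which is guaranteed by the $\epsilon$-saturation used to define $\check x$.
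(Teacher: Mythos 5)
Your proof is correct and follows the same route as the paper, which simply states that the lemma is verified by substituting $\check x$, $x_\tx{min}$ and $x_\tx{max}$ into the penalty functions; you have filled in exactly those substitutions, including the two details the paper leaves implicit (invoking Lemma~\ref{lemma:s_b} so the outer $\max$ returns $\check s_\tx{B}$ at the reference, and the strict interiority of $\check x$ from the $\epsilon$-saturation). Nothing is missing.
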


\begin{proof}
    The proof is straightforward to verify by replacing $x$ with $\check x$, $x_\tx{min}$ or $x_\tx{max}$ into the penalty functions, and computing the equivalent factor according to \eqref{eq:adaptionrule}.
\end{proof}

An illustration of the ECMS penalty functions is provided in Fig.~\ref{fig:ECMS_penalty}.
\begin{figure}[tbp]
    \centering
    \includegraphics[width=0.93\linewidth]{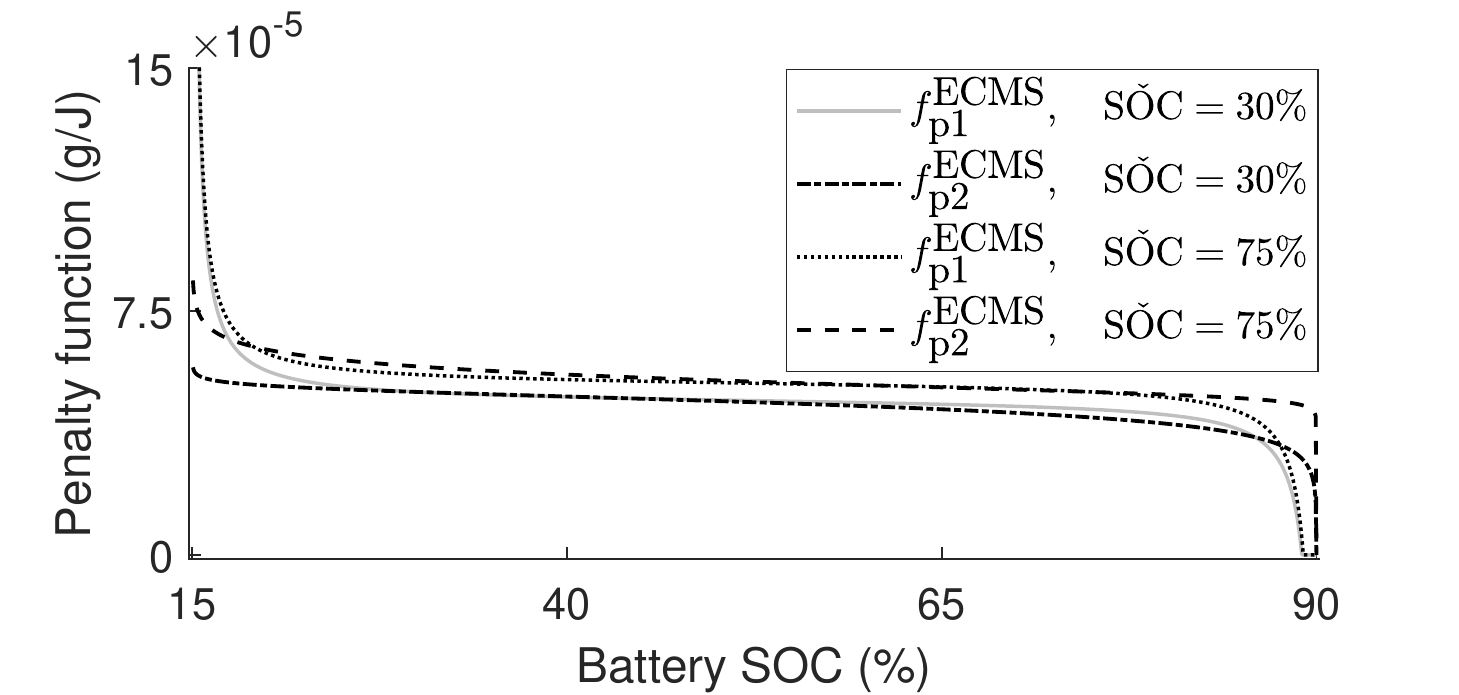}
    \caption{Penalty functions used in problem \eqref{eq:op_ECMS} for two different values of reference SOC. Battery SOC is allowed to vary within 15-\SI{90}{\%}.}
    \label{fig:ECMS_penalty}
\end{figure}
\begin{remark}
    A common choice when designing adaptive ECMS controllers is using a proportional nonlinear feedback, see e.g. \cite{uebel19,rutquist08}. Alternative approaches exist where an integral feedback is also considered, see e.g. \cite{pisu07, johannesson15a}.
\end{remark}
    
    In this paper, the equivalent factor is adapted at every instant using $f^\tx{ECMS}_\tx{p1}$ and $f^\tx{ECMS}_\tx{p2}$ according to  measured SOC and reference values of battery SOC and equivalent factor.
    After the equivalent factor is calculated, the unconstrained problem \eqref{eq:op_ECMS} can be solved analytically, which involves a solution to a quartic equation \cite{hovgard18CEP}. However, obtaining such solutions is complicated, so we suggest approximating the fuel consumption by a quadratic function of $u$, 
\begin {equation}
\begin{split}
    \tilde \mu_\tx{fuel} &= \left. \frac{\partial  \mu_\tx{fuel}}{\partial u} \right\rvert_{u_0}(u-u_0) + \left. \frac{\partial^2  \mu_\tx{fuel}}{2\partial u^2} \right\rvert_{u_0}(u-u_0)^2 \\
    &+ \mu_\tx{fuel}(u_0) =\tilde{a}_{0} + \tilde{a}_{1} (u-u_0) + \tilde{a}_{2} (u-u_0)^2 \label{eq:taylor2}
\end{split}
\end {equation}
    by performing a second order Taylor approximation about %a given point 
    ${u_0=u_\tx{max}(\check \omega)}$. The coefficients $\tilde{a}_i, i=0,1,2$, can be found by using \eqref{eq:fuel},  \eqref{eq:ICE_torque_convex} and Lemma \ref{lemma:em_torque}. 

    The reason for choosing $u_0=u_\tx{max}$ is because according to Lemma \ref{lemma:convexity} and definition \eqref{eq:EM_torque_max} and \eqref{eq:u_max}, fuel consumption is monotonically decreasing in $u$, and its minimum value is obtained precisely when $u_0=u_\tx{max}$. Then, according to \eqref{eq:hamiltoniandiff}, the unconstrained sub-optimal solution is calculated as
\begin {align}
    u^o = \frac{2\tilde{a}_2u_0-\tilde{a}_{1}-s_\tx{B}}{2\tilde{a}_2}
\end {align}
    while the sub-optimal solution of the problem \eqref{eq:op_ECMS} is  
 \begin{align} 
    u^* = \max(\min(u^o, u_\tx{max}(\check\omega)), u_\tx{min}(\check\omega)) \label{eq:constrained_control}
\end{align}
    and sub-optimal actuator torques can be obtained from \eqref{eq:constrained_control} and Proposition~\ref{propos:torques}.
    %%%%%%%%%%%%%%%%%%%%%%%%%%%%%%%%%%%%%%%%%%%%%%%%%%%%%%%%%%	
\subsection{Design of adaptive LQT}
In this section we propose an LQT to obtain a sub-optimal solution of the optimization problem \eqref{eq:op_LQT}.
%solving the optimization problem \eqref{eq:op_LQT}.
    Linear  optimal control methods that work with quadratic performance over control input and regulation/tracking error (LQR/LQT), are well studied and provide an explicit and stable solution.

%\red{Similar to ECMS, in formulation of problem \eqref{eq:op_LQT}, bounds of SOC are considered as soft constraints using an interior penalty function.}
First, let us choose a proper interior point penalty function.
\begin{lemma} \label{lemma:LQT}
    Let $f^\tx{LQT}_\tx{p}(\check x, x)$ be an interior point penalty function satisfying
\begin{subequations}  \label{eq:conditions_LQT} 
\begin{align}
    &\lim_{x \rightarrow x_\tx{max}} f^\tx{LQT}_\tx{p}(\check x,x) = \infty,\\
    &\lim_{x \rightarrow x_\tx{min}} f^\tx{LQT}_\tx{p}(\check x,x)= \infty,\\
    &\frac{\tx{d}^2 f^\tx{LQT}_\tx{p}}{\tx{d}x^2}(\check x,x) > 0, \quad \forall x \in [x_\tx{min},x_\tx{max}], \label{eq:LQT_penalty_convex}\\
    &\frac{\tx{d}f^\tx{LQT}_\tx{p}}{\tx{d}x}(\check x,\check x) = 0. \label{eq:LQT_penalty_min_at_ref}
\end{align} 
\end{subequations}%
    Then, function
\begin {align}
\begin{split}
   f_\tx{p}^\tx{LQT}(\cdot)&=-q_\tx{p}\bigg( (x_\tx{max}-\check{x})\log\left(\frac{x_\tx{max}-x}{x_\tx{max}-\check{x}}\right) \\
   &+(\check{x}-x_\tx{min})\log\left(\frac{x-x_\tx{min}}{\check{x}-x_\tx{min}}\right) \bigg) \end{split} \label{eq:LQT_penalty}
\end{align}
with $q_\tx{p}> 0$, is such an interior point function.
\end{lemma}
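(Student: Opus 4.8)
The plan is to verify the four defining properties \eqref{eq:conditions_LQT} directly by substitution and differentiation, since the lemma only asserts that the explicit candidate \eqref{eq:LQT_penalty} is an admissible interior point penalty. Throughout I would write $g(x)$ for the bracketed expression, so that $f_\tx{p}^\tx{LQT} = -q_\tx{p}\, g(x)$, and I would use that the saturated reference satisfies $x_\tx{min} < \check x < x_\tx{max}$ strictly (guaranteed by the saturation with $\epsilon > 0$ introduced in Section~\ref{sec:objectives}), so that both coefficients $x_\tx{max}-\check x$ and $\check x - x_\tx{min}$ are strictly positive and all logarithms are well defined on the open interval.

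For the two barrier conditions in \eqref{eq:conditions_LQT}, I would observe that as $x \to x_\tx{max}$ the factor $\log\big((x_\tx{max}-x)/(x_\tx{max}-\check x)\big)$ tends to $-\infty$ while the second logarithmic term stays finite; multiplying by the positive coefficient $x_\tx{max}-\check x$ and by the overall $-q_\tx{p}$ then yields $f_\tx{p}^\tx{LQT} \to +\infty$. The limit at $x \to x_\tx{min}$ follows symmetrically from the blow-up of the second logarithmic term, with the first term remaining bounded.

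For the strict convexity \eqref{eq:LQT_penalty_convex} I would differentiate $g$ twice. The first derivative is $g'(x) = -\tfrac{x_\tx{max}-\check x}{x_\tx{max}-x} + \tfrac{\check x - x_\tx{min}}{x - x_\tx{min}}$, and differentiating once more gives $g''(x) = -\tfrac{x_\tx{max}-\check x}{(x_\tx{max}-x)^2} - \tfrac{\check x - x_\tx{min}}{(x-x_\tx{min})^2}$, a sum of two strictly negative terms on $(x_\tx{min},x_\tx{max})$. Hence $\tfrac{\tx{d}^2 f_\tx{p}^\tx{LQT}}{\tx{d}x^2} = -q_\tx{p}\, g''(x) > 0$. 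For the stationarity \eqref{eq:LQT_penalty_min_at_ref} I would evaluate $g'$ at $x = \check x$, where both fractions equal $1$ with opposite signs, so $g'(\check x) = 0$ and therefore $\tfrac{\tx{d} f_\tx{p}^\tx{LQT}}{\tx{d}x}(\check x,\check x) = 0$.

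There is no substantive obstacle; the computation is routine. The one point requiring care is the strict interiority $x_\tx{min} < \check x < x_\tx{max}$: it is precisely what makes the coefficients positive and thereby drives both the barrier property and the sign of $g''$, so I would state this assumption explicitly at the outset rather than leave it implicit.
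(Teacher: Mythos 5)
Your proposal is correct and follows essentially the same route as the paper's proof, which simply states that the result "can be easily verified by replacing $x$ with $\check{x}$, $x_\tx{min}$ or $x_\tx{max}$ into the penalty function"; you carry out exactly this direct verification, additionally writing out the first and second derivatives needed for \eqref{eq:LQT_penalty_convex} and \eqref{eq:LQT_penalty_min_at_ref} and making explicit the strict interiority $x_\tx{min}<\check x<x_\tx{max}$ that the paper leaves implicit.
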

\begin{proof} 
    The proof can be easily verified by replacing $x$ with $\check{x}$, $x_\tx{min}$ or $x_\tx{max}$ into the penalty function. 
\end{proof}

Condition \eqref{eq:LQT_penalty_convex} ensures that $f_\tx{p}^\tx{LQT}$ is strictly convex, as defined in problem \eqref{eq:op_LQT}, while \eqref{eq:LQT_penalty_min_at_ref} ensures that its minimum is obtained at $\check x$. An illustration of this penalty function is provided in Fig.~\ref{fig:LQT_penalty}.
\begin{figure}[tbp]
    \centering
    \includegraphics[width=0.93\linewidth]{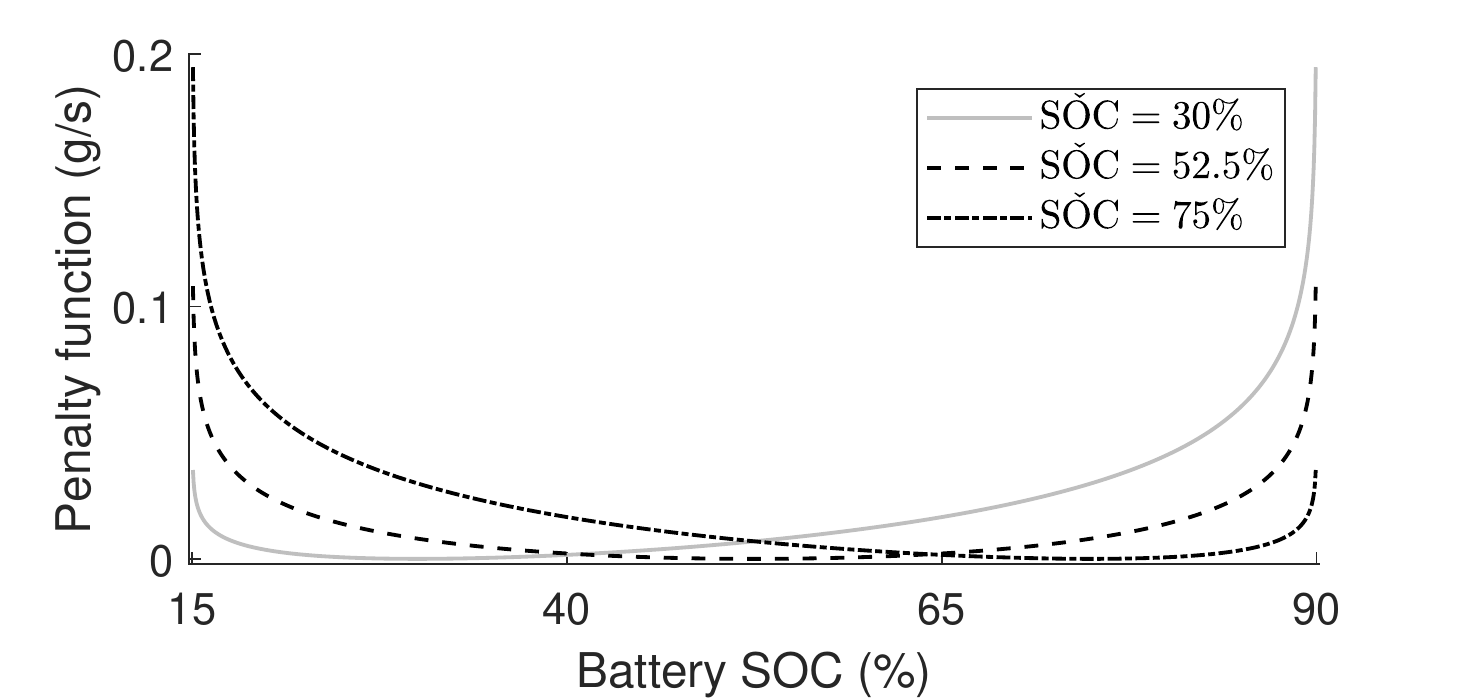}
    \caption{Penalty function used in problem \eqref{eq:op_LQT} for three different values of reference SOC. %The function goes to infinity when the SOC goes to the bounds. 
    Battery SOC is allowed to vary within 15-\SI{90}{\%}.}
    \label{fig:LQT_penalty}
\end{figure}

Next, problem \eqref{eq:op_LQT} should be written in a general linear quadratic form. 
It can be observed that it already has linear dynamics as in LQR/LQT, but its objective is not quadratic. Similarly as in ECMS, we use the second order Taylor approximation of fuel consumption in \eqref{eq:taylor2}. 
    It can be rewritten as
\begin {equation}
    \tilde \mu_\tx{fuel} =\tilde{a}_{0} -\frac{\tilde{a}_{1}^2}{4\tilde{a}_{2}} + \tilde{a}_{2} \left(u-u_\tx{max}+\frac{\tilde{a}_{1}}{2\tilde{a}_2}\right)^2
    \label{eq:taylor_LQT2}
\end {equation}
    where the first and second terms in \eqref{eq:taylor_LQT2} are constants that can be removed from the objective as they do not affect the optimal solution.
 
  By using \eqref{eq:taylor_LQT2} and Taylor expansion of the penalty function $f_\tx{p}^\tx{LQT}$ about $\check{x}$, problem \eqref{eq:op_LQT} can be written as   
{\allowdisplaybreaks
\begin{subequations} \label{eq:op_LQT_1}
\begin {align}
    \min_u \, & \frac{1}{2}\sum_{k=0}^{k_f-1} \ \bigl( Q(x(k)-\check{x})^2
    +R(u(k)-\check{u})^2\bigr) \\
    \tx{s.t.: } &  x(k+1) = Ax(k) +Bu(k), \quad x(0)=x_\tx{m}
    \label{eq:LQT_state}
\end{align}
\end{subequations}
}%
    where
\begin {align*}
    &\check{u}=u_\tx{max}-\frac{\tilde{a}_{1}}{2\tilde{a}_2}, \quad A=1, \quad B=-\frac{T_\tx{s}}{E_\tx{Bmax}}, \quad R=2\tilde{a}_2,\\
    %&R=2\tilde{a}_2,\\
    &Q=2q_\tx{SOC}+q_\tx{p}\frac{x_\tx{max}-x_\tx{min}} {(x_\tx{max}-\check{x})(\check{x}-x_\tx{min})}.
\end {align*}
     The optimal control of problem \eqref{eq:op_LQT_1} is calculated as
\begin {align}
    u^o(k)=\frac{B \bar P}{R+B^2 \bar P}\left(\check{x}-x(k)\right)
  \label{eq:uss_LQT_1}
\end {align}
    where $\bar P$ is obtained by solving the Riccati equation,
\begin {align}
    \bar P^2-Q \bar P-\frac{QR}{B^2} =0.
    \label{eq:p_1}
\end {align}
    Solving procedure of this problem is presented in Lemma  \ref{lemma:LQTsolving} in Appendix \ref{sec:lemmas}.
    Then, the optimal solution of the constrained problem and optimal actuator torques are calculated using \eqref{eq:constrained_control} and Proposition \ref{propos:torques}.
\begin{remark}
    At every time instance, the infinite horizon LQT in \eqref{eq:op_LQT_1} is solved. Because the reference value of ICE/EM speed, reference SOC and the demanded torque can change at every instant, the weighting coefficients $Q$ and $R$ and therefore the feedback gain may also change.         
\end{remark}
%%%%%%%%%%%%%%%%%%%%%%%%%%%%%%%%%%%%%%%%%%%%%%%%%%%%%%%%%%%%%%%%%%%%%%%%%%%%%%%	
\subsection{Properties of the designed controllers}
\label{sec:comparison_controllers} 
    In this section we discuss some properties of the adaptive ECMS and LQT controllers.
\begin{theorem}
    When the demanded torque satisfies
    \[ M_\tx{Emin}+M_\tx{Meq}\leq \check{M}_\tx{dem} \leq M_\tx{Emax}+M_\tx{Meq}\]
    the optimal ECMS control is nonpositive if $s_\tx{B} \geq s_\tx{B0}$ and is positive if $s_\tx{B} < s_\tx{B0}$, where  
    \begin{align}
    \begin{split}
    s_\tx{B0}=&-\frac{\partial \mu_\tx{fuel}}{\partial u}(u=0)\\
    =&\frac{a_1+2a_2\max(M_\tx{Emin}(\check\omega), \bar M_\tx{dem}(\check\omega) - M_\tx{Meq}(\check\omega))}{\sqrt{d_1^2-4d_2 (d_0 +P_\tx{aux})}}.
     \end{split} \label{eq:sb0}
\end{align}
\label{theorem:ecms_optimalcontrol}
\end{theorem}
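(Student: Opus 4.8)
The plan is to reduce everything to the sign of the stationary point of the scalar ECMS objective $H(u)=\mu_\tx{fuel}(\check\omega,u)+s_\tx{B}u$. By Lemma~\ref{lemma:convexity}, $\mu_\tx{fuel}(\check\omega,u)$ is convex in $u$, so $H$ is convex and $\partial H/\partial u=\partial\mu_\tx{fuel}/\partial u+s_\tx{B}$ is non-decreasing in $u$. Hence the sign of the unconstrained stationary point $u^o$, defined by \eqref{eq:hamiltoniandiff}, relative to $0$ can be read off from the sign of $\partial H/\partial u$ at $u=0$, and the whole theorem reduces to this single sign test together with a check that the saturation in \eqref{eq:constrained_control} does not change it.

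First I would evaluate $\partial H/\partial u$ at $u=0$. Because $P_\tx{Bel}(0)=0$, \eqref{eq:optimal_EM_torque} gives $M_\tx{M}(\check\omega,0)=M_\tx{Meq}$ and \eqref{eq:ICE_torque_convex} then gives $M_\tx{E}(\check\omega,0)=\max(M_\tx{Emin},\bar M_\tx{dem}-M_\tx{Meq})$. Substituting $u=0$ into the closed-form derivative $\partial\mu_\tx{fuel}/\partial u$ obtained in the proof of Lemma~\ref{lemma:s_b} collapses the factor $2R_\tx{B}u-U_\tx{oc}^2$ to $-U_\tx{oc}^2$ and the radicand to $d_1^2-4d_2(d_0+P_\tx{aux})$, yielding $\partial\mu_\tx{fuel}/\partial u|_{u=0}=-s_\tx{B0}$ with $s_\tx{B0}$ exactly as in \eqref{eq:sb0}, so $\partial H/\partial u|_{u=0}=s_\tx{B}-s_\tx{B0}$. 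Monotonicity of the derivative then gives the two cases directly: if $s_\tx{B}\ge s_\tx{B0}$ the derivative is already nonnegative at the origin, so its zero satisfies $u^o\le 0$; if $s_\tx{B}<s_\tx{B0}$ the derivative is still negative at the origin, so $u^o>0$.

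The remaining, and I expect hardest, step is to show that the saturation $u^*=\max(\min(u^o,u_\tx{max}),u_\tx{min})$ in \eqref{eq:constrained_control} preserves the sign of $u^o$, which holds precisely when $0\in[u_\tx{min},u_\tx{max}]$, and this is where the hypothesis $M_\tx{Emin}+M_\tx{Meq}\le\check M_\tx{dem}\le M_\tx{Emax}+M_\tx{Meq}$ is used. Since $u=0$ corresponds to $M_\tx{M}=M_\tx{Meq}$, I would verify $\hat M_\tx{Mmin}\le M_\tx{Meq}\le\hat M_\tx{Mmax}$ for the reflected limits in \eqref{eq:EM_torque_min}--\eqref{eq:EM_torque_max}: the bound $M_\tx{Meq}\ge -d_1/(2d_2)$ is immediate from its definition, the natural feasibility assumption places $M_\tx{Meq}$ in $[M_\tx{Mmin},M_\tx{Mmax}]$, the right half of the hypothesis gives $\check M_\tx{dem}-M_\tx{Emax}\le M_\tx{Meq}$ which dominates the inner $\min$ in \eqref{eq:EM_torque_min}, and the left half gives $M_\tx{Meq}\le\check M_\tx{dem}-M_\tx{Emin}$ which dominates the inner $\max$ in \eqref{eq:EM_torque_max}. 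Because the battery chemical power is increasing in EM torque (incremental passivity, Lemma~\ref{lemma:batterypower}), the reflected terms in \eqref{eq:u_min} and \eqref{eq:u_max} evaluated at $\hat M_\tx{Mmin}$ and $\hat M_\tx{Mmax}$ then sit below and above the value $0$ attained at $M_\tx{Meq}$; together with a strictly interior measured SOC, $x_\tx{min}+\epsilon\le x_\tx{m}\le x_\tx{max}-\epsilon$, and $P_\tx{Bmin}\le 0\le P_\tx{Bmax}$, this forces $u_\tx{min}\le 0\le u_\tx{max}$.

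Finally I would close the clipping argument. With $0\in[u_\tx{min},u_\tx{max}]$: if $u^o\le 0$ then $\min(u^o,u_\tx{max})=u^o$ and $\max(u^o,u_\tx{min})\le 0$, so $u^*$ is nonpositive; if $u^o>0$ then $\min(u^o,u_\tx{max})>0$ and the outer $\max$ against $u_\tx{min}\le 0$ keeps it positive, so $u^*$ is positive. The only degenerate case is $u_\tx{max}=0$, ruled out whenever the hypothesis holds with the strict inequality $M_\tx{Emin}+M_\tx{Meq}<\check M_\tx{dem}$, which yields $\hat M_\tx{Mmax}>M_\tx{Meq}$ and hence $u_\tx{max}>0$.
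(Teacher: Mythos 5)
Your proposal is correct and takes essentially the same route as the paper's proof: a sign test on $\partial \mu_\tx{fuel}/\partial u + s_\tx{B}$ at $u=0$ (using the monotone derivative from convexity, exactly as invoked via the proof of Lemma~\ref{lemma:s_b}), followed by the observation that the torque hypothesis keeps $0$ inside $[u_\tx{min}(\check\omega),u_\tx{max}(\check\omega)]$ so the clipping in \eqref{eq:constrained_control} preserves the sign of $u^o$. Your write-up is in fact more careful than the paper's terse argument, notably in verifying $\hat M_\tx{Mmin}\leq M_\tx{Meq}\leq \hat M_\tx{Mmax}$ explicitly and in flagging the degenerate case $u_\tx{max}=0$, which the paper covers only with the phrase that $u^o$ and $u^*$ ``will have the same sign.''
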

     
\begin{proof}
    As described in the proof of Lemma \ref{lemma:s_b},  $s_\tx{B}$ is non-increasing in $u$. For this reason, if $s_\tx{B} \geq s_\tx{B0}$, then $u^o \leq 0$ and if $s_\tx{B}<s_\tx{B0}$, then $u^o>0$. Because $M_\tx{Emin}+M_\tx{Meq}\leq \check{M}_\tx{dem} \leq M_\tx{Emax}+M_\tx{Meq}$, $u^o$ and $u^*$ will be equal if the optimal control is not on the bound, otherwise  according to \eqref{eq:u_min} and \eqref{eq:u_max} they will have the same sign.
\end{proof}

\begin{theorem}
    When demanded torque satisfies
    \[ M_\tx{Emin}+M_\tx{Meq}\leq \check{M}_\tx{dem} \leq M_\tx{Emax}+M_\tx{Meq}\]
    the optimal LQT control is nonpositive if $x(k) \leq \check{x}$ and
    is positive if $x(k)> \check{x}$.
    \label{theorem:uo}
\end{theorem}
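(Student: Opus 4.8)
The plan is to exploit the closed-form feedback law \eqref{eq:uss_LQT_1} together with the scalar Riccati equation \eqref{eq:p_1}, show that the gain multiplying $(\check x - x(k))$ is strictly negative, and then read off the claimed sign pattern. As in the proof of Theorem~\ref{theorem:ecms_optimalcontrol}, the demanded-torque hypothesis enters only at the very end, to transfer the sign from the unconstrained control $u^o$ to the saturated control $u^*$.

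First I would fix the signs of all parameters appearing in \eqref{eq:op_LQT_1}. By construction $A=1>0$ and $B=-T_\tx{s}/E_\tx{Bmax}<0$. By Lemma~\ref{lemma:convexity} the fuel consumption is convex in $u$, so its second Taylor coefficient is positive and $R=2\tilde a_2>0$. Since $q_\tx{SOC}\ge 0$, $q_\tx{p}>0$, and $x_\tx{min}<\check x<x_\tx{max}$, the product $(x_\tx{max}-\check x)(\check x-x_\tx{min})$ is positive, whence $Q>0$.

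Next I would solve \eqref{eq:p_1}. Its roots are $\bar P=\tfrac12\big(Q\pm\sqrt{Q^2+4QR/B^2}\big)$, and since $Q,R>0$ and $B^2>0$ the term under the square root exceeds $Q^2$, so exactly one root is positive. Selecting the positive (stabilizing) solution gives $\bar P=\tfrac12\big(Q+\sqrt{Q^2+4QR/B^2}\big)>0$. Consequently the feedback gain in \eqref{eq:uss_LQT_1} has numerator $B\bar P<0$ (a negative times a positive number) and denominator $R+B^2\bar P>0$, so $B\bar P/(R+B^2\bar P)<0$. Writing $u^o(k)=\kappa(\check x-x(k))$ with $\kappa<0$, it follows that $u^o(k)\le 0$ exactly when $\check x-x(k)\ge 0$, i.e. $x(k)\le\check x$, and $u^o(k)>0$ exactly when $x(k)>\check x$.

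Finally I would lift the sign statement from $u^o$ to the constrained control $u^*$ defined by the saturation \eqref{eq:constrained_control}. Here the hypothesis $M_\tx{Emin}+M_\tx{Meq}\le\check M_\tx{dem}\le M_\tx{Emax}+M_\tx{Meq}$ does the work: it guarantees that $u=0$ (equilibrium battery power, with the EM supplying $M_\tx{Meq}$ for the auxiliaries and the ICE covering the remaining torque within $[M_\tx{Emin},M_\tx{Emax}]$) is feasible, hence $u_\tx{min}\le 0\le u_\tx{max}$ by \eqref{eq:u_min}--\eqref{eq:u_max}. Saturating $u^o$ to $[u_\tx{min},u_\tx{max}]$ therefore cannot change its sign, so $u^*$ inherits the sign of $u^o$, mirroring the argument in Theorem~\ref{theorem:ecms_optimalcontrol}. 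I expect the only delicate step to be selecting the correct Riccati root and tracking the sign of $B$; once $\bar P>0$ and $\kappa<0$ are in hand, the conclusion is immediate.
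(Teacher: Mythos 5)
Your proof is correct and follows essentially the same route as the paper: establish $B<0$, $R>0$, $\bar P>0$ so that the feedback gain in \eqref{eq:uss_LQT_1} is negative, read off the sign of $u^o$ from $\check x - x(k)$, and then use the demanded-torque hypothesis together with \eqref{eq:u_min}--\eqref{eq:u_max} to transfer the sign to the saturated control $u^*$. The only difference is that you spell out details the paper states tersely or defers to the subsequent Remark (the explicit Riccati root selection giving $\bar P>0$, the positivity of $Q$ and $R$, and the feasibility of $u=0$ implying $u_\tx{min}\leq 0 \leq u_\tx{max}$), which strengthens rather than changes the argument.
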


\begin{proof}    
    In \eqref{eq:uss_LQT_1}, $B<0$, $\bar P>0$ and $R>0$. Therefore, $u^o(k) \leq 0$ 
    if $x(k) \leq \check{x}$, and $u^o(k) > 0$ 
    if $x(k)>\check{x}$. Because $M_\tx{Emin}+M_\tx{Meq}\leq \check{M}_\tx{dem} \leq M_\tx{Emax}+M_\tx{Meq}$, $u^o$ and $u^*$ will be equal if the optimal control is not on the bound, otherwise  according to \eqref{eq:u_min} and \eqref{eq:u_max} they will have the same sign.  
\end{proof}   

\begin{remark} 
    The term $\check{u}=u_\tx{max}-\tilde{a}_{1}/(2\tilde{a}_2)$ does not directly appear in \eqref{eq:uss_LQT_1} 
    but it has an effect on $R$ and therefore $u^o$. This is because $\bar P>0$ and \eqref{eq:p_1} has one acceptable solution,
\begin {align}
    \bar P=\frac{1}{2}\Bigl(Q+\sqrt{Q^2+4\frac{QR}{B^2}}\Bigr). 
    \label{eq:riccatisolving}
\end {align}
    Using \eqref{eq:uss_LQT_1} and \eqref{eq:riccatisolving} results in
\begin {align}
    u^o(k)=\frac{B\Bigl(Q+\sqrt{Q^2+4\frac{QR}{B^2}}\Bigr)}{2R+B^2\Bigl(Q+\sqrt{Q^2+4\frac{QR}{B^2}}\Bigr)}
    \left(\check{x}-x(k)\right).
\end {align} 
\end{remark}  
%%%%%%%%%%%%%%%%%%%%%%%%%%%%%%%%%%%%%%%%%%%%%%%%%%%%%%%%%%%%%%%%%%%%%%%%%%%%%%%	
\subsection{Robustness of the controllers} \label{sec:robustness}
    In this section we calculate the value of $\epsilon$ such that robustness to state measurement noise is provided.
    
    Let $x_\tx{t}$ 
    denote the true state value and 
    assume that maximum measurement error while evaluating the controller is limited to 
\begin{equation}
    |x_\tx{t}-x_\tx{m}|\leq \beta.  \label{eq:measurement_noise}   
\end{equation}
    To achieve robustness it should be guaranteed that $x_\tx{t}$ and $x_\tx{m}$ are not less than $x_\tx{min}$ or greater than $x_\tx{max}$.
    
    The true state value at instant $k=1$ may violate its lower bound if the following conditions are satisfied
\begin{subequations}
\begin{align}
    &x_\tx{m} > x_\tx{min}+\epsilon,
    \label{eq:measured_state}\\
    &x_\tx{min}<x_\tx{t}(0) < x_\tx{min}+\epsilon,\\
    &u^*(0)>0
\end{align}
\end{subequations}
    and, similarly, may violate its upper bound if
\begin{subequations}
\begin{align}
    &x_\tx{m}< x_\tx{max}-\epsilon,
    \label{eq:measured_state_upper}\\
    &x_\tx{max}-\epsilon<x_\tx{t}(0) < x_\tx{max},\\
    &u^*(0)<0.
\end{align}
\end{subequations}
    To avoid violating the state bounds, we calculate $\epsilon$ such that both true and measured value are within bounds
\begin{align}
   x_\tx{min} \leq x_i \leq x_\tx{max}, \quad i\in\{\tx{t}, \tx{m}\}. 
\end{align}
    According to \eqref{eq:u_min} and \eqref{eq:u_max}, $x_\tx{min}+\epsilon \leq x(1) \leq x_\tx{max}-\epsilon$, then $x_\tx{t}(1)$ can have its minimum value when
\begin{subequations}
\label{eq:xt_min}
\begin {align}
    &x(1)=x_\tx{min}+\epsilon,\\
    &x_\tx{m}-x_\tx{t}(0)= \beta 
\end {align}
\end {subequations}
and it will have its maximum value if
\begin{subequations}
\label{eq:xt_max}
\begin {align}
    &x(1)=x_\tx{max}-\epsilon,\\
    &x_\tx{t}(0)-x_\tx{m}= \beta. 
\end {align}
\end {subequations}
    We have 
\begin {align}
    &x_\tx{t}(1)-x_\tx{t}(0)=x(1)-x_\tx{m}.
    \label{eq:xt}
\end {align}
    Substituting conditions \eqref{eq:xt_min} and \eqref{eq:xt_max} in \eqref{eq:xt} results in
\begin {align}
    &x_\tx{tmin}(1)=x_\tx{min}+\epsilon-\beta
    \label{eq:xt_min_2}\\
    &x_\tx{tmax}(1)=x_\tx{max}-\epsilon+\beta.
    \label{eq:xt_max_2}
\end {align}
    By considering \eqref{eq:measurement_noise}, the measured state when the controller is re-evaluated at next instant is bounded as   
\begin{align}                    
    x_\tx{min}+
    \epsilon-2\beta \leq x_\tx{m} \leq  x_\tx{max}-\epsilon+2\beta.
\end{align}
    To achieve robustness, it should hold
%\begin{subequations}
\begin{align}
     &x_\tx{min}+
    \epsilon-2\beta\geq x_\tx{min}, \quad
    x_\tx{max}-\epsilon+2\beta\leq x_\tx{max}
\end{align}
%\end{subequations}
    which can be achieved when
\begin{align}
    \epsilon \geq 2\beta.
    \label{eq:robustness}
\end{align}
\begin{remark}
    If $x_\tx{m} < x_\tx{min}+\epsilon $, then 
\begin{align}
    u_\tx{max}
    =\frac{(x_\tx{m}-x_\tx{min}-\epsilon)E_\tx{Bmax}}{T_\tx{s}} < 0
\end {align}
    and the battery is charged, thus moving away from its lower bound. If $x_\tx{m} > x_\tx{max}-\epsilon $, then 
\begin{align}
    u_\tx{min}
    =\frac{(x_\tx{m}-x_\tx{max}+\epsilon)E_\tx{Bmax}}{T_\tx{s}} > 0
\end {align}
and battery is discharged, thus moving away from its upper bound.
\end{remark}
%%%%%%%%%%%%%%%%%%%%%%%%%%%%%%%%%%%%%%%%%%%%%%%%%%%%%%%%%%%%%%%%%%%%%%%%%%%%%%%	
\section{simulation} \label{sec:simulation}
    In this section, simulation results of adaptive proportional ECMS and LQT are presented. Vehicle and controllers parameters are listed in \tab{tab:parameters} and \tab{tab:controllerparameters}, respectively. The penalty coefficients used in the design of the controllers are chosen by considering a compromise between delivering demanded torque, minimizing fuel consumption and tracking battery SOC. 
    To achieve robustness to state measurement noise, $\epsilon$ is chosen by considering \eqref{eq:robustness}.
    
\begin{table}[tbp]
\centering
\caption{Vehicle parameters.}
{\footnotesize
\begin{tabular}{p{1.5cm} p{1.5cm} p{1.5cm} p{1.3cm}}
    \hline
    Parameter& Value & Parameter& Value\\
   \hline  
 $R_\tx{B}$ & \SI{0.2509}{\Omega} & $E_\tx{Bmax}$ & \SI{36}{MJ} \\
  $U_\tx{oc}$ & \SI{600}{V} & $P_\tx{aux}$ & \SI{2.5}{kW} \\
    \hline
    %$a_0(\omega)$ & \multicolumn{3}{l}{$5.8628*10^{-4}+3.5736*10^{-15}\omega^5$}  \\ 
    %$a_1(\omega)$ & \multicolumn{3}{l}{$4.8700*10^{-8}\omega$}  \\ 
    %$a_2(\omega)$ & \multicolumn{3}{l}{$2.1070*10^{-12}\omega$}  \\ 
    %$d_0(\omega)$ & \multicolumn{3}{l}{$1.6471*10^{3}+10.0618\omega+2.1069*10^{-6}\omega^3$}  \\
    %& \multicolumn{3}{l}{$+8.2726*10^{-11}\omega^5$}  \\
    %$d_1(\omega)$ & \multicolumn{3}{l}{$5.8628*10^{-4}\omega$}  \\ 
    %$d_2(\omega)$ & \multicolumn{3}{l}{$5.8628*10^{-4}\omega$}  \\ \hline
\end{tabular}}%
\label{tab:parameters}
\end{table}

\begin{table}[tbp]
\centering
\caption{Controllers parameters.}
{\footnotesize
\begin{tabular}{p{1.5cm} p{1.5cm} p{1.5cm} p{1.3cm}}
    \hline
    Parameter& Value & Parameter& Value\\
   \hline  
   
    $\tx{SOC}_\tx{min}$ & 15\% & $T_\tx{s}$ & \SI{0.02}{s} \\
    $\tx{SOC}_\tx{max}$ & 90\% & $q_\tx{p}$ & \SI{0.24}{g/s}\\
  $\tx{SOC}(0)$ & 65\%& $q_\tx{SOC}$ & \SI{15}{g/s}\\
  $\beta$ & 0.2\%  & $K_\tx{p1}$& \SI{1.95}{mg/kJ}\\
    %\hline
      $\epsilon$ & 0.5\%&  $K_\tx{p2}$& \SI{37.8}{mg/kJ}\\
    \hline
\end{tabular}}%
\label{tab:controllerparameters}
\end{table}
    The effectiveness of the designed controllers are shown through three examples. 
    The first example is a general case where all reference trajectories can change at any time instant similar to practical applications. The second and the third examples are special cases with the aim of comparing the adaptive ECMS control and LQT method 
    %\bl{while constant speed and time-varying positive and negative reference torque and in case of a lack of input data or supervisory control  } 
    where the speed is constant and the reference trajectory for demanded torque is time-varying positive or negative, in the second and the third example, respectively.
    %according to Theorem~\ref{theorem:ecms_optimalcontrol} and Theorem~\ref{theorem:uo}.
    %%%%%%%%%%%%%%%%
    \subsection{Time-varying reference trajectories for speed and torque}
    %The first example is a general case where all reference trajectories can change at any time instant.
    For the first example, the reference trajectories of speed and demanded torque between the EM and the gearbox provided by supervisory controller and/or driver are shown in \fig{fig:speed_torque}. For the ECMS control, the reference equivalent factor, $\check{s}_\tx{B}$, is \SI{50.1537}{mg/kJ}.
    The reference trajectory of battery SOC and the SOC obtained by using ECMS and LQT control are depicted in \fig{fig:soc} and  undelivered demanded torque is shown in \fig{fig:undeliveredtorque}. Furthermore, simulation results are summarized in \tab{tab:comparison1}. The results indicate that demanded torque is not delivered  completely at some time instants after \SI{790.5}{s}. This is because the reference SOC is near its lower bound at time instants when the  demanded torque is positive and cannot be delivered by the ICE only.
\begin{figure}[tbp]
    \centering
    \subfigure[Reference trajectories of ICE/EM speed and demanded torque%between the EM and the gearbox
    .]{
    \includegraphics[width=0.9\linewidth]{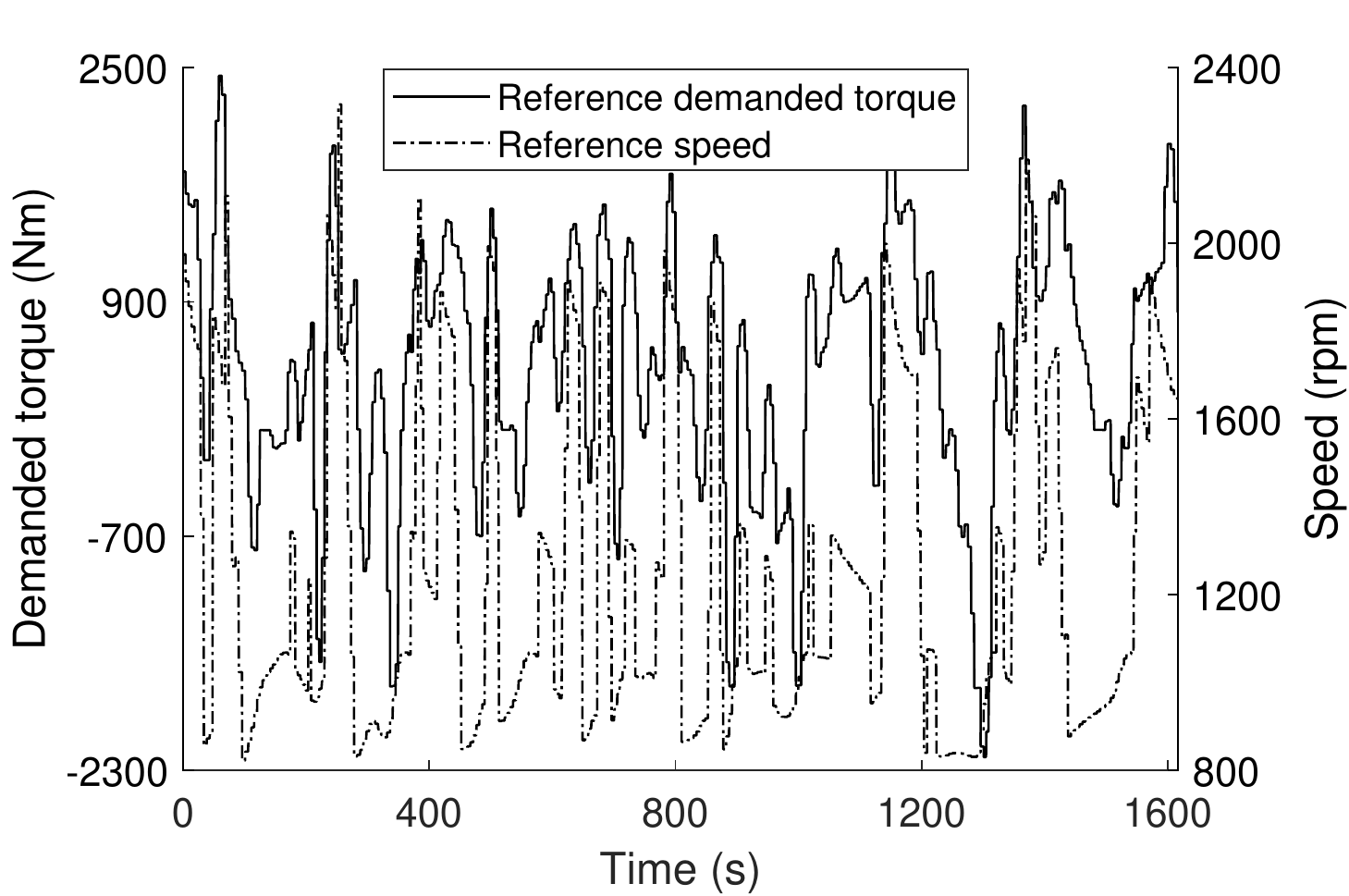}
    %\caption{}
    \label{fig:speed_torque}
    }
%\end{figure}
%\begin{figure}[tbp]
 %   \centering
    \subfigure[Battery SOC obtained by ECMS and LQT.]{
    \includegraphics[width=0.9\linewidth]{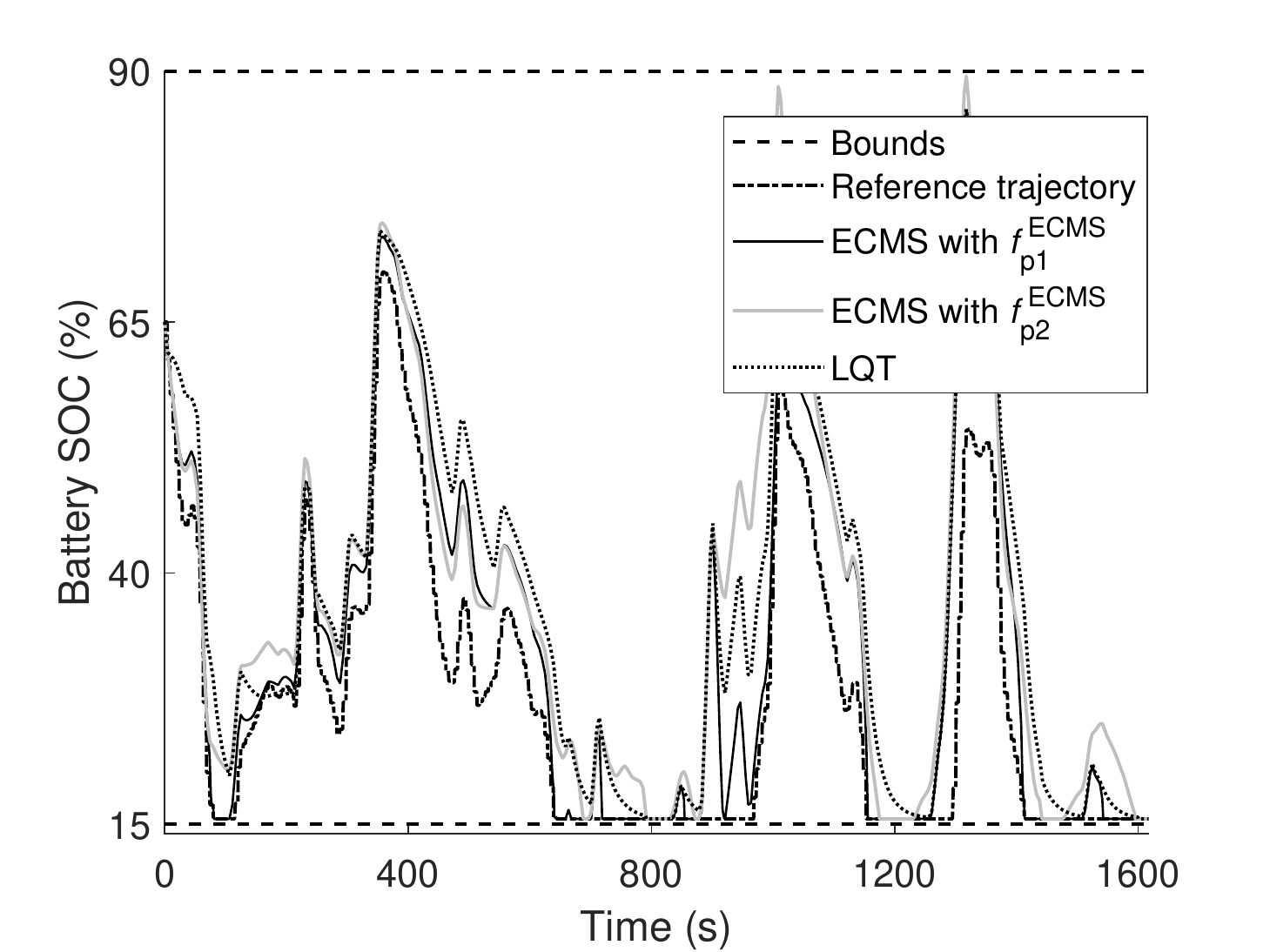}
    % \caption{ in the first example.}
    \label{fig:soc}
    }
% \end{figure}
% \begin{figure}[tbp]
    % \centering
    \subfigure[Undelivered demanded torque by using ECMS and LQT.]{
    \includegraphics[width=0.9\linewidth]{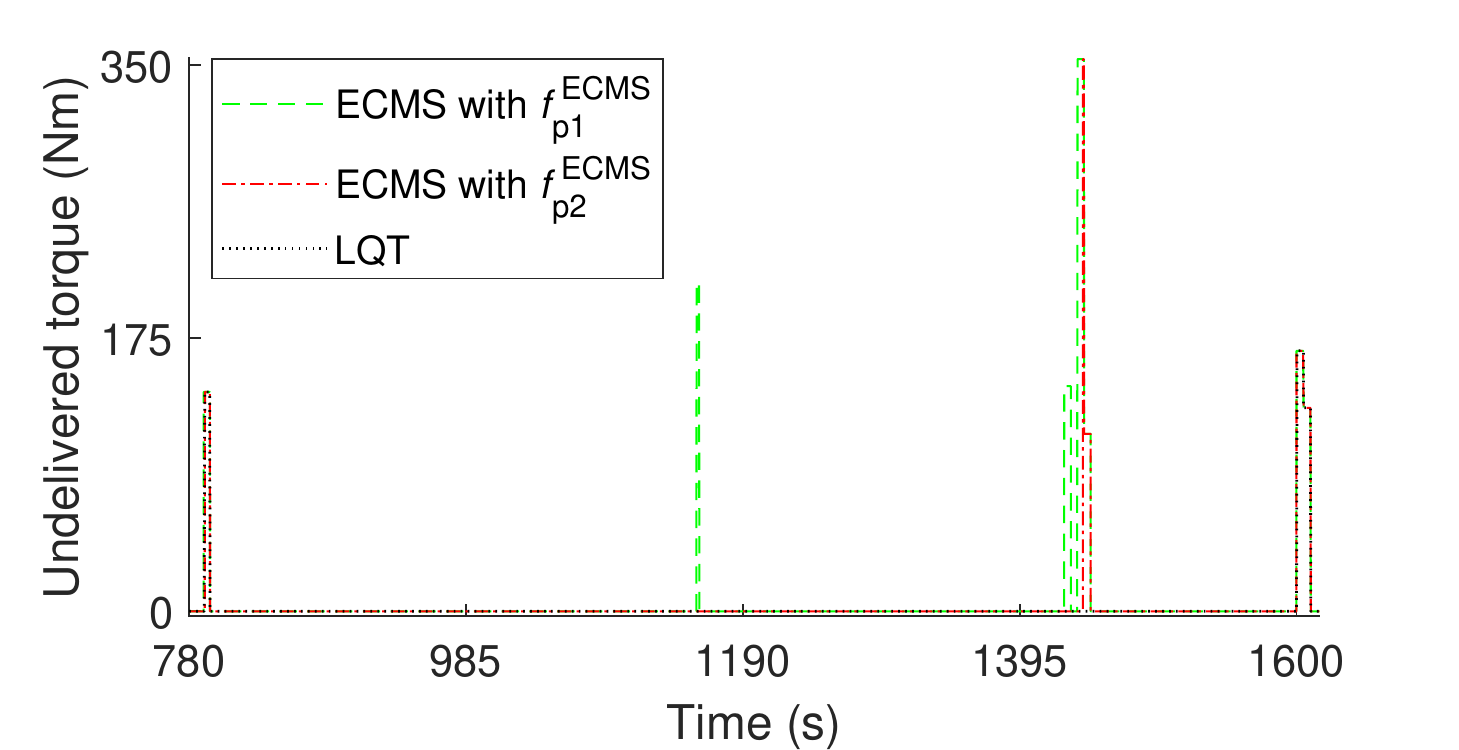}
    % \caption{ in the first example.}
    \label{fig:undeliveredtorque}
    }
    \caption{SOC trajectories and undelivered torque in the first example where demanded speed and torque can change at any time instant.}
    \label{fig:example1}
\end{figure}
 
 \begin{table}[tbp]
    \centering
    \caption{Comparison of the first example simulation results.}
    {\footnotesize
 \begin{tabular}{p{2.2cm} p{1.3cm} p{2cm}}
    \hline
    Controller& 
    Delivered torque (\%)
    & Average fuel consumption (g/s)\\
    \hline
     ECMS with $f^\tx{ECMS}_\tx{p1}$ & 99.26 & 5.5549\\
    %\hline
     ECMS with $f^\tx{ECMS}_\tx{p2}$ & 99.62 & 5.5409\\
    %\hline
     LQT   & 99.73 & 5.5510 \\
    \hline
 \end{tabular}}%
    \label{tab:comparison1}
\end{table}
 %%%%%%%%%%%%%%%%
    \subsection{Time-varying positive torque reference and constant speed}
    %The second and the third examples are special cases with the aim of comparing the adaptive ECMS control and LQT method according to Theorem~\ref{theorem:ecms_optimalcontrol} and Theorem~\ref{theorem:uo}. 
    In the second example, the ICE/EM speed is chosen constant at \SI{1974}{rpm} and the other parameters are the same as those given in \tab{tab:controllerparameters}. 
    Furthermore, ECMS control is considered in two cases. In case 1, the reference equivalent factor is the same as that used in the first example. In case 2, it is changed in order to improve the ECMS controllers performance.
    
    The reference trajectory of the demanded torque for this example is depicted in \fig{fig:demandedtorque2}.
    As shown in this figure, the torque is  positive and does not satisfy \eqref{eq:deliverable_torque_infinite_horizon} at the time interval between \SI{235}{s} and \SI{630}{s} and, hence, there is a risk of not being delivered. The reference SOC should be decreased in this interval but it is intentionally and wrongly  chosen to be constant (for example in case of a lack of input data or supervisory control), as shown in \fig{fig:soc2}. The battery SOC obtained by using the ECMS control in two cases and the LQT method is depicted in \fig{fig:soc2} and the controllers are compared in \tab{tab:comparison2}.
    As shown in \fig{fig:soc2}, by using LQT the battery is charged before reference SOC is changed. This is because $x<\check{x}$,  \eqref{eq:deliverable_torque_infinite_horizon} is satisfied and according to Theorem~\ref{theorem:uo} the optimal control is negative. At the time interval when demanded torque jumps to a higher value and does not satisfy \eqref{eq:deliverable_torque_infinite_horizon} (demanded torque is above the dashed line in \fig{fig:demandedtorque2}), The EM torque limit satisfies
    $\hat M_\tx{Mmin}\geq M_\tx{Meq}$, which follows from \eqref{eq:EM_torque_min}. According to the proof of Lemma \ref{lemma:convexity}, $u$ is monotonically increasing with $M_\tx{M}$ and therefore by considering \eqref{eq:u_min}, $u_\tx{min}$ and then $u^*$ are nonnegative and the battery is discharged.
    This is an example where although the unconstrained optimal LQT control tries to track the reference SOC, the control bounds force a constrained solution that causes the battery to be discharged.
    As a consequence, the demanded torque is completely delivered for the entire time horizon. 
    In comparison, by using ECMS control in case 1, the battery is discharged before reference SOC is changed. This is because $s_\tx{B}< s_\tx{B0}$,   \eqref{eq:deliverable_torque_infinite_horizon} is satisfied and according to Theorem~\ref{theorem:ecms_optimalcontrol} the optimal control is positive. 
    The results show that in this case, the ECMS controllers are not able to deliver the demanded torque at some time instances. 
    Then, in case 2 we increase the reference equivalent factor to \SI{52.9651}{mg/kJ}, with the aim of delivering the demanded torque completely. As shown in \fig{fig:soc2}, in this case, the battery is charged before SOC is changed and therefore, the demanded torque can be completely delivered but the average fuel consumption is increased. 

 \begin{figure}[tbp]
    \centering
    \subfigure[Reference trajectory of demanded torque.]{
    \includegraphics[width=0.92\linewidth]{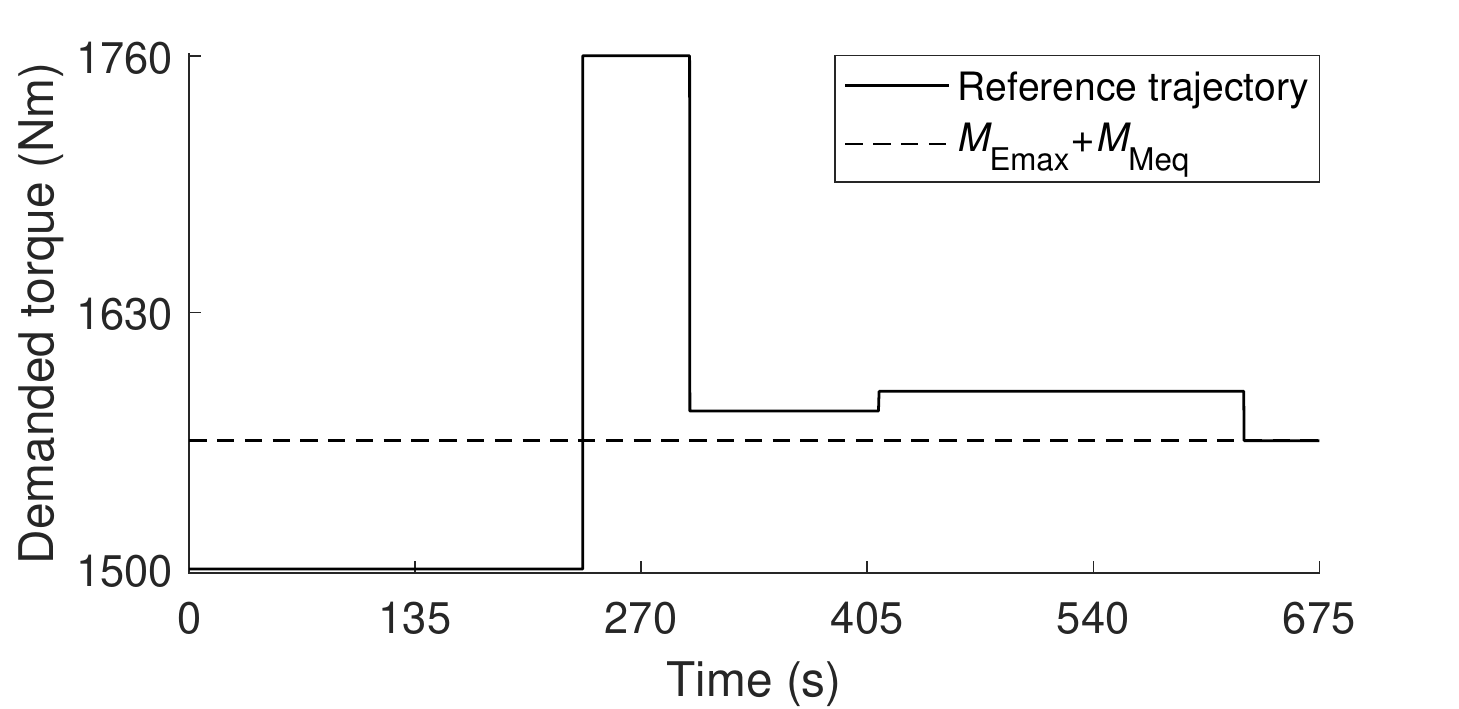}
    % \caption{Reference trajectory of demanded torque in the second example.}
    \label{fig:demandedtorque2}
    }
% \end{figure}
%  
%  \begin{figure}[tbp]
    % \centering
    \subfigure[Battery SOC obtained by ECMS and LQT.]{
    \includegraphics[width=0.92\linewidth]{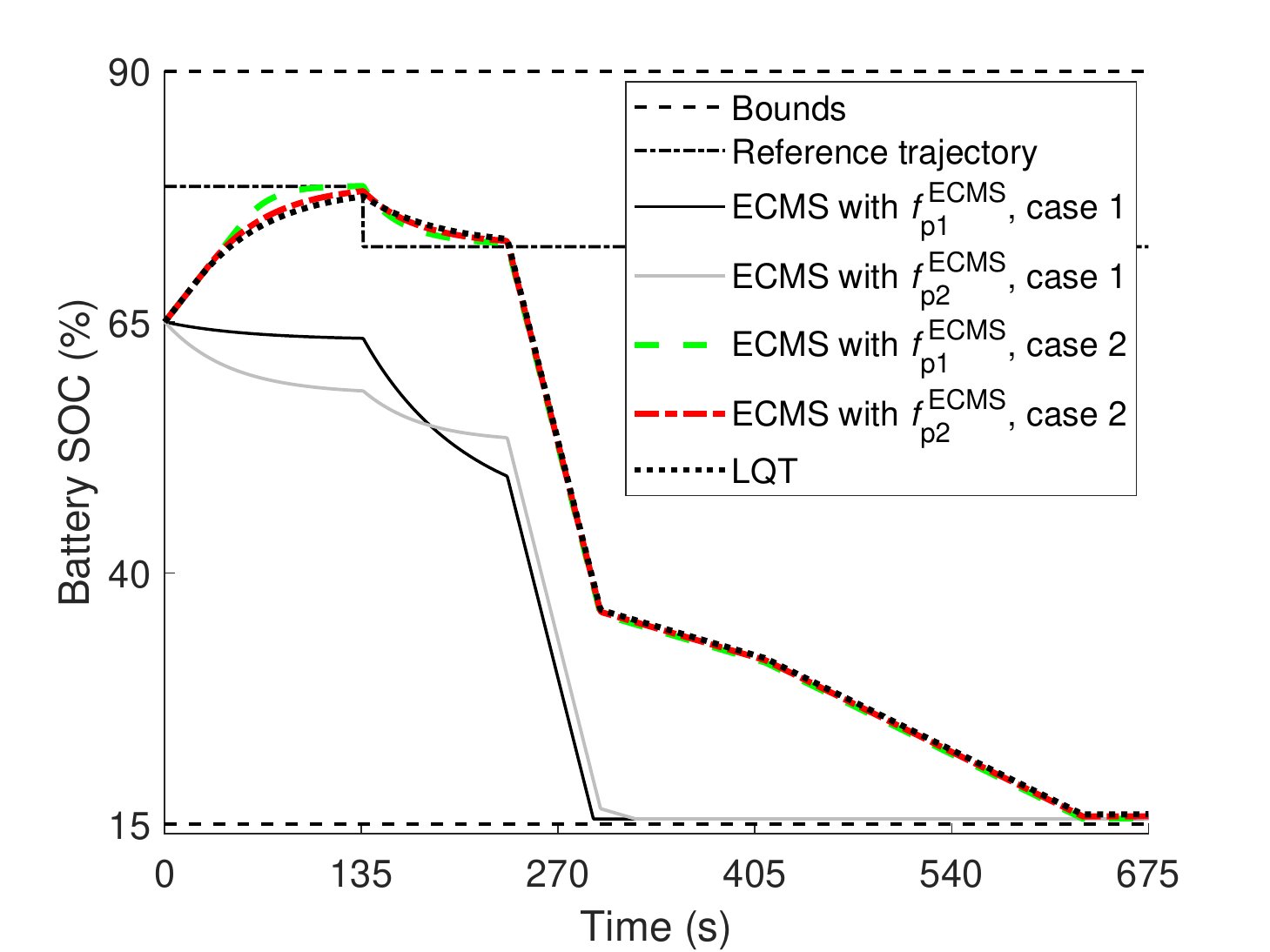}
    % \caption{Battery SOC obtained by ECMS and LQT in the second example.}
    \label{fig:soc2}
    }
    \caption{Battery SOC in the second example where demanded torque is positive, piece-wise constant function.}
    \label{fig:example2}
\end{figure}

\begin{table}[tbp]
    \centering
    \caption{Comparison of the second example simulation results.}
{\footnotesize
 \begin{tabular}{p{0.5cm}|p{2.2cm} p{1.3cm} p{2cm}}
    \hline
    Case & Controller&
    Delivered torque (\%)
    & Average fuel consumption (g/s)\\
    \hline
    1 & ECMS with $f^\tx{ECMS}_\tx{p1}$ & 99.24 & 18.7674\\
    %\hline
    & ECMS with $f^\tx{ECMS}_\tx{p2}$ & 99.36 & 18.7894\\
    2 & ECMS with $f^\tx{ECMS}_\tx{p1}$ & 100 & 18.9043\\
    %\hline
    & ECMS with $f^\tx{ECMS}_\tx{p2}$ & 100 & 18.9054\\
    %\hline
    & LQT   & 100 & 18.9064\\
    \hline
\end{tabular}}%
    \label{tab:comparison2}
\end{table}
    
    \begin{figure}[tbp]
    \centering
    \subfigure[Reference trajectory of demanded torque.]{
    \includegraphics[width=0.92\linewidth]{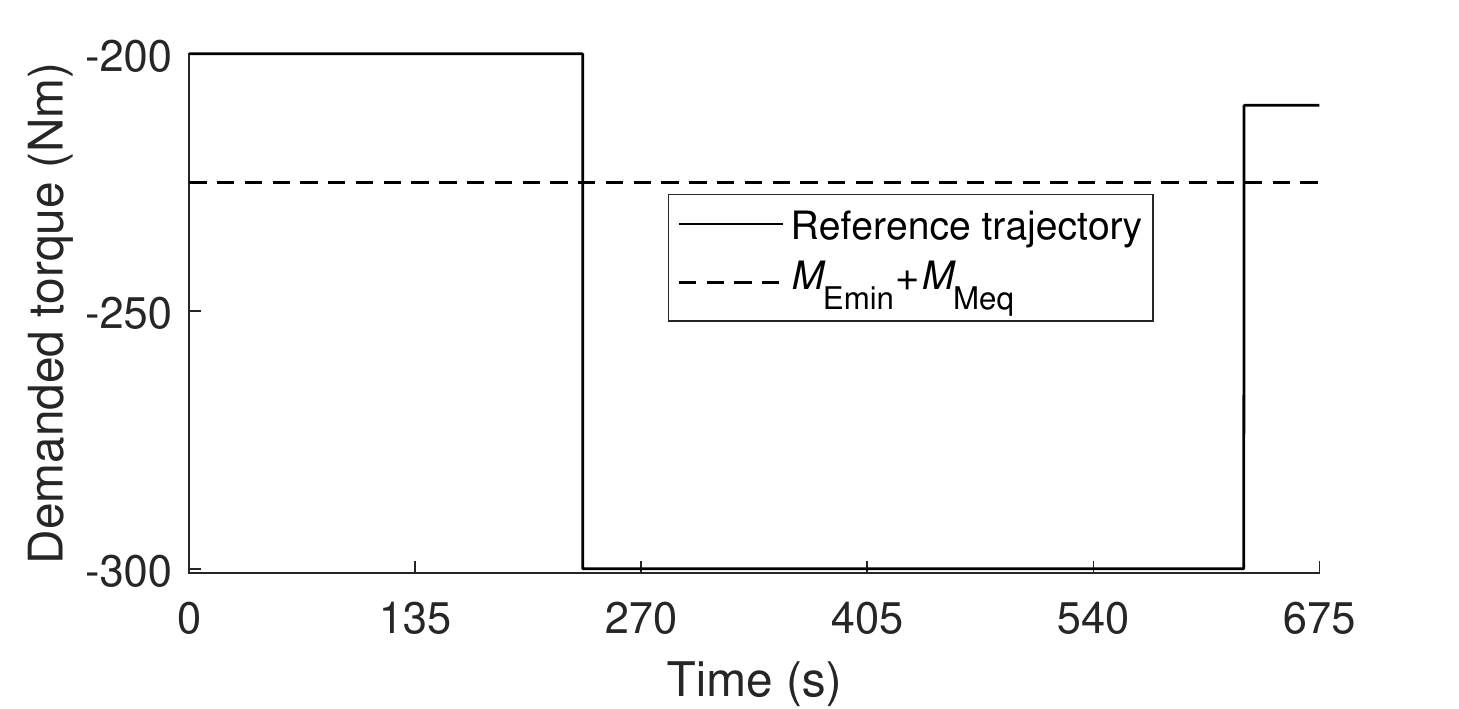}
    % \caption{The reference trajectory of demanded torque in the third example.}
    \label{fig:demandedtorque3}
    }
% \end{figure}
%  
% \begin{figure}[tbp]
    % \centering
    \subfigure[Battery SOC obtained by ECMS and LQT.]{
    \includegraphics[width=0.92\linewidth]{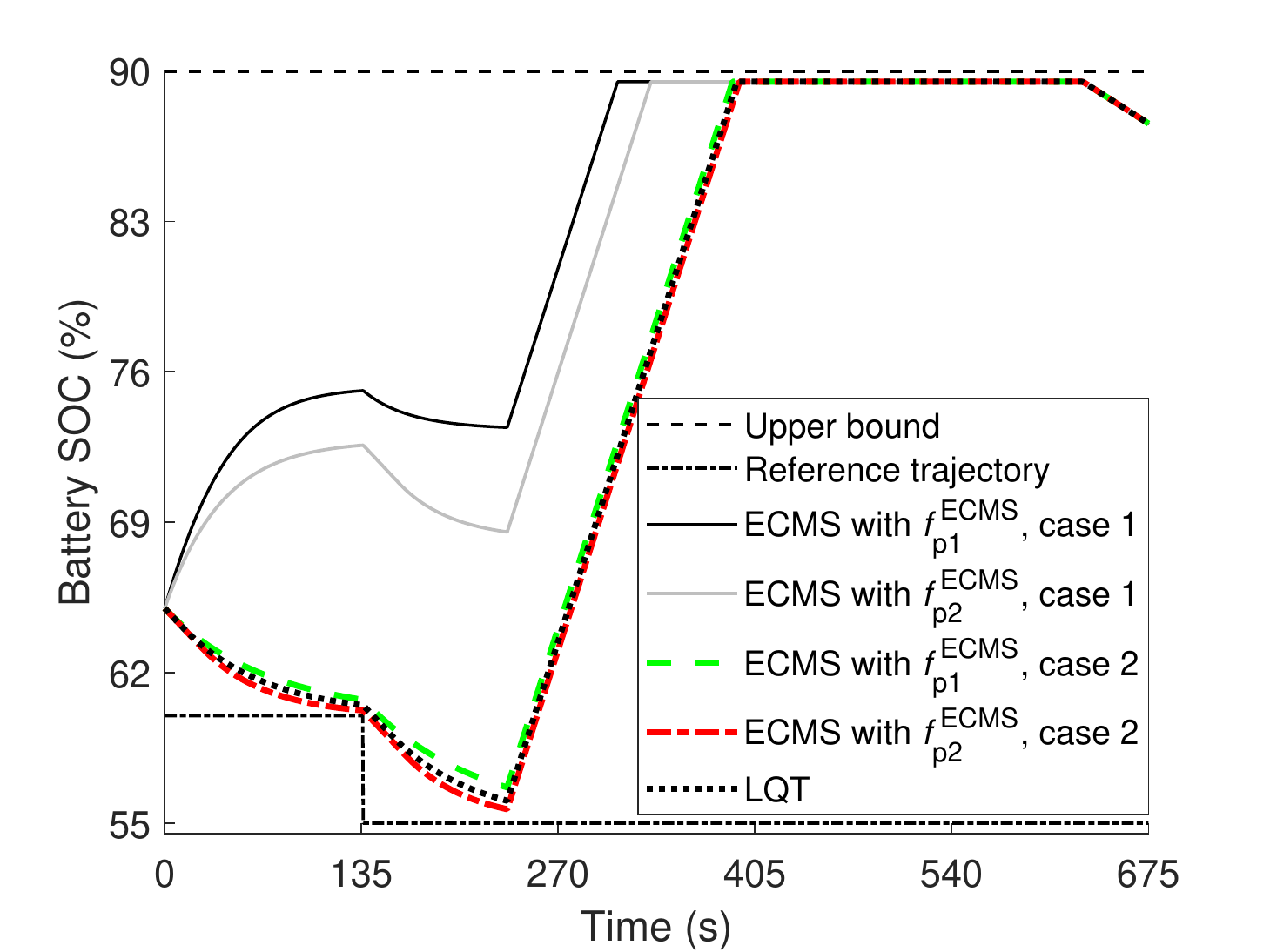}
    % \caption{Battery SOC obtained by ECMS and LQT in the third example.}
    \label{fig:soc3}
    }
    \caption{Battery SOC in the third example where demanded torque is negative, piece-wise constant function.}
    \label{fig:example3}
\end{figure}
 %%%%%%%%%%%%%%%%
    \subsection{Time-varying negative torque reference and constant speed}
    In the third example, the parameters are the same as those in the second example.
    This example studies a negative demanded torque, with a reference trajectory as depicted in \fig{fig:demandedtorque3}. As shown in the figure, the demanded torque has first a low negative magnitude and then, between \SI{235}{s} and \SI{630}{s}, its negative magnitude increases below $M_\tx{Emin}+M_\tx{Meq}$. 
    The reference SOC should be increased in this interval but it is intentionally and wrongly chosen to be constant (e.g., for the case when supervisory control or input data is not available), as shown in \fig{fig:soc3}.
    %The reference SOC trajectory is intentionally (and wrongly) chosen to be below the initial SOC value, as shown in \fig{fig:soc3}. Its tracking requires discharging the battery, despite the negative demanded torque. 
    ECMS control is investigated in two cases similar to the second example. 
    The battery SOC obtained by using the ECMS in two cases of the equivalent factor and the LQT control are depicted in \fig{fig:soc3} and simulation results are given in \tab{tab:comparison2}. 
    \begin{table}[t]
    \centering
    \caption{Comparison of the third example simulation results.}
    {\footnotesize
 \begin{tabular}{p{0.5cm}|p{2.2cm} p{1.3cm} p{2cm}}
    \hline
    Case & Controller&
    Delivered torque (\%)
    & Average fuel consumption (g/s)\\
    \hline
    1 & ECMS with $f^\tx{ECMS}_\tx{p1}$ & 100 & 0.1301\\
    %\hline
    & ECMS with $f^\tx{ECMS}_\tx{p2}$ & 100 & 0.1050\\
    2 & ECMS with $f^\tx{ECMS}_\tx{p1}$ & 100 & 0.0438\\
    %\hline
    & ECMS with $f^\tx{ECMS}_\tx{p2}$ & 100 & 0.0387\\
    %\hline
    & LQT   & 100 & 0.0407\\
    \hline
\end{tabular}}%
    \label{tab:comparison3}
\end{table}
    As shown in \fig{fig:soc3}, by using LQT, battery is discharged before reference SOC is changed, because $x>\check{x}$, $\check{M}_\tx{dem}>M_\tx{Emin}+M_\tx{Meq}$ and according to Theorem~\ref{theorem:uo}, optimal control is positive. In other words, the demanded torque magnitude is not negative enough to charge the battery and it is, instead, absorbed by the auxiliaries and by the ICE, to overcome its friction losses, and thus decreasing fuel consumption. Recall that we investigate scenarios where the engine is kept on, so in this case fuel can be saved by having the vehicle running the engine, instead of the other way around. At the time interval when $\check{M}_\tx{dem} \leq M_\tx{Emin}+M_\tx{Meq}$, the EM torque limit satisfies $\hat M_\tx{Mmax}\leq M_\tx{Meq}$, which follows from \eqref{eq:EM_torque_max}. This is because according to the proof of Lemma \ref{lemma:convexity}, $u$ is monotonically increasing with $M_\tx{M}$, and by considering \eqref{eq:u_max}, $u_\tx{max}$ and then $u^*$ are nonpositive and the battery is charged. This is an example where although the unconstrained optimum from LQT tries to track the reference SOC and discharge the battery, the control bounds force a constrained solution that instead charges the battery. In comparison, by using ECMS control in case 1, the battery is charged before reference SOC is changed. This is because $s_\tx{B}> s_\tx{B0}$, $\check{M}_\tx{dem}>M_\tx{Emin}+M_\tx{Meq}$ and according to Theorem~\ref{theorem:ecms_optimalcontrol}, the optimal control is negative. %Then, the average fuel consumption is greater than that of LQT. 
    As a consequence, the average fuel consumption is greater than that of LQT.
    %As a consequence, the ECMS controller uses less engine braking than the LQT, resulting in higher average fuel consumption.
    % In these time instants, by using ECMS with $f^\tx{ECMS}_\tx{p1}$,
    % the battery is charged quickly and therefore, the average fuel consumption is greater than that of the other controllers.
    In order to decrease the average fuel consumption, we decrease the reference equivalent factor to \SI{48.0375}{mg/kJ} in case 2. As shown in \fig{fig:soc3}, in this case,  the battery is discharged before SOC is changed and therefore, the average fuel consumption is decreased.
    
    Results of the second and the third examples can be summarized by the following remark.
\begin{remark}
    Because delivering the demanded torque in an HEV is the first objective of this paper, when battery SOC is near its lower bound and ${\check{M}_\tx{dem} > M_\tx{Emax}+M_\tx{Meq}>0}$, the battery is discharged until ${x=x_\tx{min}+\epsilon}$ and if battery SOC is near the upper bound and $ {\check{M}_\tx{dem} < M_\tx{Emin}+M_\tx{Meq}<0}$, to minimize usage of service brakes, the battery is charged until ${x=x_\tx{max}-\epsilon}$ regardless of the reference SOC value.
\end{remark}
%%%%%%%%%%%%%%%%%%%%%%%%%%%%%%%%%%%%%%%%%%%%%%%%%%%%%%%%%%%%%%%%%%%%%%%%%%%%%%% 
\section{Conclusions} \label{sec:conclusion}
    In this paper, to deliver the torque as close as possible to demanded torque, a function is proposed for calculating the maximum deliverable torque. Then, the number of control inputs in the optimization problem is reduced to one and therefore the computational complexity  for optimal control calculation is reduced, while usage of service brakes in the case of negative demanded torque is minimized. After that,
    adaptive proportional ECMS control and an LQT are designed at every time instance to optimize power-split decisions in a parallel HEV by minimizing fuel consumption subject to constraints on actuators and battery SOC. The SOC 
    limitations are modeled by using tangent or logarithm interior point functions.
    Then, the convexity of the resulting optimization problem is proved and it is analytically solved by using the second order approximation of the objective function.
    
    The simulation results obtained using the ECMS control with two different penalty functions and the LQT controller show that their performance is a trade-off between delivering demanded torque, minimizing fuel consumption and tracking reference SOC.
    Moreover, they indicate that the reference equivalent factor has a very important role in the performance of the ECMS controller. If it cannot be correctly provided by a supervisory controller, setting it to a proper value is not straightforward. On the other hand, the LQT controller requires only setting a reference battery SOC, which can be either provided by a supervisory controller, or set manually. 
    %A possible disadvantage is that the LQT controller has less degree of freedom compared to the ECMS controller. 
    
    In this paper a simple measurement noise model is considered when designing the controllers. Future research may focus on considering more detailed uncertainty and noise models, as well as more detailed HEV powertrain models with additional dynamics, e.g. thermal states of the battery, EM, or the ICE.% temperature or ICE speed and then presenting some methods for reducing the controller computational complexity.
    
%%%%%%%%%%%%%%%%%%%%%%%%%%%%%%%%%%%%%%%%%%%%%%%%%%%%%%%%%%%%%%%%%%%%%%%%%%%%%%%
\begin{appendices}
\section{Proof of lemmas}\label{sec:lemmas}

\begin{lemma} \label{lemma:em_torque}
    Equation \eqref{eq:electricpower} has a single solution 
\begin {align}
    M_\tx{M}&= \frac{-d_{1} +\sqrt{ d_{1}^2-4d_{2} (d_{0}-P_\tx{Mel}) } }{2d_{2} }. \label{eq:EM_torque}
\end {align}
    Thereby, the EM torque is implicitly constrained to
\begin{align} \label{eq:pmelmin}
    M_\tx{M} \geq - \frac{d_1}{2d_2}
\end{align}
    where minimum electrical power is achieved.
\end{lemma}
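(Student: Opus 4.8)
The plan is to view \eqref{eq:electricpower} as a quadratic equation in the unknown $M_\tx{M}$ for a prescribed electrical power $P_\tx{Mel}$, invert it, and then select the physically admissible root. First I would rewrite \eqref{eq:electricpower} in the standard form $d_2 M_\tx{M}^2 + d_1 M_\tx{M} + (d_0 - P_\tx{Mel}) = 0$ and apply the quadratic formula, obtaining the two candidate roots $M_\tx{M} = (-d_1 \pm \sqrt{d_1^2 - 4 d_2 (d_0 - P_\tx{Mel})})/(2 d_2)$. Since $d_2 > 0$ by assumption, these roots are real whenever the discriminant is nonnegative.

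Next I would identify the vertex of the parabola. Because $d_2 > 0$, the map $M_\tx{M} \mapsto P_\tx{Mel}$ is strictly convex; its derivative $\partial P_\tx{Mel}/\partial M_\tx{M} = d_1 + 2 d_2 M_\tx{M}$ vanishes precisely at $M_\tx{M} = -d_1/(2 d_2)$, which is therefore the unique minimizer and the point where minimum electrical power is achieved. The two candidate roots are symmetric about this vertex: the $+$ root lies at or above $-d_1/(2 d_2)$, while the $-$ root lies at or below it, since the square-root term is nonnegative.

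The key remaining step is to discard the lower root. I would argue that the EM is operated as an incrementally passive device, so its electrical power must increase with delivered torque; the admissible operating region is thus the increasing branch $M_\tx{M} \geq -d_1/(2 d_2)$. On this branch $P_\tx{Mel}$ is strictly monotone in $M_\tx{M}$ and hence injective, so each admissible $P_\tx{Mel}$ has a single preimage, namely the $+$ root. This yields \eqref{eq:EM_torque} as the unique solution and simultaneously establishes the implicit constraint \eqref{eq:pmelmin}.

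I would expect the main obstacle to be the justification for discarding the lower root, since algebraically both values satisfy \eqref{eq:electricpower}. Making this precise requires appealing to the physical passivity of the EM, namely that electrical power is monotonically increasing in torque over the operating region, rather than to the equation alone; the vertex $-d_1/(2 d_2)$ marks exactly where this monotonicity, and hence the invertibility that the convexity arguments later rely on, would fail.
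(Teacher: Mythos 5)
Your proposal is correct and follows essentially the same route as the paper's proof: write \eqref{eq:electricpower} as a quadratic in $M_\tx{M}$, locate the vertex at $-d_1/(2d_2)$, and discard the lower root by invoking the physical requirement that EM electrical power be monotonically increasing in torque (the paper phrases this as the sign of $\partial M_\tx{M}/\partial P_\tx{Mel}$ on that root being negative, which is your decreasing-branch argument in derivative form). You also correctly pinpoint, as the paper does, that this branch selection rests on the passivity assumption rather than on the algebra alone.
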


\begin{proof}
    Because $d_2 \geq 0$, $P_\tx{Mel}$ is a convex function of $M_\tx{M}$ with minimum obtained at $\partial P_\tx{Mel}/\partial M_\tx{M}=0$. Hence, minimum electrical power is $d_0 - d_1^2/(4d_2)$ obtained at torque $-d_1/(2d_2)$.
    One of the roots of \eqref{eq:electricpower} is
\begin {align} \label{eq:infeas_root_em_torque}
    M_\tx{M}=\frac{-d_{1} -\sqrt{ d_{1}^2-4d_{2} (d_{0}-P_\tx{Mel}) } }{2d_{2} }
\end {align}
    for which it holds
\begin {align}
    &\frac{\partial M_\tx{M}}{\partial P_\tx{Mel}}=-\frac{1}{\sqrt{d_{1}^2-4d_{2} (d_{0}-P_\tx{Mel})}}<0.
\end{align}
    This is a clear contradiction with a physical EM component where electrical power is monotonically increasing with torque. Hence, the root \eqref{eq:infeas_root_em_torque} cannot be a solution to \eqref{eq:electricpower}. Furthermore, it is clear that \eqref{eq:infeas_root_em_torque} can only provide negative real values for $M_\tx{M}$, thus not allowing motoring mode.
    EM electrical power is increasing with torque where $M_\tx{M} \geq -d_1/(2d_2)$. Equation \eqref{eq:EM_torque} implies this constraint on torque too. 
\end{proof}

\begin{lemma} \label{lemma:batterypower}
    Equation \eqref{eq:elbalance} has a single solution,
\begin{align} \label{eq:batterypower}
     {P}_\tx{B}=&\frac{U_\tx{oc}^2 - U_\tx{oc}\sqrt{U_\tx{oc}^2 - 4R_\tx{B} P_\tx{Bel}} }{2R_\tx{B}}.
\end{align}
    Thereby, chemical battery power is implicitly constrained, 
\begin{align}
    P_\tx{B} \leq \frac{U_\tx{oc}^2}{2 R_\tx{B}}. 
\end{align}
\end{lemma}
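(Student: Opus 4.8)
The plan is to treat \eqref{eq:elbalance} as a quadratic equation in $P_\tx{B}$ and to follow exactly the reasoning of Lemma~\ref{lemma:em_torque}, where one of two algebraic roots is discarded on physical grounds. First I would rearrange \eqref{eq:elbalance} into the standard form $\frac{R_\tx{B}}{U_\tx{oc}^2}P_\tx{B}^2 - P_\tx{B} + P_\tx{Bel} = 0$ and apply the quadratic formula, which yields the two candidate roots $P_\tx{B} = \bigl(U_\tx{oc}^2 \pm U_\tx{oc}\sqrt{U_\tx{oc}^2 - 4R_\tx{B}P_\tx{Bel}}\bigr)/(2R_\tx{B})$, differing only by the sign in front of the square root.

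Before selecting the physical root, I would record that the coefficient of $P_\tx{B}^2$ in \eqref{eq:elbalance} is $-R_\tx{B}/U_\tx{oc}^2 < 0$, so $P_\tx{Bel}$ is a concave function of $P_\tx{B}$ whose maximum is attained where $\partial P_\tx{Bel}/\partial P_\tx{B} = 1 - 2R_\tx{B}P_\tx{B}/U_\tx{oc}^2 = 0$, that is, at $P_\tx{B} = U_\tx{oc}^2/(2R_\tx{B})$ with maximal value $U_\tx{oc}^2/(4R_\tx{B})$. This both guarantees that the radicand $U_\tx{oc}^2 - 4R_\tx{B}P_\tx{Bel}$ is nonnegative over the physically admissible range (so the two roots are real) and locates the vertex that separates the two branches of the parabola.

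The key step, precisely as in Lemma~\ref{lemma:em_torque}, is to reject the spurious root using the physical requirement that the chemical battery power increase monotonically with the electrical power it supplies. Differentiating the ``$+$'' root gives $\partial P_\tx{B}/\partial P_\tx{Bel} = -U_\tx{oc}/\sqrt{U_\tx{oc}^2 - 4R_\tx{B}P_\tx{Bel}} < 0$, which contradicts this monotonicity and is therefore discarded, while the ``$-$'' root satisfies $\partial P_\tx{B}/\partial P_\tx{Bel} = U_\tx{oc}/\sqrt{U_\tx{oc}^2 - 4R_\tx{B}P_\tx{Bel}} > 0$. This confirms that \eqref{eq:batterypower} is the unique physical solution and also supplies the strict monotonicity of $P_\tx{Bel}$ in $u=P_\tx{B}$ that is invoked in the proof of Lemma~\ref{lemma:convexity}.

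Finally, the implicit bound follows immediately from the form of the retained root: since the subtracted term $U_\tx{oc}\sqrt{U_\tx{oc}^2 - 4R_\tx{B}P_\tx{Bel}}$ is nonnegative, \eqref{eq:batterypower} yields $P_\tx{B} \leq U_\tx{oc}^2/(2R_\tx{B})$, with equality exactly at the peak $P_\tx{Bel} = U_\tx{oc}^2/(4R_\tx{B})$ identified above. The only delicate point is this root-selection argument, which encodes the operating assumption that the battery is used on the increasing branch of the concave power map; the remaining manipulations are routine algebra.
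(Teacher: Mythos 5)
Your proof is correct and follows essentially the same route as the paper's: both observe the concavity of $P_\tx{Bel}$ in $P_\tx{B}$ with vertex at $U_\tx{oc}^2/(2R_\tx{B})$, discard the ``$+$'' root because its derivative $\partial P_\tx{B}/\partial P_\tx{Bel}<0$ contradicts the physical monotonicity of a battery, and read off the bound from the retained root. Your explicit derivation of the bound from the nonnegativity of the square-root term is, if anything, slightly cleaner than the paper's phrasing; the only detail the paper adds that you omit is the secondary remark that the rejected root would only permit positive $P_\tx{B}$ and hence forbid charging, which is not essential to the argument.
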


\begin{proof}
    It can be seen from \eqref{eq:elbalance} that $P_\tx{Bel}$ is a concave function of $P_\tx{B}$. Its maximum is obtained at ${\partial P_\tx{Bel}/\partial P_\tx{B}=0}$ and battery power $U_\tx{oc}^2/(2 R_\tx{B})$.\\ 
%implying
%\begin{equation}
    %P_\tx{B} \leq \frac{U_\tx{oc}^2}{2 R_\tx{B}}.
%\end{equation} 
    One of the roots of \eqref{eq:elbalance} is 
\begin{align} \label{eq:infeas_sol_pwr_balance}
    P_\tx{B} =&\frac{U_\tx{oc}^2 + U_\tx{oc}\sqrt{U_\tx{oc}^2 - 4R_\tx{B} P_\tx{Bel}} }{2R_\tx{B}}
\end{align}
    for which it holds
\begin {align} \label{eq:reason2}
    &\frac{\partial P_\tx{B}}{\partial P_\tx{Bel}}=-\frac{U_\tx{oc}}{\sqrt{U_\tx{oc}^2 - 4R_\tx{B} P_\tx{Bel}}} < 0.
\end{align}
    This is a clear contradiction with a physical battery component where chemical battery power is monotonically increasing with $P_\tx{Bel}$. Hence, the root \eqref{eq:infeas_sol_pwr_balance} cannot be a solution to \eqref{eq:elbalance}. Furthermore, it is clear that \eqref{eq:infeas_sol_pwr_balance} can only provide positive real values for $P_\tx{B}$, thus not allowing battery charging.\\
    Chemical battery power is increasing with $P_\tx{Bel}$ where $P_\tx{B} \leq U_\tx{oc}^2/(2 R_\tx{B})$. 
    This constraint on $P_\tx{B}$ is implied by \eqref{eq:batterypower} too. 
\end{proof}

\begin{lemma}
    The optimal control of problem \eqref{eq:op_LQT_1} is 
\begin {align}
    u^o(k)=\frac{B \bar P}{R+B^2 \bar P}\left(\check{x}-x(k)\right)
    \label{eq:uss_LQT}
\end {align}
    where %$B=-T_\tx{s}/E_\tx{Bmax}$ and
    $\bar P$ is calculated by solving the Riccati equation,
\begin {align}    
    \bar P^2-Q \bar P-\frac{QR}{B^2} =0.
\label{eq:p}
\end {align}
    %Control input that minimizes objective function \eqref{eq:obj} is 
    %$L_1$, $L_2$ and $L_3$ are defined in \eqref{eq:coefficient1}-\eqref{eq:coefficient3}, respectively.
\label{lemma:LQTsolving}
\end{lemma}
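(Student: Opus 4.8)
The plan is to recognize problem \eqref{eq:op_LQT_1} as a standard scalar, infinite-horizon, discrete-time linear--quadratic problem and to establish \eqref{eq:uss_LQT}--\eqref{eq:p} by dynamic programming; the dynamics \eqref{eq:LQT_state} are already affine and the stage cost is quadratic, so the only real work is to carry out the recursion and select the correct root. First I would pass to the tracking-error coordinate $e(k)=x(k)-\check{x}$. Because $A=1$, the constant reference cancels in the dynamics, giving $e(k+1)=Ae(k)+Bu(k)$, while the stage cost becomes $\tfrac12\bigl(Qe(k)^2+R(u(k)-\check{u})^2\bigr)$.

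Next I would posit a cost-to-go that is quadratic in the error, $V(e)=\tfrac12\bar P e^2$ plus affine terms, and impose the Bellman equation $V(e)=\min_u\bigl[\tfrac12 Qe^2+\tfrac12 R(u-\check{u})^2+V(Ae+Bu)\bigr]$. The inner minimization has a first-order condition that is linear in $u$; solving it produces an affine feedback law whose state-dependent part is exactly $u^o=\tfrac{B\bar P}{R+B^2\bar P}(\check{x}-x)$, i.e. \eqref{eq:uss_LQT}. Substituting the minimizer back and matching the coefficient of $e^2$ yields the discrete algebraic Riccati equation; with $A=1$ this collapses to $\bar P=\bar P+Q-\tfrac{B^2\bar P^2}{R+B^2\bar P}$, which rearranges directly into \eqref{eq:p}.

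To finish I would argue existence and pick the stabilizing solution. Here $Q>0$ holds by construction and $R=2\tilde a_2>0$ follows from Lemma~\ref{lemma:convexity}, since strict convexity of $\mu_\tx{fuel}$ forces $\tilde a_2>0$; hence \eqref{eq:p} has exactly one positive root, which is the value assigned to $\bar P$. I would then verify that the closed-loop map $A-B\tfrac{B\bar P}{R+B^2\bar P}=\tfrac{R}{R+B^2\bar P}$ lies in $(0,1)$, which is immediate because $R,\bar P,B^2>0$, so the error decays geometrically and the infinite-horizon cost is finite and $V$ well defined; equivalently, one may run the finite-horizon Riccati recursion associated with \eqref{eq:op_LQT_1} and show that $P(k)$ converges monotonically to $\bar P$ as $k_f\to\infty$. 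I expect the main subtlety, rather than any calculation, to be the treatment of the control reference $\check{u}$: since $A=1$ admits no constant equilibrium at $(\check{x},\check{u})$ when $\check{u}\neq 0$, this pair is not a consistent set point, and the step requiring care is showing that $\check{u}$ enters only the affine (feedforward) part of the optimal law and therefore leaves the feedback gain in \eqref{eq:uss_LQT} unchanged, in agreement with the remark that follows the lemma.
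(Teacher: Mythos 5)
Your proposal takes a genuinely different route from the paper: the paper proves the lemma with the discrete minimum principle (Hamiltonian, costate ansatz $\lambda(k)=P(k)x(k)-G(k)$, the difference Riccati equation plus a recursion for $G$, then passage to steady state as $k_f\to\infty$), whereas you work with the Bellman equation in the error coordinate. That route can work, but as written it has a genuine gap, and the gap sits exactly where the paper is forced to modify the objective. When $\check u\neq 0$ the infinite-horizon cost of \eqref{eq:op_LQT_1} is \emph{not} finite: under your feedback law $e(k)\to 0$ geometrically and $u(k)\to 0$, so the stage cost tends to $\tfrac12 R\check u^2>0$ and the sum diverges. Your claim that ``the infinite-horizon cost is finite and $V$ well defined'' is therefore false, and the stationary Bellman equation $V(e)=\min_u\bigl[\tfrac12 Qe^2+\tfrac12 R(u-\check u)^2+V(e+Bu)\bigr]$ that you impose has \emph{no} solution: positing $V(e)=\tfrac12\bar Pe^2+ge+c$, the $e^2$-coefficients give \eqref{eq:p} and the $e$-coefficients give $g=R\check u/B$, but the constant terms reduce to the condition $\tfrac12 R\check u^2=0$, which fails unless $\check u=0$. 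This is precisely why the paper switches to the average cost per stage \cite{willemsa04} before letting $k_f\to\infty$; in your framework you must use the average-cost Bellman equation $V(e)+\rho=\min_u[\,\cdot\,]$ with $\rho=\tfrac12 R\check u^2$, after which the computation goes through.

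Second, the step you defer to the end (``showing that $\check u$ enters only the affine (feedforward) part'') is miscast and is in fact the heart of the lemma: formula \eqref{eq:uss_LQT} contains no feedforward term at all, so you must show that the $\check u$-dependent affine part of the minimizer vanishes identically, not merely that it ``leaves the feedback gain unchanged.'' It does vanish: with $g=R\check u/B$ the first-order condition gives $u^o=\frac{R\check u-\bar PBe-gB}{R+B^2\bar P}=\frac{-\bar PBe}{R+B^2\bar P}=\frac{B\bar P}{R+B^2\bar P}\,(\check x-x)$, which is \eqref{eq:uss_LQT}; this cancellation is the dynamic-programming counterpart of the paper's identity $\bar G=\bar P\check x-(R/B)\check u$ combining with the gains $L_2,L_3$ so that the $\check u$ terms drop out. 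Until you carry out this cancellation (and repair the average-cost issue above), the lemma is not established. A smaller point: you attribute $R=2\tilde a_2>0$ to strict convexity via Lemma~\ref{lemma:convexity}, but that lemma only asserts convexity, i.e. $\tilde a_2\ge 0$; strict positivity of $R$ is an additional implicit assumption of the LQ formulation rather than a consequence of the lemma.
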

\begin{proof}
The objective function of problem \eqref{eq:op_LQT_1} can be generally written as 
\begin{align}
\begin{split}
  \frac{1}{2}&\sum_{k=0}^{k_f-1} \ \left((x(k)-\check{x}(k))^T Q(k)(x(k)-\check{x}(k)\right.)\\
    &\left.+((u(k)-\check{u}(k))^T R(k)(u(k)-\check{u}(k)) \right) 
\end{split}
\end{align}
    For this optimization problem, the Hamiltonian is  
\begin {align}
\begin {split}
    H&=\frac{1}{2} \bigl(x(k)-\check{x}(k)\bigr)^T Q(k)\bigl(x(k)-\check{x}(k)\bigr) \\
    &+\frac{1}{2}\bigl((u(k)-\check{u}(k)\bigr)^T 
    R(k)\bigl(u(k)-\check{u}(k)\bigr) \\
    &+\lambda^T(k+1)\bigl(Ax(k)+Bu(k)\bigr).
\end {split}    
\end {align}
    Optimal control is achieved when 	 
\begin {align}
\begin {split}
    \frac{\partial H}{\partial u(k)}=0 \; \Rightarrow \; u(k)&=-R^{-1}(k)B^T\lambda(k+1)+\check{u}(k).\label{eq:control}
\end {split}
\end {align}
    The state equation is obtained as 
\begin {align}
\begin {split}
    \frac{\partial H}{\partial \lambda(k+1)}=x&(k+1) \quad \Rightarrow \quad 
    x(k+1)=Ax(k)\\
    &-BR^{-1}(k)B^T\lambda(k+1)
    +B\check{u}(k)\label{eq:state}
\end {split}    
\end {align} 
    and the costate equation is calculated as 
\begin {align}
\begin {split}
    \frac{\partial H}{\partial x(k)}=\lambda(k) \quad \Rightarrow \quad &\lambda(k)=A^T(k)\lambda(k+1)\\
    &+Q(k)x(k)- Q(k)\check{x}(k).\label{eq:costate}
\end {split}    
\end {align} 
    The costate is defined as
\begin {equation}
\lambda(k)=P(k)x(k)-G(k).\label{eq:lambda}
\end {equation}
    For the sake of simplicity, we define ${N=BR^{-1}(k)B^{T}}$.
%\begin {equation}
   % N=BR^{-1}(k)B^{T}.
    %\label{eq:N}
%\end {equation}
    Substituting \eqref{eq:lambda} in \eqref{eq:state} results in 
\begin {align}
\begin {split}
    x(k+1)&=\left(\tx{I}+NP(k+1)\right)^{-1}\left(Ax(k)
    \right.\\
    &\left.+NG(k+1)+B\check{u}(k)\right)\label{eq:x1}
\end {split}    
\end {align}
    and by using \eqref{eq:lambda} and \eqref{eq:x1}, costate equation \eqref{eq:costate} can be written as
\begin {align}
%\begin{split}
    &\left(-P(k)+A^TP(k+1)(\tx{I}+NP(k+1))^{-1}A+Q(k) \right)\nonumber\\
    &x(k)+A^T\left(P(k+1)(\tx{I}+NP(k+1))^{-1}N-\tx{I}\right)G(k+1)\nonumber \\
    &+G(k)+A^TP(k+1)(\tx{I}+NP(k+1))^{-1}B\check{u}(k)\nonumber\\
    &-Q(k)\check{x}(k)=0.
    \label{eq:ri}
%\end{split}
\end {align}  
    Because \eqref{eq:ri} must hold for all optimum values of $x(k)$, then
\begin {align}
\begin {split}
    &A^TP(k+1)\left(\tx{I}+NP(k+1)\right)^{-1}A\\
    &-P(k)+Q(k) =0 
\end {split} \label{eq:riccati}\\   
%\begin{split}
    &A^T\left(P(k+1)(\tx{I}+NP(k+1))^{-1}N-\tx{I}\right)G(k+1)\nonumber\\
    &+G(k)+A^TP(k+1)(\tx{I}+NP(k+1))^{-1}B\check{u}(k)\nonumber\\
    &-Q(k)\check{x}(k)=0. \label{eq:g}
%\end{split}    
\end {align}
    By solving the matrix difference Riccati equation \eqref{eq:riccati}, $P(k)$ is obtained and $G(k+1)$ is computed from \eqref{eq:g}.
    Then, by substituting \eqref{eq:lambda} in   \eqref{eq:control} the optimal control is calculated as 
\begin {equation}
    u^o(k)=-L_1(k)x(k)+L_2(k)G(k+1)+L_3(k)\check{u}(k)
    \label{eq:input_LQT}
\end {equation}
    where
\begin {align*}
    L_1(k)=&(R(k)+B^TP(k+1)B)^{-1}
    B^TP(k+1)A %\label{eq:coefficient_1}
    \\
    L_2(k)=&(R(k)+B^TP(k+1)B)^{-1}B^T%\label{eq:coefficient_2}
    \\
    L_3(k)=&(R(k)+B^TP(k+1)B)^{-1}R(k).%\label{eq:coefficient_3}
\end {align*}
    As $k_f \rightarrow \infty$, the objective function in problem \eqref{eq:op_LQT_1} is not finite, an average cost is considered \cite{willemsa04},
\begin{align}
\begin{split}
      \lim_{k_f\to\infty} \frac{1}{2k_f}\sum_{k=0}^{k_f-1} \ &\bigl( Q(x(k)-\check{x})^2+R(u(k)-\check{u})^2\bigr).
    \label{eq:obj_LQT_ss}
\end{split}
\end{align} 
    If the state space equation in \eqref{eq:LQT_state} is in steady state, then equilibrium control, ${u_\tx{eq}=0}$ and objective \eqref{eq:obj_LQT_ss} has its minimum value when equilibrium optimum state, ${x_\tx{eq}=\check{x}}$. 
%\begin{equation}
    %u_\tx{eq}=0  
%\end{equation}
    %where $u_\tx{eq}$ is the equilibrium control.
%   
    %The objective \eqref{eq:obj_LQT_ss} has its minimum value when 
%\begin{equation}
    %x_\tx{eq}=\check{x}
%\end{equation}
    %where $x_\tx{eq}$ is equilibrium optimum state.
    Furthermore, because $k_f \rightarrow \infty$, $P$ in \eqref{eq:riccati} tends to its steady state value $\bar P$ \cite{naidu02}. By replacing $P(k+1)$ and $P(k)$ with $\bar P$, the Riccati equation is obtained as \eqref{eq:p}. 
     Using \eqref{eq:lambda} and substituting $x_\tx{eq}$ in \eqref{eq:costate} gives $G(k)=G(k+1)$. By replacing $G(k+1)$ and $G(k)$ with its steady state value $\bar G$ in \eqref{eq:g} and using \eqref{eq:p}, %and \eqref{eq:N}, 
\begin {align}
    \bar G=P\check{x}-\frac{R}{B}\check{u}. 
\end {align}
    It can be obtained by using definition \eqref{eq:lambda} and replacing $u$ and $x$ with $\check{u}$ and $\check{x}$ into \eqref{eq:control} too.
    By substituting $A$, $\bar P$ and $\bar G$ in \eqref{eq:input_LQT}, $u^o$ is computed as in \eqref{eq:uss_LQT}.
\end{proof}
\end{appendices}
%%%%%%%%%%%%%%%%%%%%%%%%%%%%%%%%%%%%%%%%%%%%%%%%%%%%%%%%%%%%%%%%%%%%%%%%%%%%%%%
\bibliographystyle{IEEEtran}
\bibliography{IEEEabrv,paper1}

\end{document}